\newtheorem{theorem}{Theorem}[section]
\newtheorem{lemma}{Lemma}[section]
\newtheorem{proposition}{Proposition}[section]
\newtheorem{remark}{Remark}[section]
\newcommand{\ds}{\displaystyle}
\def\theequation{\thesection.\arabic{equation}}
\def \[{\begin{equation}}
\def \]{\end{equation}}
\newif \ifLastSection \LastSectionfalse
\renewcommand{\theequation}{\ifLastSection A\arabic{equation}\else
\thesection .\arabic{equation}\fi}
\newcommand{\R}{{\mathbb R}}
\def\theequation{\arabic{section}.\arabic{equation}}\makeatother
\begin{document}
\title{
{\bf\Large NODAL Vector  solutions with clustered peaks for a nonlinear elliptic equations in $\R^3$}\footnote{The authors sincerely thank Professor S.J. Peng for helpful discussions and suggestions. This work was partially supported by NSFC
(No.11301204; No.11371159; No.11101171), the phD specialized grant of the Ministry of Education of China(20110144110001).}\hspace{2mm}\\}
\author{{\bf\large Qihan  He}\vspace{1mm}\\{\it\small School of Mathematics and Statistics, }\\ {\it\small Central China Normal
University},
{\it\small  Wuhan, 430079, P. R. China}\\
{\it\small e-mail: heqihan277@163.com}\vspace{1mm}\\
{\bf\large Chunhua Wang}\footnote{Corresponding author}\hspace{2mm}\\
{\it\small School of Mathematics and Statistics, }\\ {\it\small Central China Normal
University},
{\it\small  Wuhan, 430079, P. R. China}\\
{\it\small e-mail: chunhuawang@mail.ccnu.edu.cn}}\vspace{1mm}
\maketitle
\begin{center}
{\bf\small Abstract}
\vspace{3mm}
\hspace{.05in}\parbox{4.5in}
{{\small In this paper, we study the following coupled nonlinear Schr\"{o}dinger system in $\R^3$
$$
\left\{%
\begin{array}{ll}
-\epsilon^2\Delta u +P(x)u=\mu_1 u^3+\beta v^2u,~~&x\in \R^3,\vspace{0.15cm}\\
-\epsilon^2\Delta v +Q(x)v=\mu_2 v^3+\beta u^2v,~~&x\in \R^3,\\
\end{array}%
\right.
$$
where  $\mu_1 >0,\mu_2>0$ and $\beta \in \R$ is a coupling constant. Whether the system is repulsive or
attractive, we prove that it has nodal semi-classical segregated or synchronized bound states with clustered spikes
for sufficiently small $\epsilon$ under some additional conditions on $P(x), Q(x)$ and $\beta$. Moreover, the number of
this type of solutions will go to infinity as $\epsilon \to 0^+$.}}
\end{center}
\noindent
{\it \footnotesize 2000 Mathematics Subject Classification}. {\scriptsize 35J10, 35B99, 35J60.}\\
{\it \footnotesize Key words}. {\scriptsize  Clustered peaks; NODAL Vector solutions; Nonlinear coupled.}
\section{\bf Introduction}\label{s1}
In this paper, we consider the following nonlinear Schr\"{o}dinger system in $\R^3$
\begin{equation}\label{1.1}\left\{%
\begin{array}{ll}
-\epsilon^2\Delta u +P(x)u=\mu_1 u^3+\beta v^2u,~~&x\in \R^3,\vspace{0.15cm}\\
-\epsilon^2\Delta v +Q(x)v=\mu_2 v^3+\beta u^2v,~~&x\in \R^3,\\
\end{array}%
\right.\end{equation}
where we assume that $P(x)$ and $Q(x)$ are   continuous bounded  radial
functions, $\mu_1 >0,\mu_2>0$ and $\beta \in \R$ is a coupling constant.

It motivates us to study problem \eqref{1.1} that we look for standing-wave solutions for the following time-dependent
coupled nonlinear Schr\"{o}dinger system:
\begin{equation}\label{P12}\left\{%
\begin{array}{ll}
i\epsilon\frac{\partial \psi}{\partial t}= -\frac{\epsilon^2}{2m}\Delta_x \psi+P(x) \psi-\mu_1 |\psi|^2\psi-\beta |\phi|^2\psi,~~&x\in \R^3,t>0,\vspace{0.15cm}\\
i\epsilon\frac{\partial \phi}{\partial t}= -\frac{\epsilon^2}{2m}\Delta_x \phi +Q(x)\phi-\mu_2 |\phi|^2\phi-\beta |\psi|^2\phi,~~&x\in \R^3,t>0,\vspace{0.15cm}\\
\psi=\psi(x,t) \in \mathcal{C}, \phi=\phi(x,t) \in \mathcal{C},
\end{array}%
\right.\end{equation}
which models a binary mixture of Bose-Einstein condensates in two different hiperfine states (see \cite{EGBB,MBGCW,HMEWC,T}), and where $\epsilon$ is the plank constant, $m$ is the atom mass, $P(x)$ and $Q(x)$ are the trapping potentials for two hyperfine states, respectively;
the constants $\mu_1$ and $\mu_2$ represent the intraspecies scattering lengths and $\beta$ is the interspecies scattering length. The sign of the interspecies
scattering length determines whether the interaction of states are repulsive or attractive. If $\beta>0$, the interaction is attractive, and the components of
a vector solutions lead to synchronize. On the other hand, if $\beta<0$, the interaction is repulsive, leading to phase separations. These phenomena have been confirmed in experiments and in numeric simulations (see \cite{MBGCW,CLLL,HMEWC,GHZ} and references therein).
Problem \eqref{P12}, also known as Gross-Pitaevskii equations, arises in many applications. For example, in some problems arising in nonlinear optics, in plasma physics and in the condensed matter physics. Physically, $\psi$ and $\phi$ are the corresponding condensated wave functions (see \cite{AA}).

This system \eqref{1.1} has been extensively investigated under various assumptions on $P(x),Q(x)$ and $\beta$ in recent years (see \cite{A}
\cite{AC}-\cite{BDW},\cite{BW}-\cite{EGBB},\cite{CNY}-\cite{CTV},\cite{DW}-\cite{P},\cite{S,TV,W,W1} and therein ).
Here we want to mention some significant works.
In \cite{LW2}, no matter the interspecies scattering length $\beta$ is positive or negative, Lin and Wei have obtained
least energy solutions for the two coupled nonlinear Schr\"{o}dinger system with the trap potentials by using Nehari's manifold  and derived their asymptotic behaviors by some techniques of
singular perturbation problem. At the same time,  Chen, Lin and Wei \cite{CLW} have proved the existence of the positive solutions with any prescribed spikes by
the reduction methods. In \cite{A}, Alves has been concerned with the existence and the concentration of positive solutions by the mountain pass theorem.
  Wan \cite{W} used the category theory to study the multiplicity of positive solutions and their limiting behavior as $\epsilon \to 0^+$. Also in \cite{W1},  Wan and \'{A}vila utilized the category theory  studying the relation between the number of positive standing waves solutions for
the special system \eqref{1.1} with $P(x)=Q(x)$ and $\beta=0$ in $\R^N$ and the topology of the set of minimum points of potentials.  Pomponio
in \cite{P} also has proved the existence of concentrating solutions for a general system with repulsive interaction of states
and that how the location of the concentration points depends strictly on the potentials. In \cite{BDW}, Bartsch, Dancer and Wang considered the repulsive case and obtained segregated radial solutions by global bifurcation methods for the the general systems \eqref{1.1}, establishing the existence of infinite branches of radial solutions with the property that $\sqrt{\mu_1-\beta}\psi-\sqrt{\mu_2-\beta}\phi$ has exactly $k$ nodal domains for solutions along the
kth branch. Recently, Pi and Wang \cite{PWc} have constructed multiple solutions
with any prescribed spikes and proved that the spikes will approach the local maximum point of the trap potentials as $\epsilon \to 0^+$.

Here we should point out that in  the results mentioned above, the solutions are positive. Although there is a wide literature studying existence, multiplicity and shape
of positive solutions, there are few papers dealing with the case of nodal solutions, with the exception of the single Schr\"{o}dinger equations for the one-dimensional case or the radial case(\cite{BW1}) which allows methods, like the use of a natural constraint, which do not work in the nonradial setting considered here.

As far as we know, there are no results on the existence of nodal non-radial semi-classical bound  states which have any prescribed nodal domain. In this paper, we
will present some results which contributes to this respect.

In order to state our main results, first we
 assume that $\inf\limits_{r \geq 0}P(r)>0$ and $\inf\limits_{r \geq 0}Q(r)>0$ satisfy the following conditions:

(~P~):~~There are constants $a \in \R, m>1~\hbox{and}~\theta>0$, such that as $r\rightarrow 0^+$
$$P(r)=1+ar^m+O(r^{m+\theta}).$$

(~Q~):~~There are constants $b \in \R, n>1~\hbox{and}~\delta>0$, such that as $r\rightarrow 0^+$

$$Q(r)=1+br^n+O(r^{n+\delta}).$$

The main results of our paper are as follows.
\begin{theorem}\label{Th1}~Let (P) and (Q) hold. Then for any fixed $ k \in N^+$, there exists
a decreasing sequence $\{\beta_l\}\subset (-\sqrt{\mu_1 \mu_2},0)$ with $\beta_l \rightarrow -\sqrt{\mu_1 \mu_2}$ as $l\rightarrow \infty$ and
$\epsilon_0>0$
such that  for $\beta \in (-\sqrt{\mu_1 \mu_2},0)\cup (0,\min\{\mu_1,\mu_2\})\cup(\max\{\mu_1,\mu_2\},\infty)$
and $\beta\neq \beta_l$, and $0<\epsilon<\epsilon_0,$ \eqref{1.1} has a vector solution $(u_\epsilon,~v_\epsilon)$ with $k$ positive peaks and $k$ negative peaks, and
the peaks of the solution  approaching to the extremal  point $0$ of $P(x)$ and $Q(x)$ provided one of the following two conditions holds:\\
(1)\quad  $m<n,a>0~ \hbox{and}~b \in \R; \hbox{or}~ m>n, a\in \R ~\hbox{and} ~b>0;$\\
(2)\quad $m=n, aB +bC_0 >0, \hbox{where}~ B~\hbox{and}~ C~\hbox{are~defined ~in ~Proposition \ref{proA1}};$\\
Furthermore,
$$
\|\sqrt{|\mu_1-\beta|}u_\epsilon-\sqrt{|\mu_2-\beta|}v_\epsilon\|_{H^1}+\|\sqrt{|\mu_1-\beta|}u_\epsilon-\sqrt{|\mu_2-\beta|}v_\epsilon\|_{L^\infty} \to 0,~~~as~~\epsilon \to 0^+.
$$
\end{theorem}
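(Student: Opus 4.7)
The plan is a Lyapunov--Schmidt reduction on a symmetric subspace, using a synchronised model profile as the building block and encoding the nodal structure directly in the choice of symmetry group. After the rescaling $x=\varepsilon y$, the system becomes
\[
-\Delta u + P(\varepsilon y)u=\mu_1 u^3+\beta v^2 u,\qquad
-\Delta v + Q(\varepsilon y)v=\mu_2 v^3+\beta u^2 v,
\]
whose formal limit is the autonomous system with $P\equiv Q\equiv 1$. That limit admits synchronised solutions $(A_0 U, B_0 U)$, where $U$ is the unique positive radial ground state of $-\Delta U+U=U^3$ in $\R^3$ and $(A_0,B_0)$ solves $A^2\mu_1+B^2\beta=1$, $A^2\beta+B^2\mu_2=1$. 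Cramer's rule shows this system has a positive solution precisely on the $\beta$-set appearing in the theorem, and subtracting the two equations gives the key identity
\[
A_0\sqrt{|\mu_1-\beta|}=B_0\sqrt{|\mu_2-\beta|},
\]
which will deliver the final $H^1\cap L^\infty$ assertion once the solution is shown to be close to a sum of such bumps. The countable exceptional sequence $\beta_l\to-\sqrt{\mu_1\mu_2}$ consists of those $\beta$ at which the linearisation around $(A_0 U,B_0 U)$ acquires a new kernel element beyond the three translations; away from them, that limit operator is invertible on the radial subspace modulo translations.

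For the ansatz I place $2k$ alternating bumps at the vertices of a regular $2k$-gon of radius $r$ around the origin: for $j=1,\dots,k$ set
\[
y_j^{+}=r\!\left(\cos\tfrac{2\pi j}{k},\sin\tfrac{2\pi j}{k},0\right),\qquad
y_j^{-}=r\!\left(\cos\tfrac{(2j+1)\pi}{k},\sin\tfrac{(2j+1)\pi}{k},0\right),
\]
\[
W_{r,\varepsilon}(y)=\sum_{j=1}^{k}(A_0U,B_0U)(y-y_j^{+}/\varepsilon)-\sum_{j=1}^{k}(A_0U,B_0U)(y-y_j^{-}/\varepsilon),
\]
and work in the closed subspace of $(u,v)\in H^1(\R^3)\times H^1(\R^3)$ invariant under the dihedral action that permutes the $y_j^{\pm}$ and simultaneously flips signs of both components when sending a $y_j^+$ to a neighbouring $y_j^-$. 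This enforces a nodal solution with exactly $2k$ peaks and reduces the Lyapunov--Schmidt parameter to the single radius $r$.

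Using the exponential decay of $U$ and the nondegeneracy recalled in the first paragraph, I would prove a uniform invertibility estimate for the linearised operator at $W_{r,\varepsilon}$ on the symmetric subspace transversal to $\partial_r W_{r,\varepsilon}$, in a weighted $L^\infty$-norm adapted to the $2k$ peaks; a contraction-mapping argument then produces a correction $\phi_{r,\varepsilon}$ so that $W_{r,\varepsilon}+\phi_{r,\varepsilon}$ solves the rescaled system modulo a multiple of $\partial_r W_{r,\varepsilon}$. The reduced problem is a one-variable critical-point problem for $I_\varepsilon(r)=J_\varepsilon(W_{r,\varepsilon}+\phi_{r,\varepsilon})$, whose expansion takes the schematic form
\[
I_\varepsilon(r)=2k\,c_0+\varepsilon^{m}(aB+o(1))r^{m}+\varepsilon^{n}(bC_0+o(1))r^{n}\pm kD_0\,\Psi(r/\varepsilon)+\text{l.o.t.},
\]
with $c_0$ the energy of one synchronised bump, $B$ and $C_0$ the constants from Proposition~\ref{proA1}, $D_0>0$, and $\Psi(\rho)\sim \rho^{-1}e^{-2\sin(\pi/k)\rho}$ the nearest-neighbour interaction generated by the exponentially small overlap of consecutive, oppositely signed bumps. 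In either regime (1) or (2) the potential contribution has a definite positive sign, and balancing it against $\Psi$ produces a nondegenerate critical radius $r_\varepsilon\asymp\varepsilon|\log\varepsilon|$, yielding the desired solution.

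The principal obstacle is the linear estimate: because the $2k$ peaks cluster at distance $r_\varepsilon/\varepsilon\sim|\log\varepsilon|$ in the rescaled variable, their pairwise overlaps are only polynomially small in $\varepsilon$, so the contraction-scheme errors and the energy-expansion remainders arise at the same order as the potential contribution and must be tracked with sharp constants. Simultaneously, the dependence of the inverse on $\beta$ must be controlled finely enough to isolate the discrete bad values $\beta_l$ where synchronised nondegeneracy fails. Once a critical $r_\varepsilon$ is produced, the closeness of $(u_\varepsilon,v_\varepsilon)$ to $W_{r_\varepsilon,\varepsilon}$ in $H^1\cap L^\infty$, combined with the identity $A_0\sqrt{|\mu_1-\beta|}=B_0\sqrt{|\mu_2-\beta|}$ established above, gives the stated convergence of the weighted difference.
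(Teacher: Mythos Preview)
Your approach is essentially the paper's: the same synchronised ansatz $(\alpha w,\gamma w)$ placed with alternating signs at the vertices of a regular $2k$-gon, the same dihedral symmetry reducing the problem to a single radius $r$, the same invertibility of the linearisation away from the exceptional $\beta_l$ (which the paper imports from Peng--Wang \cite{PW}), and the same reduced-energy balance between the potential contribution and the nearest-neighbour interaction, producing a critical radius $r_\varepsilon\sim\varepsilon\ln(1/\varepsilon)$. Two minor remarks: the paper performs the reduction in the $H^1$ energy norm rather than a weighted $L^\infty$ norm (either is standard here and works), and your nearest-neighbour decay rate should read $e^{-2\sin(\pi/2k)\,r/\varepsilon}$ rather than $e^{-2\sin(\pi/k)\,r/\varepsilon}$, since consecutive vertices of the $2k$-gon are at angular separation $\pi/k$ and hence chord length $2r\sin(\pi/2k)$; this does not affect the structure of the argument.
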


\begin{theorem}\label{Th2} Let (P) and(Q) hold. If $m=n, a>0, b>0,$ then for any fixed $ k\in N^+$, there exist
constants $\beta_0>0$ and $\epsilon_0>0$ such that for any $\beta<\beta_0$ and $0<\epsilon<\epsilon_0,$ \eqref{1.1} has a vector solution $(\widetilde{u}_\epsilon,~\widetilde{v}_\epsilon)$
with $k$ positive peaks and $k$ negative peaks which approach to the local minimum point $0$ of $P(x)$ and $Q(x)$ as $\epsilon \to 0^+$. Furthermore,
$$\|\sqrt{\mu_2}\widetilde{u}_\epsilon(.)-\sqrt{\mu_1}\widetilde{v}_\epsilon(T_\epsilon.)\|_{H^1}+\|\sqrt{\mu_2}\widetilde{u}_\epsilon(.)-\sqrt{\mu_1}\widetilde{v}_\epsilon(T_\epsilon.)\|_{L^\infty} \to 0,\quad \quad as \quad \epsilon \to 0^+.$$
Here $T_\epsilon \in SO(3)$ is the rotation on the $(x_1,x_2)$ plane of $\frac{\pi}{k}.$
\end{theorem}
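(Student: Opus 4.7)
The plan is to construct $(\tilde u_\ep,\tilde v_\ep)$ by a Lyapunov--Schmidt reduction built on a staggered, rotation-symmetric ansatz, and then to deduce the synchronization estimate from an identity that is satisfied exactly by that ansatz. Let $W$ denote the unique radial positive ground state of $-\Delta w+w=w^3$ on $\R^3$ and set $W_\ep(x)=W(x/\ep)$. For a single cluster-radius parameter $r>0$, introduce the peak centres
\[
y_j^+=r\bigl(\cos\tfrac{2\pi j}{k},\sin\tfrac{2\pi j}{k},0\bigr),\qquad y_j^-=T_\ep y_j^+,\qquad j=0,\dots,k-1,
\]
where $T_\ep\in SO(3)$ is rotation by $\pi/k$ in the $(x_1,x_2)$-plane, and take as approximate solution
\[
u_0(x)=\frac{1}{\sqrt{\mu_1}}\sum_{j=0}^{k-1}\bigl[W_\ep(x-y_j^+)-W_\ep(x-y_j^-)\bigr],\qquad v_0(x)=\sqrt{\tfrac{\mu_1}{\mu_2}}\,u_0(T_\ep^{-1}x).
\]
Radiality of $W$ makes the exact identity $\sqrt{\mu_2}\,u_0(x)=\sqrt{\mu_1}\,v_0(T_\ep x)$ hold pointwise on $\R^3$, so the conclusion of the theorem will reduce to showing that the corrections $(\phi_1,\phi_2)=(\tilde u_\ep-u_0,\tilde v_\ep-v_0)$ are $o(1)$ in $H^1\cap L^\infty$. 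I would work inside the closed subspace of $H^1(\R^3)\times H^1(\R^3)$ consisting of pairs that are even in $x_3$, invariant under rotation by $2\pi/k$ in the $(x_1,x_2)$-plane, and antisymmetric under $T_\ep$; this leaves $r$ as the only free parameter of the ansatz.

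Next, I would require $(\phi_1,\phi_2)$ to be orthogonal to the radial kernel directions $Z_{i,j}=\partial_r W_\ep(\cdot - y_j^{\pm})$ coming from the translation invariance at $\ep=0$. Invertibility of the coupled linearisation on this complement is the delicate point: the off-diagonal block carries the factor $\beta\, u_0 v_0$, which, thanks to the staggered placement of the peaks of $u_0$ relative to those of $v_0$, is supported where both factors are exponentially small in $1/\ep$. For $\beta<\beta_0$ with $\beta_0>0$ small enough, this off-diagonal perturbation cannot absorb the spectral gap of the diagonal blocks (each modelled on the linearisation of the single equation at a superposition of $2k$ signed spikes, whose kernel in the chosen symmetry class is exactly spanned by the $Z_{i,j}$). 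A standard contraction mapping then produces $(\phi_1,\phi_2)=(\phi_1(r,\ep,\beta),\phi_2(r,\ep,\beta))$ of controlled size in $H^1\cap L^\infty$.

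Third, I would reduce to a one-dimensional variational problem in $r$ by expanding the energy $I_\ep(u_0+\phi_1,v_0+\phi_2)$. Since $m=n$ and $a,b>0$, the two potential contributions $\int(P-1)u_0^2$ and $\int(Q-1)v_0^2$ combine into a single positive monomial in $r$ of order $\ep^{3+m}r^m$ with a strictly positive coefficient built from $a$, $b$ and the single-peak constants analogous to those of Proposition~\ref{proA1}. The same-component peak interactions contribute a strictly decreasing, exponentially small repulsive term in $r/\ep$; the $\beta$-coupling $\beta\int u_0^2 v_0^2$ is likewise exponentially small because of the staggered supports. Balancing the attractive potential against the repulsive peak interaction produces an interior non-degenerate minimum $r_\ep\to 0$, which by the standard Lyapunov--Schmidt principle lifts to a genuine solution of~\eqref{1.1}.

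The main obstacle is the quantitative control of the coupled linearisation in the symmetric staggered class: one must verify both that the chosen invariance kills every spurious kernel direction and that the cross term $\beta u_0 v_0$, although exponentially small in $L^\infty$, does not generate near-resonances that shrink the spectral gap as $\beta$ increases; this is exactly what fixes the bound $\beta_0$. Once this is done, the conclusion $\|\sqrt{\mu_2}\tilde u_\ep-\sqrt{\mu_1}\tilde v_\ep(T_\ep\cdot)\|_{H^1}+\|\sqrt{\mu_2}\tilde u_\ep-\sqrt{\mu_1}\tilde v_\ep(T_\ep\cdot)\|_{L^\infty}\to 0$ is immediate from the exact identity $\sqrt{\mu_2}u_0(x)=\sqrt{\mu_1}v_0(T_\ep x)$ combined with the $H^1\cap L^\infty$ smallness of $\phi_i$, where the $L^\infty$ estimate comes from applying standard elliptic regularity to the equations satisfied by the $\phi_i$.
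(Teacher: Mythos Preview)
Your overall plan---Lyapunov--Schmidt reduction on a staggered alternating ansatz, invertibility of the linearisation for small $\beta$, then a finite-dimensional minimisation---is exactly the scheme the paper follows. But the concrete ansatz you wrote down collapses. Since $T_\ep$ is rotation by $\pi/k$ and the sign pattern of $u_0$ alternates with period exactly $\pi/k$, radiality of $W$ gives
\[
u_0(T_\ep^{-1}x)=\frac{1}{\sqrt{\mu_1}}\sum_{j=0}^{k-1}\bigl[W_\ep(x-y_j^-)-W_\ep(x-y_{j+1}^+)\bigr]=-u_0(x),
\]
so in fact $v_0=-\sqrt{\mu_1/\mu_2}\,u_0$. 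The peaks of $u_0$ and $v_0$ therefore \emph{coincide} (with opposite signs), and $u_0v_0=-\sqrt{\mu_1/\mu_2}\,u_0^2$ is not supported where both factors are exponentially small. The off-diagonal block of the linearisation and the coupling term $\beta\int u_0^2v_0^2$ are then of the same order as the main diagonal terms, so neither your invertibility argument nor your error expansion goes through. Your ``exact identity'' also fails: $\sqrt{\mu_2}\,u_0(x)-\sqrt{\mu_1}\,v_0(T_\ep x)=(\mu_2-\mu_1)\mu_2^{-1/2}u_0(x)$, which vanishes only when $\mu_1=\mu_2$.

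The paper avoids this by a genuinely different placement: the $v$-peaks are put at the \emph{bisectors} of the $u$-angles (a rotation by $\pi/(2k)$, half the alternation period), and on a circle of an \emph{independent} radius $\rho$, so the approximate pair is $(\tilde U_r,\tilde V_\rho)$ with two free parameters $(r,\rho)$. Every product $U_{1,x^i,\ep}\,U_{2,y^j,\ep}$ then involves separated centres, and Lemma~\ref{LemmaA3} yields $\int U_{1,x^i,\ep}^2U_{2,y^j,\ep}^2=\ep^3\,o_\ep(1)\,e^{-2|x^i-y^j|/\ep}$, which is what pushes all $\beta$-contributions into the remainder. The reduced problem is accordingly a two-dimensional minimisation of $\tilde F(r,\rho)$ over $S_\ep\times S_\ep$, not a one-parameter problem; forcing $r=\rho$ together with a rotational conjugacy between $u$ and $v$, as you do, imposes a symmetry the system simply does not have when $P\ne Q$ or $\mu_1\ne\mu_2$. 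If you rewrite the ansatz with the half-angle stagger and decouple the two radii, your outline matches the paper's proof.
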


Next, we introduce some notations to be used in the proofs of  the main results and formulate a version of the main results which give more precise
descriptions about the segregated and synchronized character of the solutions. In doing so ,we also outline the main idea and the approaches in the proofs of
Theorems \ref{Th1} and \ref{Th2}.

Define
\begin{equation}\label{1.2}\begin{split}
H_s&=\bigg\{u\in H^1(\R^3): u~\hbox{is~even~in~}y_h, h=2,3,\\
&~~~~~~~~u\Big(r\cos\big(\theta+\frac{\pi j}{k}\big), r\sin\big(\theta+\frac{\pi j}{k}\big),x_3\Big)
=(-1)^ju\left(r\cos\theta, r\sin\theta,x_3\right)\bigg\},
\end{split}
\end{equation}
where $H^1(\R^3)$ is the usual Sobolev space with the norm for any bounded function $K(x)$
$$\|u\|^2_{\epsilon,K}=(u,u)_\epsilon=\int_{\R^3}(\epsilon^2|\nabla u|^2 +K(x)|u|^2)dx,$$
and define  $ H  = H_s \times H_s$   endowed with the following norm

$$\|(u,v)\|^2_\epsilon=\|u\|^2_{\epsilon,P} +\|v\|^2_{\epsilon,Q}. $$

Set
$$
w_{y,\epsilon}(x)=w\big(\frac{x-y}{\epsilon}\big)
$$
and
\begin{equation}\label{s} S_\epsilon:=\Big[\frac{\min\{m, n\}-\delta}{2\sin\frac{\pi}{2k}}\epsilon\ln\frac{1}{\epsilon},\quad \frac{\min\{m, n\}+\delta}{2\sin\frac{\pi}{2k}}\epsilon\ln\frac{1}{\epsilon}\Big],
\end{equation}
where $\delta \in ( 0, \frac{\sigma}{1+\sigma} \min\{n,m\}),$ and $\sigma$ will be defined in  Proposition \ref{proA2}.
Denote
\begin{equation}\label{1.3}
 x^j:=\Big(r\cos\frac{(j-1)\pi}{k},~r\sin\frac{(j-1)\pi}{k},~x_3\Big),~j=1,2,\cdots,2k,~~r \in S_\epsilon.
\end{equation}
It is well-known that the following problem has a unique radial solution denoted by $w$
\begin{equation}\label{11.4}
-\Delta u+u =u^3, \max\limits_{x\in\R^3}u(x)=u(0),u>0,
\end{equation}
and the solution $w$ satisfies the following properties:
$$
w^\prime(r)<0,\,\,
    \lim\limits_{r\rightarrow\infty}r^{\frac{N-1}{2}}e^rw(r)=C_0>0,\,\,
 \lim\limits_{r\rightarrow\infty}\frac{w^\prime(r)}{w(r)}=-1.
$$

When $-\sqrt{\mu_1\mu_2}<\beta<\min\{\mu_1,\mu_2\}$ or $\beta>\max\{\mu_1,\mu_2\}$,
$(U,V):=(\alpha w, \gamma w)$ is a solution of
the following system:
\begin{equation}\label{sys}\left\{%
\begin{array}{ll}
-\Delta u +u =\mu_1u^3 +\beta v^2u,\,\,&x\in \R^{3},\vspace{0.15cm}\\
-\Delta v +v =\mu_2v^3 +\beta u^2v,\,\,&x\in \R^{3},\\
\end{array}
\right.
\end{equation}
where $\alpha=\sqrt{\frac{\mu_2-\beta}{\mu_1\mu_2-\beta^2}},\gamma=\sqrt{\frac{\mu_1-\beta}{\mu_1\mu_2-\beta^2}}.$

We let

$$U_r(x)=\sum\limits_{j=1}^{2k}(-1)^{j-1}U_{x^j,\epsilon},
V_r(x)=\sum\limits_{j=1}^{2k}(-1)^{j-1}V_{x^j,\epsilon}.
$$

We will verify Theorem \ref{1.1} by proving the following result:

\begin{theorem}\label{Th3} Under the assumptions of Theorem \ref{Th1}, there exists a positive constant $\epsilon_0>0$
such that for any $0<\epsilon<\epsilon_0$, \eqref{1.1} has a solution of the form
$$(u_\epsilon,~v_\epsilon)=(U_r(x)+\varphi(x),~V_r(x)+\psi(x)),$$
where $(\varphi(x),\psi(x))\in H$ and
$$\|(\varphi(x),\psi(x))\|_\epsilon=O\big(\epsilon^{\frac{3+\min\{m,n\}-\sigma}{2}}\big),~~|x^j|
=O\Big(\epsilon\ln\frac{1}{\epsilon}\Big)$$
for some small constant $\sigma>0$.
\end{theorem}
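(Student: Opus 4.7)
The plan is to carry out a finite-dimensional Lyapunov--Schmidt reduction on the symmetric space $H=H_s\times H_s$, with the single parameter $r\in S_\epsilon$ playing the role of the reduction variable. The ansatz $(u,v)=(U_r+\varphi,V_r+\psi)$ automatically lies in $H$ because $U_r,V_r$ were built with the required alternating-sign and reflection symmetries; consequently, on $H$, every translation mode of the approximate solution is killed by the symmetry except the radial one $(\partial_r U_r,\partial_r V_r)$. I would impose orthogonality of $(\varphi,\psi)$ to that single direction in the inner product $(\cdot,\cdot)_\epsilon$. The goal is then to solve an infinite-dimensional ``reduction'' equation for $(\varphi,\psi)$ with $r$ frozen, followed by a one-dimensional ``matching'' equation that fixes $r$.

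First I would quantify the error. Writing \eqref{1.1} as $\mathcal{L}_\epsilon(u,v)=\mathcal{N}(u,v)$, a direct computation gives
\[
\mathcal{L}_\epsilon(U_r,V_r)-\mathcal{N}(U_r,V_r)=\bigl((P(x)-1)U_r,(Q(x)-1)V_r\bigr)+\text{interaction terms},
\]
where the interaction terms come from expanding $U_r^3$, $V_r V_r^2$, $V_r U_r^2$ into self-terms (which cancel since each $(\alpha w,\gamma w)$ solves \eqref{sys}) plus cross terms involving products $U_{x^i,\epsilon}U_{x^j,\epsilon}^2$ with $i\neq j$. By (P),(Q) and a change of variables $x=\epsilon y$, the potential part contributes to the dual norm a quantity of order $\epsilon^{(3+m)/2}+\epsilon^{(3+n)/2}$; by the exponential asymptotics of $w$, the interaction terms are of order $\epsilon^{3/2}\sum_{i\neq j}e^{-|x^i-x^j|/\epsilon}$. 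For $r\in S_\epsilon$ the smallest separation is $|x^1-x^2|=2r\sin(\pi/(2k))$, which turns the interaction into a quantity comparable to $\epsilon^{(3+\min\{m,n\}-\sigma)/2}$, matching the size asserted in the theorem.

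Second, I would invert the linearization at $(U_r,V_r)$. Here the crucial input is the nondegeneracy of the limiting ground state $(\alpha w,\gamma w)$ of the system \eqref{sys}: its kernel in $H^1(\R^3)\times H^1(\R^3)$ consists only of spatial translations of $(\alpha w,\gamma w)$, a property that is known to hold for $\beta\in(-\sqrt{\mu_1\mu_2},\min\{\mu_1,\mu_2\})\cup(\max\{\mu_1,\mu_2\},\infty)$ except for a discrete sequence of exceptional coupling values accumulating at $-\sqrt{\mu_1\mu_2}$; this is exactly the set $\{\beta_l\}$ excluded in Theorem \ref{Th1}. Projecting onto $H$ kills every translation mode except the radial one, and this last direction is excluded by the orthogonality condition, so a standard contradiction-and-blow-up argument yields a uniform (in $\epsilon$ and $r\in S_\epsilon$) lower bound on the projected linearized operator. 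Combined with the error estimate above, a contraction mapping in the orthogonal complement produces $(\varphi_r,\psi_r)\in H$ with $\|(\varphi_r,\psi_r)\|_\epsilon=O(\epsilon^{(3+\min\{m,n\}-\sigma)/2})$, depending smoothly on $r$.

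Third, it remains to kill the single Lagrange multiplier by finding a critical point of the reduced functional $K(r):=I_\epsilon(U_r+\varphi_r,V_r+\psi_r)$ on $S_\epsilon$. A Taylor expansion together with the size estimate for $(\varphi_r,\psi_r)$ gives
\[
K(r)=I_\epsilon(U_r,V_r)+O\bigl(\epsilon^{3+\min\{m,n\}-\sigma}\bigr),
\]
and Proposition \ref{proA1} supplies the leading asymptotics of $I_\epsilon(U_r,V_r)$ as the sum of a potential piece $\epsilon^3(aB\,r^m+bC\,r^n)$ and an interaction piece behaving like $-\epsilon^3 e^{-2r\sin(\pi/(2k))/\epsilon}$ (its sign fixed by the alternating construction of $U_r,V_r$). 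Under either hypothesis (1) or (2) of Theorem \ref{Th1}, the potential contribution is coercive and positive, while the interaction contribution is negative and dominates at the left endpoint of $S_\epsilon$ and is dominated at the right endpoint; a direct balancing argument then yields an interior critical point $r_\epsilon\in S_\epsilon$, and the corresponding $(U_{r_\epsilon}+\varphi_{r_\epsilon},V_{r_\epsilon}+\psi_{r_\epsilon})$ is the desired solution with $|x^j|=r_\epsilon=O(\epsilon\ln(1/\epsilon))$. The main obstacle is the uniform invertibility of the linearized operator across the full admissible range of $\beta$ while avoiding $\{\beta_l\}$; this is precisely where the global structure of the limit system \eqref{sys} enters and is the reason Theorem \ref{Th1} must exclude that discrete sequence.
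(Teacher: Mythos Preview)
Your overall strategy matches the paper's exactly: a Lyapunov--Schmidt reduction in the symmetric space $H$, with orthogonality to the single radial mode, uniform invertibility of the linearized operator via blow-up and the nondegeneracy of $(\alpha w,\gamma w)$ (this is where the exclusion of $\{\beta_l\}$ enters), a contraction for $(\varphi_r,\psi_r)$, and finally a critical point of the reduced one-variable functional on $S_\epsilon$. The error estimate and the size of $(\varphi_r,\psi_r)$ are also correct.

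However, there is a genuine mistake in the final balancing step. You claim that the interaction contribution to $I_\epsilon(U_r,V_r)$ behaves like $-\epsilon^3 e^{-2r\sin(\pi/(2k))/\epsilon}$, i.e.\ that it is \emph{negative}. In fact the alternating-sign construction makes it \emph{positive}: the paper's Proposition~\ref{proA2} (not Proposition~\ref{proA1}, which treats a single spike) gives
\[
I_\epsilon(U_r,V_r)=2k\epsilon^3\Bigl[A+aB r^m+bC_0 r^n+C\Bigl(\tfrac{\mu_1\alpha^4}{2}+\tfrac{\mu_2\gamma^4}{2}+\beta\alpha^2\gamma^2\Bigr)e^{-2r\sin(\pi/(2k))/\epsilon}+\cdots\Bigr],
\]
and using $\mu_1\alpha^2+\beta\gamma^2=\mu_2\gamma^2+\beta\alpha^2=1$ one checks that the bracketed coefficient equals $(\alpha^2+\gamma^2)/2>0$. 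The sign matters: with a negative interaction, minimizing $K(r)$ over $S_\epsilon$ would drive $r$ to the left endpoint (the exponential would dominate and push the energy to $-\infty$ there), and your ``balancing'' would not produce an interior critical point. The correct picture is that \emph{both} the potential term $aB r^m$ (increasing in $r$, under the hypotheses of Theorem~\ref{Th1}) and the interaction term (positive, decreasing in $r$) are large at opposite endpoints of $S_\epsilon$, so their sum attains an interior minimum; this is precisely what the paper verifies by comparing $F(\bar r)$ with $F$ at the two endpoints. Once you correct the sign, your argument goes through and coincides with the paper's.
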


Let $U_i$ be the unique radial solution  of the following problem
$$-\Delta u+u =\mu_iu^3,\,\,\, \max\limits_{x\in\R^3}u(x)=u(0),u>0.$$
It is well known that $U_i$ is non-degenerate and $U_i^\prime(r)<0,\,\,
    \lim\limits_{r\rightarrow\infty}r^{\frac{N-1}{2}}e^rU_i(r)=C_0>0,\,\,
 \lim\limits_{r\rightarrow\infty}\frac{U_i^\prime(r)}{U_i(r)}=-1.$

 We will use $(U_1,U_2)$ to build up the approximate solutions for \eqref{1.1}.

Let $x^j$ be defined in \eqref{1.3} and denote
\begin{equation}\label{1.4}
y^j:=\Big(\rho\cos\frac{(2j-1)\pi}{2k},~\rho\sin\frac{(2j-1)\pi}{2k},~x_3\Big),j=1,2,\cdots,2k,
\end{equation}
where $\rho \in S_\epsilon.$

Let
\begin{equation}\label{1.5}
\tilde{U}_r=\sum\limits_{j=1}^{2k}(-1)^{j-1}U_{1,x^j,\epsilon},\,\,
\tilde{V}_\rho=\sum\limits_{j=1}^{2k}(-1)^{j-1}U_{2,y^j,\epsilon}.
\end{equation}

To prove Theorem \ref{Th2}, we need to prove the following result.

\begin{theorem}\label{Th4}  Under the assumptions of Theorem \ref{Th2},  there exists a positive constant
$\epsilon_0$ such that for any $0<\epsilon<\epsilon_0$, \eqref{1.1} has a solution of the form
$$(\widetilde{u}_\epsilon,~\widetilde{v}_\epsilon)=\big(\tilde{U}_r(x)+\tilde{\varphi}(x),~\tilde{V}_\rho(x)+\tilde{\psi}(x)\big),$$
where $(\tilde{\varphi}(x),\tilde{\psi}(x))\in H$ and
$$
\|(\tilde{\varphi}(x),\tilde{\psi}(x))\|_\epsilon=O\big(\epsilon^{\frac{3+\min\{m,n\}-\sigma}{2}}\big),~~|x^j|
=O\Big(\epsilon\ln\frac{1}{\epsilon}\Big),~~|y^j|=O\Big(\epsilon\ln\frac{1}{\epsilon}\Big)
$$
for some small constant $\sigma>0$.
\end{theorem}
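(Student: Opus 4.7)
The strategy is a Lyapunov--Schmidt reduction in the symmetric space $H=H_s\times H_s$, using the two-parameter approximate solution $(\tilde U_r,\tilde V_\rho)$ from \eqref{1.5}. I would work with the energy
$$I_\epsilon(u,v)=\tfrac12\|u\|^2_{\epsilon,P}+\tfrac12\|v\|^2_{\epsilon,Q}-\tfrac{\mu_1}{4}\int_{\R^3}u^4-\tfrac{\mu_2}{4}\int_{\R^3}v^4-\tfrac{\be}{2}\int_{\R^3}u^2v^2,$$
so that by symmetric criticality its critical points in $H$ yield weak solutions of \eqref{1.1}. The dihedral symmetries encoded in $H_s$ annihilate all tangential translations of the peaks, so the relevant approximate-kernel directions at $(\tilde U_r,\tilde V_\rho)$ reduce to the two radial translations $\partial_r\tilde U_r$ and $\partial_\rho\tilde V_\rho$. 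Let $E_{r,\rho}$ denote their $(\cdot,\cdot)_\epsilon$-orthogonal complement in $H$.

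For each $(r,\rho)\in S_\epsilon\times S_\epsilon$, the first (infinite-dimensional) step is to find the unique small $(\tilde\varphi,\tilde\psi)\in E_{r,\rho}$ such that $I_\epsilon'(\tilde U_r+\tilde\varphi,\tilde V_\rho+\tilde\psi)$ lies in the span of the two kernel directions. A contraction-mapping argument reduces this to two ingredients: (i) an $H$-norm bound of size $O(\epsilon^{(3+\min\{m,n\}-\sigma)/2})$ on the error $I_\epsilon'(\tilde U_r,\tilde V_\rho)$ restricted to $E_{r,\rho}$, obtained by splitting it into a potential-mismatch part controlled by hypotheses (P),(Q), a self-interaction part between different peaks of each component that is exponentially small in $|x^i-x^j|/\epsilon$ or $|y^i-y^j|/\epsilon\gtrsim\ln(1/\epsilon)$, and a cross-interaction part of order $\be\cdot w(|x^i-y^j|/\epsilon)$; and (ii) a uniform-in-$\epsilon$ invertibility of the linearized block operator $L_\epsilon:=I_\epsilon''(\tilde U_r,\tilde V_\rho)|_{E_{r,\rho}}$, which follows from the non-degeneracy of the single-peak solutions $U_1,U_2$ together with the exponential-decay estimates of Proposition~\ref{proA2}.

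Once $(\tilde\varphi_{r,\rho},\tilde\psi_{r,\rho})$ has been produced with the stated norm, it remains to find a critical point of the reduced functional $F_\epsilon(r,\rho):=I_\epsilon(\tilde U_r+\tilde\varphi_{r,\rho},\tilde V_\rho+\tilde\psi_{r,\rho})$ on the interior of $S_\epsilon\times S_\epsilon$. A careful expansion (in the spirit of Proposition~\ref{proA1} applied to each component) yields a schematic formula
$$F_\epsilon(r,\rho)=2kA_0\epsilon^3+\epsilon^3\bigl[aB\,r^m+bB'\rho^n-C_1 w(2r\sin\tfrac{\pi}{2k}/\epsilon)-C_2 w(2\rho\sin\tfrac{\pi}{2k}/\epsilon)\bigr]+R_\epsilon(r,\rho),$$
where the remainder includes a cross term arising from $\be\int\tilde U_r^2\tilde V_\rho^2$ that is exponentially suppressed by the rotational offset $\pi/(2k)$ between $\{x^j\}$ and $\{y^j\}$. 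Under $a>0,b>0$ with $m=n$, each bracketed expression is strictly concave and attains an interior maximum on $S_\epsilon$, so $F_\epsilon$ attains an interior maximum $(r_\epsilon,\rho_\epsilon)$; the standard argument then shows that the Lagrange multipliers at this maximum vanish, producing a genuine solution of \eqref{1.1}.

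The main obstacle I anticipate is controlling the interaction between the two components introduced by the coupling $\be$. The off-diagonal entries of $L_\epsilon$ (essentially $2\be\tilde U_r\tilde V_\rho$) and the cross term $\be\int\tilde U_r^2\tilde V_\rho^2$ couple the $u$- and $v$-problems, and although the exponential separation between $\{x^j\}$ and $\{y^j\}$ makes them small, they must remain uniformly subordinate to the diagonal coercivity of $-\epsilon^2\Delta+P-3\mu_1\tilde U_r^2$ on the $u$-kernel's orthogonal complement, and similarly for the $v$-block. This is precisely the role of the smallness condition $\be<\be_0$: one chooses $\be_0$ so that the full block operator stays uniformly invertible and the cross contribution to the reduced expansion does not destroy the strict concavity in $(r,\rho)$. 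Once this is secured, the scheme closes as in Theorem~\ref{Th3}, the extra parameter $\rho$ simply doubling the dimension of the reduced problem while leaving the final error bound $O(\epsilon^{(3+\min\{m,n\}-\sigma)/2})$ unchanged.
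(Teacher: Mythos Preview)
Your overall Lyapunov--Schmidt scheme matches the paper's: reduce to $\tilde E$, invert the linearized operator using nondegeneracy of $U_1,U_2$ together with the smallness condition $\beta<\beta_0$ to kill the off-diagonal coupling, and solve for $(\tilde\varphi,\tilde\psi)$ by a contraction. The error-size bookkeeping is also right.

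The genuine gap is in your reduced-problem step. Because the approximate solution $\tilde U_r=\sum_j(-1)^{j-1}U_{1,x^j,\epsilon}$ is \emph{nodal}, adjacent bumps have opposite signs, and the dominant nearest-neighbour interaction in the energy comes with a \emph{positive} sign, not a negative one. Concretely (see the paper's Proposition~\ref{proA5}),
\[
\tilde F(r,\rho)=2k\epsilon^3\Bigl[\tilde A+a\tilde Br^m+b\tilde C\rho^n+B_1e^{-\frac{2r\sin\frac{\pi}{2k}}{\epsilon}}+B_2e^{-\frac{2\rho\sin\frac{\pi}{2k}}{\epsilon}}+o_\epsilon(1)e^{-\frac{2|x^1-y^1|}{\epsilon}}+\cdots\Bigr],
\]
with $B_1,B_2>0$. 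Your expansion has $-C_1w(\cdot)$ and $-C_2w(\cdot)$, which is the sign one gets for \emph{same-sign} bumps, not for the alternating-sign configuration used here.

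This sign error propagates: the scalar function $r\mapsto a\tilde Br^m+B_1e^{-2r\sin(\pi/2k)/\epsilon}$ with $a,\tilde B,B_1>0$ and $m>1$ is not concave and has no interior maximum---it has an interior \emph{minimum} at $\bar r=\frac{m+o(1)}{2\sin(\pi/2k)}\epsilon\ln\frac{1}{\epsilon}$. Accordingly, the paper \emph{minimizes} $\tilde F$ over $S_\epsilon\times S_\epsilon$ and checks, by comparing $\tilde F(\bar r,\bar\rho)$ with the boundary values, that the minimum is attained in the interior. Your argument (``strictly concave, interior maximum'') would fail as stated; once you correct the sign and switch to a minimization, the rest of your outline goes through exactly as in the paper.
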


\begin{remark}
Radial symmetries can be replaced by the following weaker symmetrical
assumptions: after suitably rotating the coordinate system,

$(P')~~P(x)=P(x',x_3)=P(|x'-\bar x'|,x_3-\bar
x_3)$ and $P(x)$ has the following expansion:
$$
P(r)=P(\bar x)+a|x-\bar x|^m+O(|x-\bar x|^{m+\theta}) ~\hbox{as}~
|x-\bar x|\to 0,
$$
where $\bar x\in\R^3$,  $a \in \R, m>1,\theta>0~\hbox{and}~P(\bar x)>0$
are  constants.

$(Q')~~Q(x)=Q(x',x_3)=Q(|x'-\bar x'|,x_3-\bar
x_3)$ and $Q(x)$ has the following expansion:
$$
Q(r)=Q(\bar x)+b|x-\bar x|^n+O(|x-\bar x|^{n+\delta}) ~\hbox{as}~
|x-\bar x|\to 0,
$$
where $\bar x\in\R^3$,  $b \in \R, n>1,\delta>0~\hbox{and}~Q(\bar x)>0$
are   constants.

\end{remark}

\begin{remark}
For $N=2$, if we adjust the constants $ \delta, \tau, \tau_2$ in \eqref{s}, then both Lemma \ref{lemma2.5}
and Proposition \ref{pro2.6} still hold.
In order to guarantee that Proposition \ref{pro2.6} holds, we can find nodal synchronize solutions of \eqref{1.1}
for the attractive case under the same assumptions.
 However, for the  
 repulsive  case, we can't find nodal segregated solutions of
 \eqref{1.1}, since Proposition \ref{pro3.4} can not hold.
\end{remark}

The proofs of our main result are based on the well-known
Lyapunov-Schmidt reduction procedure. In particular, in order to
deal with nodal clustered solutions, we perform the reduction in
suitable symmetric settings in the spirit of \cite{WW} where
infinitely many positive non-radial solutions for nonlinear
Schr\"{o}dinger equations were obtained. For the attractive case, we will construct nodal synchronize
solutions approximately as $\Big(\sum\limits_{j=1}^{2k}(-1)^{j-1}U_{x^j,\epsilon},~~~~
\sum\limits_{j=1}^{2k}(-1)^{j-1}V_{x^j,\epsilon}\Big)$ with the points $x^j$ locating on and dividing
equally  the circle with radius $C \epsilon\ln\frac{1}{\epsilon}$ into
$2k$ parts. Since the distance between two neighbor peaks with the
same sign is larger than that between two neighbor peaks with
opposite sign, the interaction among peaks with opposite sign
dominates that among peaks with the same sign. Hence, if  the slower decaying functions between  $Q(x)$ and $P(x)$ has local
minimum at the center of the circle, we can easily conclude that the
equilibrium is achieved for a suitable configuration of the points
$x^j$, which can be reduced to solve a minimization problem related
to energy functional. Generally speaking, the key to construct nodal
solutions by the reduction argument is to compare the influence
between the interaction  among the peaks with the same sign and that
among the peaks with opposite sign, the idea in \cite{WW} can help
us to construct a symmetric configuration space consisting of
$x^j\,(j=1,\cdots,2k)$ and hence realize the key.
For the repulsive case, we will construct nodal segregated
solutions approximately as $\big(\sum\limits_{j=1}^{2k}(-1)^{j-1}U_{1,x^j,\epsilon},~
\sum\limits_{j=1}^{2k}(-1)^{j-1}U_{2,y^j,\epsilon}\big)$ with the points $x^j$ and $y^j$ locating on and dividing
equally  the circles with radius $C_1 \epsilon\ln\frac{1}{\epsilon}$ and $C_2 \epsilon\ln\frac{1}{\epsilon}$ into
$2k$ parts, respectively and vector  $\overrightarrow{oy^j}$ dividing  equally angle $\angle x^jox^{j+1}$.  Then using the similar methods like the attractive
case, we can construct nodal segregated solutions.
 This idea is also
effective in finding infinitely many  non-radial positive solutions
for semilinear elliptic problems  (see,  \cite{PW}).

This paper is organized as follows. In section \ref{s2}, we will study the
finite-dimensional reduced problem and prove Theorem~\ref{Th3}. We will put
the study of the existence of segregated solutions for system \eqref{1.1} and the proof of the Theorem~\ref{Th4} into Section \ref{s3}.
Finally we will give all the technical calculations in the Appendix.

\section{\bf Synchronized Vector Solutions and the proof of Theorem \ref{Th1}}\label{s2}
\def\theequation{2.\arabic{equation}}\makeatother
\setcounter{equation}{0}

In this section we consider synchronized vector solutions and prove Theorem \ref{Th1} by proving Theorem \ref{Th3}.
The functional corresponding to \eqref{1.1} is
\begin{equation}\label{2.1}
\begin{array}{rl}
I_\epsilon(u,v)&=\ds\frac{1}{2}\ds\int_{\R^3}\Big(\epsilon^2|\nabla u|^2 +P(x)u^2 +
\epsilon^2|\nabla v|^2 +Q(x)v^2\Big)~dx \vspace{0.2cm}\\&\,\,\,\,\,\,-\ds\frac{1}{4}\ds\int_{\R^3}\Big(\mu_1|u|^4+\mu_2|v|^4\Big)~dx
-\frac{\beta}{2}\ds\int_{\R^3}u^2v^2~dx.
\end{array}
\end{equation}
Then $I_\epsilon \in C^2(H)$ and its critical points correspond to the  solutions of \eqref{1.1}.

Define$$Y_j:=\frac{\partial U_{x^j,\epsilon}}{\partial r}, Z_j:=\frac{\partial V_{x^j,\epsilon}}{\partial r},j=1,2,\cdots,2k,$$
where $x^j$ is defined in \eqref{1.3} and define

\begin{equation}\label{2.2'}
E=\Big\{(u,v)\in H:\sum\limits_{j=1}^{2k}\ds\int_{\R^3}(U_{x^j,\epsilon}^2Y_ju +V_{x^j,\epsilon}^2Z_jv)~dx=0\Big\}.
\end{equation}

Let $$J(\varphi,\psi)=I_\epsilon(U_r+\varphi, V_r+\psi),\,\,\,(\varphi,\psi)\in E.$$

Expand $J(\varphi,\psi)$ as follows:
\begin{equation}\label{2.2}
J(\varphi,\psi)=J(0,0)+l(\varphi,\psi)+\frac{1}{2}Q(\varphi,\psi)+R(\varphi,\psi),\,\,\,(\varphi,\psi)\in E,
\end{equation}
where
$$
\begin{array}{rl}
&l(\varphi,\psi)\\[5mm]
&=\sum\limits_{j=1}^{2k}(-1)^{j-1}\ds\int_{\R^3}(P(x)-1)U_{x^j,\epsilon}\varphi-
\mu_1\ds\int_{\R^3}\Big(U_r^3-\sum\limits_{j=1}^{2k}(-1)^{j-1}U^3_{x^j,\epsilon}\Big)\varphi\\[5mm]
&\quad+\sum\limits_{j=1}^{2k}(-1)^{j-1}\ds\int_{\R^3}(Q(x)-1)V_{x^j,\epsilon}\psi-
\mu_2\ds\int_{\R^3}\Big(V_r^3-\sum\limits_{j=1}^{2k}(-1)^{j-1}V^3_{x^j,\epsilon}\Big)\psi\\[5mm]
&\quad-\beta\ds\int_{\R^3}\Big(U_rV_r^2-\sum\limits_{j=1}^{2k}(-1)^{j-1}V_{x^j,\epsilon}^2U_{x^j,\epsilon}\Big)\varphi
-\beta\ds\int_{\R^3}\Big(U_r^2V_r-\sum\limits_{j=1}^{2k}(-1)^{j-1}V_{x^j,\epsilon}U_{x^j,\epsilon}^2\Big)\psi,
\end{array}$$

$$
\begin{array}{rl}
Q(\varphi,\psi)&=\ds\int_{\R^3}(\epsilon^2|\nabla \varphi|^2+P(x)\varphi^2-3\mu_1U_r^2\varphi^2)\\[5mm]
&\quad+\ds\int_{\R^3}(\epsilon^2|\nabla\psi|^2+Q(x)\psi^2-3\mu_2V_r^2\psi^2)\\[5mm]
&\quad-\beta\ds\int_{\R^3}(U_r^2\psi^2+4U_rV_r\varphi\psi+V_r^2\varphi^2)
\end{array}
$$
and
$$
\begin{array}{rl}
R(\varphi,\psi)=&\ds\int_{\R^3}\big(\mu_1U_r\varphi^3+\mu_2V_r\psi^3+\frac{\mu_1}{4}\varphi^4+\frac{\mu_2}{4}\psi^4\big)\\[5mm]
&-\ds\frac{\beta}{2}\ds\int_{\R^3}\big[(U_r+\varphi)^2(V_r+\psi)^2-U_r^2V_r^2-2(U_rV_r^2\varphi+U_r^2V_r\psi)\\[5mm]
&\quad\quad\quad\quad\,\,-(U_r^2\psi^2+V_r^2\varphi^2+4U_rV_r\varphi\psi)\big].
\end{array}
$$

In order to find a critical point $(\varphi,\psi)\in E$ for $J(\varphi,\psi)$, we need to discuss each term in the expansion \eqref{2.2}.

It is easy to check that
$$
\begin{array}{rl}
&\ds\int_{\R^3}(\epsilon^2\nabla u\nabla\varphi+P(x)u\varphi-3\mu_1U_r^2u\varphi)+\ds\int_{\R^3}(\epsilon^2\nabla v\nabla \psi+Q(x)v\psi-3\mu_2V_r^2v\psi)\\[5mm]
&-\beta\ds\int_{\R^3}(U_r^2v\psi+V^2_ru\varphi+2U_rV_ru\psi+2U_rV_rv\varphi)
\end{array}
$$
is a bounded bi-linear functional in $E$. Thus there exists a bounded linear operator $L$ from $E$ to $E$  such that
$$
\begin{array}{rl}
&\langle L(u,v),(\varphi,\psi) \rangle\\[5mm]
&=\ds\int_{\R^3}(\epsilon^2\nabla u\nabla\varphi+P(x)u\varphi-3\mu_1U_r^2u\varphi)+\ds\int_{\R^3}(\epsilon^2\nabla v\nabla \psi+Q(x)v\psi-3\mu_2V_r^2v\psi)\\[5mm]
&\quad-\beta\ds\int_{\R^3}(U_r^2v\psi+V^2_ru\varphi+2U_rV_ru\psi+2U_rV_rv\varphi),\,\,\,(u,v),(\varphi,\psi) \in E.
\end{array}
$$

From the above analysis, we have the following  lemma.

\begin{lemma}\label{lemma2.1} There is a constant $C>0$, independent of $\epsilon$, such that for any $r\in S_\epsilon$,
$$\|L(u,v)\|\leq C\|(u,v)\|_\epsilon,\,\,\,\,(u,v)\in E.$$
\end{lemma}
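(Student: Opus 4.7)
The goal is to show that the bilinear form defining $L$ is bounded on $E\times E$ with constant independent of $\epsilon$ and $r\in S_\epsilon$. Since $\|L(u,v)\|=\sup_{\|(\varphi,\psi)\|_\epsilon\le 1}|\langle L(u,v),(\varphi,\psi)\rangle|$, it suffices to bound each of the six integrals in the definition of $\langle L(u,v),(\varphi,\psi)\rangle$ by $C\|(u,v)\|_\epsilon\,\|(\varphi,\psi)\|_\epsilon$. The plan is to separate the argument into two kinds of terms: the ``quadratic form'' part coming from $\epsilon^2|\nabla\cdot|^2+$ potentials, which is handled by Cauchy--Schwarz directly in the intrinsic norm, and the ``nonlinear coupling'' part involving $U_r^2$, $V_r^2$ and $U_rV_r$, which is handled by pulling the approximate solutions out in $L^\infty$.

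For the first kind, by Cauchy--Schwarz,
\[
\Bigl|\int_{\R^3}\bigl(\epsilon^2\nabla u\nabla\varphi+P(x)u\varphi\bigr)\,dx\Bigr|\le \|u\|_{\epsilon,P}\|\varphi\|_{\epsilon,P}\le \|(u,v)\|_\epsilon\|(\varphi,\psi)\|_\epsilon,
\]
and similarly for the corresponding integral for $v,\psi$. For the coupling terms, the key observation is that $U_r$ and $V_r$ are each sums of $2k$ translated rescalings of the fixed profiles $\alpha w$ and $\gamma w$, so
\[
\|U_r\|_\infty\le 2k\,\alpha\|w\|_\infty,\qquad \|V_r\|_\infty\le 2k\,\gamma\|w\|_\infty,
\]
with bounds that are independent of $\epsilon$ and of the choice $r\in S_\epsilon$. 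Combined with the hypothesis $\inf_{r\ge 0}P(r)>0$ and $\inf_{r\ge 0}Q(r)>0$, this yields the continuous embeddings
\[
\|u\|_{L^2(\R^3)}\le C\|u\|_{\epsilon,P},\qquad \|v\|_{L^2(\R^3)}\le C\|v\|_{\epsilon,Q},
\]
with $C$ independent of $\epsilon$. Therefore, a typical coupling term is controlled as
\[
\Bigl|\int_{\R^3}3\mu_1 U_r^2 u\varphi\,dx\Bigr|\le 3\mu_1\|U_r\|_\infty^2\|u\|_{L^2}\|\varphi\|_{L^2}\le C\|u\|_{\epsilon,P}\|\varphi\|_{\epsilon,P},
\]
and the terms involving $V_r^2v\psi$, $U_r^2v\psi$, $V_r^2u\varphi$, and the mixed $U_rV_r$ pieces are treated in exactly the same way, combining $\|U_r\|_\infty\|V_r\|_\infty\le C$ with Cauchy--Schwarz on the $L^2$ factors and the embedding above.

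Putting the six estimates together, one collects a single constant $C=C(k,\mu_1,\mu_2,|\beta|,\inf P,\inf Q,\|w\|_\infty)$ that depends only on the fixed data of the problem, and in particular not on $\epsilon$ or on $r\in S_\epsilon$. This proves $\|L(u,v)\|\le C\|(u,v)\|_\epsilon$. There is no real obstacle here; the one point that must be noted is that the $\epsilon$-scaling in the argument of $U_{x^j,\epsilon}(x)=U((x-x^j)/\epsilon)$ leaves the $L^\infty$ norm invariant, so the sup-norm bounds on $U_r$ and $V_r$ are genuinely uniform in $\epsilon$, which is what makes the $L^\infty\times L^2\times L^2$ splitting of the coupling terms compatible with the intrinsic norm $\|\cdot\|_\epsilon$.
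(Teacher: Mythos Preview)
Your proof is correct and is precisely the natural verification the paper leaves implicit: the paper does not give a separate proof of this lemma, merely noting before its statement that ``it is easy to check'' the bilinear form is bounded and hence defines a bounded operator $L$. Your argument, splitting off the quadratic-form part via Cauchy--Schwarz in the intrinsic inner product and controlling the coupling terms by the uniform $L^\infty$ bounds on $U_r,V_r$ together with the embedding $\|\cdot\|_{L^2}\le C\|\cdot\|_{\epsilon,P}$ from $\inf P>0$, is exactly the intended content of that remark.
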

Next, we discuss the invertibility of $L$.

\begin{lemma}\label{lemma2.2} There exist constants $C_0>0$ and $\epsilon_0>0$, such that for any $0<\epsilon<\epsilon_0$
and any $r \in S_\epsilon$
$$\|L(u,v)\|\geq C_0\|(u,v)\|_\epsilon,\,\,\,\,(u,v)\in E.$$
\end{lemma}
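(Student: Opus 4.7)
The natural approach is by contradiction in the spirit of the standard Lyapunov--Schmidt reduction. Suppose the conclusion fails: then there exist sequences $\epsilon_n\to 0$, $r_n\in S_{\epsilon_n}$, and $(u_n,v_n)\in E$ with $\|(u_n,v_n)\|_{\epsilon_n}=1$ and $\|L(u_n,v_n)\|\to 0$. The idea is to test the linear form against well-chosen elements of $E$ to show that, after suitable rescaling and translation around each bump $x^j$, the sequence converges weakly to an element of the kernel of the limit linearized operator; non-degeneracy plus the orthogonality defining $E$ will then force this limit to vanish, yielding a contradiction with $\|(u_n,v_n)\|_{\epsilon_n}=1$.

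First I would perform the blow-up: fix one bump (say $x^1$) and set $\tilde u_n(y)=u_n(\epsilon_n y+x^1)$, $\tilde v_n(y)=v_n(\epsilon_n y+x^1)$. Since $\|(u_n,v_n)\|_{\epsilon_n}=1$, the rescaled pair is bounded in $H^1(\R^3)\times H^1(\R^3)$, so up to a subsequence $(\tilde u_n,\tilde v_n)\rightharpoonup (u_\infty,v_\infty)$ weakly in $H^1$ and strongly in $L^p_{\mathrm{loc}}$. Because $P(x),Q(x)\to 1$ at $0$ and $|x^j|=O(\epsilon_n\ln\tfrac{1}{\epsilon_n})\to 0$, and because $U_r,V_r$ restricted to a bounded neighborhood of $x^1$ in the rescaled variables converges to $(U,V)=(\alpha w,\gamma w)$ (the interaction with far bumps being exponentially small by the asymptotics of $w$), the hypothesis $\|L(u_n,v_n)\|\to 0$ translates, after testing against compactly supported functions, into the equation
\begin{equation*}
\begin{cases}
-\Delta u_\infty+u_\infty-3\mu_1U^2u_\infty-\beta V^2 u_\infty-2\beta UV v_\infty=0,\\
-\Delta v_\infty+v_\infty-3\mu_2V^2v_\infty-\beta U^2 v_\infty-2\beta UV u_\infty=0,
\end{cases}
\end{equation*}
on all of $\R^3$.

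Next I would use the symmetry of $E$ together with non-degeneracy of $(U,V)$. Membership in $H_s$ forces $\tilde u_n,\tilde v_n$ to be even in $y_2,y_3$, which passes to the limit: $u_\infty,v_\infty$ are even in $y_2,y_3$, so the translation-kernel components $\partial_2(U,V)$ and $\partial_3(U,V)$ are immediately excluded. Non-degeneracy of $(\alpha w,\gamma w)$ (valid precisely in the parameter regime $\beta\in(-\sqrt{\mu_1\mu_2},0)\cup(0,\min\{\mu_1,\mu_2\})\cup(\max\{\mu_1,\mu_2\},\infty)$ with $\beta\ne\beta_l$, explaining the hypothesis) tells us the full kernel is spanned by the three partials $\partial_i(U,V)$, so $(u_\infty,v_\infty)=c\,(\partial_1 U,\partial_1 V)$ for some $c\in\R$. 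The orthogonality relations defining $E$, after rescaling, become $\int U^2Y\,u_\infty+V^2Z\,v_\infty=0$ where $Y=\partial_1 U$, $Z=\partial_1 V$; substituting $(u_\infty,v_\infty)=c(\partial_1 U,\partial_1 V)$ gives $c\int(U^2|\partial_1 U|^2+V^2|\partial_1 V|^2)=0$, forcing $c=0$. Thus $(u_\infty,v_\infty)\equiv 0$, and the same argument at every $x^j$ gives vanishing of every limit.

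Finally, I would promote this local vanishing to global vanishing of the norm. Testing $L(u_n,v_n)$ against $(u_n,v_n)$ itself and splitting the integrals into the regions near each $x^j$ (where $3\mu_1 U_r^2$, etc.\ are the dominant perturbations and local convergence to $0$ handles them) and the complementary region (where $U_r^2,V_r^2=o(1)$ uniformly, so the quadratic form is coercive on it), one concludes $\|(u_n,v_n)\|_{\epsilon_n}^2\le o(1)+\|L(u_n,v_n)\|\,\|(u_n,v_n)\|_{\epsilon_n}\to 0$, contradicting normalization. The main obstacle I anticipate is step two: carefully tracking the interactions between the $2k$ well-separated bumps (so that cross terms involving different $x^j$ genuinely vanish in the limit and do not leak extra terms into the limiting linear equation), which requires the precise exponential decay estimates on $w$ together with the choice of $r\in S_{\epsilon_n}$ controlling the inter-bump distance on the logarithmic scale.
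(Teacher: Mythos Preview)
Your approach is essentially identical to the paper's: contradiction, blow-up around $x^1$, passage to the limit linearized system, non-degeneracy of $(\alpha w,\gamma w)$ plus evenness plus the orthogonality defining $E$ to kill the limit, then testing $L(u_n,v_n)$ against $(u_n,v_n)$ itself to contradict the normalization. One scaling slip to fix: with your normalization $\|(u_n,v_n)\|_{\epsilon_n}=1$, the change of variables $y\mapsto \epsilon_n y+x^1$ gives
\[
\int_{\R^3}\bigl(|\nabla\tilde u_n|^2+P(\epsilon_n y+x^1)\tilde u_n^2+|\nabla\tilde v_n|^2+Q(\epsilon_n y+x^1)\tilde v_n^2\bigr)\,dy=\epsilon_n^{-3}\to\infty,
\]
so the rescaled pair is \emph{not} bounded in $H^1$; the paper instead takes $\|(u_n,v_n)\|_{\epsilon_n}^2=\epsilon_n^3$, which makes the rescaled $H^1$ norm exactly $1$ and the rest of your argument goes through verbatim.
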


\begin{proof}
We argue by contradiction.  Suppose that there exist $\epsilon_n \to 0^+, r_n \in S_{\epsilon_n}$ and $(u_n,v_n) \in E$ such that
$$\|L(u_n,v_n)\|=o_n(1)\|(u_n,v_n)\|_{\epsilon_n}.$$
Since $L$ is linear,
we may as well assume that $$\|(u_n,v_n)\|_{\epsilon_n}^2=\epsilon_n^3$$
and
\begin{equation}\label{2.3}
\|L(u_n,v_n)\|=o_n(1)\epsilon_n^\frac{3}{2}.
\end{equation}
Then
$$
\langle L(u_n,v_n),(\varphi,\psi)\rangle=o_n(1)\|(\varphi,\psi)\|_{\epsilon_n}\epsilon_n^\frac{3}{2},\,\,\,\forall(\varphi,\psi) \in E.
$$
That is,
\begin{equation}\label{2.5}
\begin{array}{rl}
&\ds\int_{\R^3}(\epsilon_n^2\nabla u_n\nabla \varphi+P(x)u_n\varphi-3\mu_1U_{r_n}^2u_n\varphi)+\ds\int_{\R^3}(\epsilon_n^2\nabla v_n\nabla\psi+Q(x)v_n\psi-3\mu_2V_{r_n}^2v_n\psi)\\[5mm]
&-\beta\ds\int_{\R^3}(U_{r_n}^2v_n\psi+V^2_{r_n}u_n\varphi+2U_{r_n}V_{r_n}u_n\psi+2U_{r_n}V_{r_n}v_n\varphi)\\[5mm]
&=o_n(1)\|(\varphi,\psi)\|_{\epsilon_n}\epsilon_n^\frac{3}{2},\,\,\,\forall(\varphi,\psi) \in E.
\end{array}
\end{equation}
In particular, we have
\begin{equation}\label{2.6} 
\begin{array}{rl}
&\ds\int_{\R^3}(\epsilon_n^2|\nabla u_n|^2+P(x)|u_n|^2-3\mu_1U_{r_n}^2u_n^2)+\ds\int_{\R^3}(\epsilon_n^2|\nabla v_n|^2+Q(x)|v_n|^2-3\mu_2V_{r_n}^2v_n^2)\\[5mm]
&-\beta\ds\int_{\R^3}(U_{r_n}^2v_n^2+V_{r_n}^2u_n^2+4U_{r_n}V_{r_n}u_nv_n)\\[5mm]
&=o_n(1)\epsilon_n^3.
\end{array}
\end{equation}
We set $\tilde{u}_n(y)=u_n(\epsilon_ny+x^1)$ and $\tilde{v}_n(y)=v_n(\epsilon_ny+x^1)$. Then
\begin{equation}\label{2.7}
\ds\int_{\R^3}(|\nabla \tilde{u}_n|^2+P(\epsilon_ny+x^1)\tilde{u}_n^2+|\nabla\tilde{v}_n|^2+Q(\epsilon_ny+x^1)\tilde{v}_n^2)=1.
\end{equation}
Therefore, there exist $u,v \in H^1(\R^3)$ such that $n \to \infty,$
$$ \tilde{u}_n \to u,~~ \hbox{weakly~in~} H^1_{loc}(\R^3),~~~~\,\,\tilde{u}_n\to u,~~~\hbox{strongly~in~}L_{loc}^2(\R^3),$$
$$ \tilde{v}_n \to v,~~ \hbox{weakly~in~} H^1_{loc}(\R^3),~~~~\,\,\tilde{v}_n\to v,~~~\hbox{strongly~in~}L_{loc}^2(\R^3).$$
Since $\tilde{u}_n$ and $\tilde{v}_n$ are even in $ y_2$ and $y_3$, it is easy to see that $u$ and $v$ are even in $ y_2$ and $y_3$.

On the other hand, from the definition of $E$, we know that $(u,v)$ satisfies
\begin{equation}\label{2.8}
\ds\int_{\R^3}\Big(U^2\frac{\partial U}{\partial x_1}u +V^2\frac{\partial V}{\partial x_1}v\Big)=0.
\end{equation}

Now we claim that $(u,v)$ satisfies
\begin{equation}\label{2.9}\left\{%
\begin{array}{ll}
-\Delta u +u -3\mu_1U^2u-\beta V^2u-2\beta UVv=0,\,\,&x\in \R^{3},\vspace{0.2cm}\\
-\Delta v +v -3\mu_2V^2v-\beta U^2v-2\beta UVu=0, \,\,&x\in \R^{3}.
\end{array}
\right.
\end{equation}
Define
$$\widehat{E}=\left\{ (\varphi,\psi) \in H^1(\R^3) \times H^1(\R^3):\ds\int_{\R^3}\Big(U^2\frac{\partial U}{\partial x_1}u +V^2\frac{\partial V}{\partial x_1}v\Big)=0 \right\}.$$

For any $R>0$, let $(\varphi,\psi) \in C_0^\infty(B_R(0)) \times C_0^\infty(B_R(0))\cap \widehat{E}$ and be even in $y_2$ and $y_3$. Then $(\varphi_n(y),\psi_n(y)):=(\varphi(\frac{y-x^1}{\epsilon_n}),\psi(\frac{y-x^1}{\epsilon_n})) \in C_0^\infty(B_{R\epsilon_n}(x^1))\times C_0^\infty(B_{R\epsilon_n}(x^1)).$
 Inserting $(\varphi_n(y),\psi_n(y))$ into \eqref{2.5}, we find that
\begin{equation}\label{2.10}\begin{array}{rl}
&\ds\int_{\R^3}(\nabla u\nabla\varphi+u\varphi-3\mu_1U^2u\varphi)+\ds\int_{\R^3}(\nabla v\nabla\psi+v\psi-3\mu_2V^2v\psi)\\[5mm]
&-\beta\ds\int_{\R^3}(U^2v\psi+V^2u\varphi+2UVu\psi+2UVv\varphi)=0.\\[5mm]
\end{array}
\end{equation}

However, since $u$ and $v$ are even in $y_2$ and $y_3$, \eqref{2.10} holds for any function $(\varphi,\psi) \in C_0^\infty(B_R(0)) \times C_0^\infty(B_R(0))$, which is odd in $y_2$ or $y_3$. Therefore, \eqref{2.10} holds for any $(\varphi,\psi) \in C_0^\infty(B_R(0)) \times C_0^\infty(B_R(0))\cap \widehat{E}$.
By the density of $C_0^\infty(B_R(0)) \times C_0^\infty(B_R(0))$ in $H^1(\R^3) \times H^1(\R^3)$, we obtain that
\begin{equation}\label{2.11}\begin{array}{rl}
&\ds\int_{\R^3}(\nabla u\nabla \varphi+u\varphi-3\mu_1U^2u\varphi)+\ds\int_{\R^3}(\nabla v\nabla\psi+v\psi-3\mu_2V^2v\psi)\\[5mm]
&-\beta\ds\int_{\R^3}(U^2v\psi+V^2u\varphi+2UVu\psi+2UVv\varphi)=0,\,\,\,\forall (\varphi,\psi) \in \widehat{E}.\\[5mm]
\end{array}
\end{equation}
Noting that $(U,V)=(\alpha w, \gamma w)$ and $w$ is a solution of \eqref{11.4}, we can show that \eqref{2.10} holds for
$(\varphi,\psi)=(\frac{\partial U}{\partial x_1},\frac{\partial V}{\partial x_1})$. Thus \eqref{2.10} is true for any
$(\varphi,\psi) \in H^1(\R^3) \times H^1(\R^3)$. Therefore, we have verified \eqref{2.9}.

From Proposition 2.3 of \cite{PW}, we can know that $(U,V)$ is non-degenerate.  Since we work in the space of functions
which are even in $y_2$ and $y_3$, the kernel of $(U,V)$ is given by the one dimensional
$(\theta(\beta)\frac{\partial w}{\partial x_1},\frac{\partial w}{\partial x_1})$. So, we get $(u,v)=c(\frac{\partial U}{\partial x_1},\frac{\partial V}{\partial x_1})$ for some constant $c$. From \eqref{2.8} we can see $(u,v)=(0,0)$.

As a result,
$$\ds\int_{B_R(-x^1)}(u_n^2+v_n^2)=o_n(1)\epsilon^3, \forall R>0.$$

By direct calculation, we get
$$\ds\int_{\R^3}\left(U_{r_n}^2u_n^2 +V_{r_n}^2v_n^2\right)=o_n(1)\epsilon_n^3+o_R(1)\epsilon_n^3.$$

As a result,
\begin{equation}\label{2.12}\begin{array}{rl}
&o_n(1)\epsilon_n^3\vspace{0.15cm}\\=&\ds\int_{\R^3}(\epsilon_n^2|\nabla u_n|^2+P(x)|u_n|^2-3\mu_1U_{r_n}^2u_n^2)+\ds\int_{\R^3}(\epsilon_n^2|\nabla v_n|^2+Q(x)|v_n|^2-3\mu_2V_{r_n}^2v_n^2)\\[5mm]
&-\beta\ds\int_{\R^3}(U_{r_n}^2v_n^2+V^2_{r_n}u_n^2+4U_{r_n}V_{r_n}u_nv_n)\\[5mm]
&=(1+o_n(1)+o_R(1))\epsilon_n^3.
\end{array}
\end{equation}
This is a contradiction. So we complete the proof.
\end{proof}

\begin{lemma}\label{lemma2.4}
For any $(\varphi,\psi) \in E,$  
we have
$$
\|R(\varphi,\psi)\|=O(\epsilon^{-\frac{3}{2}}\|(\varphi,\psi)\|_\epsilon^3
+\epsilon^{-4}\|(\varphi,\psi)\|_\epsilon^4),
$$
$$
\|R^\prime(\varphi,\psi)\|
=O(\epsilon^{-\frac{3}{2}}\|(\varphi,\psi)\|_\epsilon^2+\epsilon^{-4}\|(\varphi,\psi)\|_\epsilon^3)
$$
and
$$\|R^{\prime\prime}(\varphi,\psi)\|
=O(\epsilon^{-\frac{3}{2}}\|(\varphi,\psi)\|_\epsilon+\epsilon^{-4}\|(\varphi,\psi)\|_\epsilon^2).
$$
\end{lemma}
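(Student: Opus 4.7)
The plan is first to expand $R(\varphi,\psi)$ into a sum of pure cubic and pure quartic integrals in $(\varphi,\psi)$, and then to estimate each piece by H\"older's inequality combined with Gagliardo--Nirenberg interpolation. Expanding $(U_r+\varphi)^2(V_r+\psi)^2$ and cancelling the constant, linear, and quadratic pieces that are subtracted off in the definition of $R$, the $\beta$--term collapses to $-\beta\int(U_r\varphi\psi^2+V_r\varphi^2\psi)-\tfrac{\beta}{2}\int\varphi^2\psi^2$. Combined with the $\mu_i$--terms already written, $R(\varphi,\psi)$ is thus a finite sum of cubic monomials $\int W\varphi^a\psi^b$ ($a+b=3$, $W\in\{U_r,V_r\}$) and pure quartic monomials $\int\varphi^a\psi^b$ ($a+b=4$).

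The core technical tools are the two Gagliardo--Nirenberg inequalities in $\R^3$,
\[
\|u\|_{L^3}^{3}\le C\|u\|_{L^2}^{3/2}\|\nabla u\|_{L^2}^{3/2},\qquad \|u\|_{L^4}^{4}\le C\|u\|_{L^2}\|\nabla u\|_{L^2}^{3},
\]
together with the two scaling relations $\|u\|_{L^2}\le C\|u\|_\epsilon$ (since $P,Q$ are bounded below away from $0$) and $\epsilon\|\nabla u\|_{L^2}\le\|u\|_\epsilon$ (from the definition of the weighted norm). Multiplying these bounds gives
\[
\|u\|_{L^3}^{3}\le C\epsilon^{-3/2}\|u\|_\epsilon^{3},\qquad \|u\|_{L^4}^{4}\le C\epsilon^{-3}\|u\|_\epsilon^{4}.
\]
Using the uniform bound $\|U_r\|_\infty+\|V_r\|_\infty\le C$, H\"older's inequality controls each cubic monomial by $C\epsilon^{-3/2}\|(\varphi,\psi)\|_\epsilon^{3}$ and each quartic monomial by $C\epsilon^{-3}\|(\varphi,\psi)\|_\epsilon^{4}\le C\epsilon^{-4}\|(\varphi,\psi)\|_\epsilon^{4}$ (since $\epsilon<1$). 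Summation over the finitely many terms yields the first estimate.

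For $R'$ and $R''$ I would simply differentiate term by term and repeat the same scheme. A typical cubic--derivative contribution to $R'(\varphi,\psi)[(h_1,h_2)]$ is $\int U_r\varphi^{2} h_1$, bounded by $\|U_r\|_\infty\|\varphi\|_{L^3}^{2}\|h_1\|_{L^3}\le C\epsilon^{-3/2}\|\varphi\|_\epsilon^{2}\|h_1\|_\epsilon$; a typical quartic--derivative contribution is $\int\varphi^{3}h_1$, bounded by $\|\varphi\|_{L^4}^{3}\|h_1\|_{L^4}\le C\epsilon^{-3}\|\varphi\|_\epsilon^{3}\|h_1\|_\epsilon$. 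Taking the supremum over $\|(h_1,h_2)\|_\epsilon=1$ gives the second estimate. For the bilinear form $R''(\varphi,\psi)[(h_1,h_2),(k_1,k_2)]$ I would again split the H\"older product into three factors: $L^{3}\cdot L^{3}\cdot L^{3}$ for the cubic--derivative terms (which survive twice--differentiation) and $L^{4}\cdot L^{4}\cdot L^{2}$ for the quartic--derivative terms, producing the third estimate after taking the supremum over unit test pairs.

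There is no real conceptual obstacle here; the proof is routine Sobolev/Gagliardo--Nirenberg bookkeeping, and the only care needed is to track which terms are cubic and which are quartic under differentiation. I note in passing that this scheme actually produces the slightly sharper factor $\epsilon^{-3}$ rather than the stated $\epsilon^{-4}$ for the quartic contributions, but for $\epsilon<1$ the weaker stated bound holds and is presumably written that way for notational uniformity with the later reduction estimates.
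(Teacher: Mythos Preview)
Your proof is correct and follows the same overall scheme as the paper: expand $R$, isolate the cubic monomials $\int W\varphi^a\psi^b$ (with $W\in\{U_r,V_r\}$, $a+b=3$) and the pure quartics $\int\varphi^a\psi^b$ ($a+b=4$), then estimate each by H\"older/Sobolev and differentiate term by term for $R'$ and $R''$.

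The one genuine technical difference is how the cubic terms are handled. You throw away the factor $U_r$ via $\|U_r\|_{L^\infty}\le C$ and then apply Gagliardo--Nirenberg directly to $\|\varphi\|_{L^3}^3\le C\epsilon^{-3/2}\|\varphi\|_\epsilon^3$. The paper instead keeps the peak factor and uses the scaling $\|U_{x^j,\epsilon}\|_{L^2}=O(\epsilon^{3/2})$ together with the Sobolev embedding $\|\varphi\|_{L^6}\le C\epsilon^{-1}\|\varphi\|_\epsilon$, so that
\[
\int_{\R^3}U_{x^j,\epsilon}|\varphi|^3\le \|U_{x^j,\epsilon}\|_{L^2}\|\varphi\|_{L^6}^3\le C\epsilon^{3/2}\cdot\epsilon^{-3}\|\varphi\|_\epsilon^3=C\epsilon^{-3/2}\|\varphi\|_\epsilon^3.
\]
Both routes land on the same power of $\epsilon$; your approach is slightly more portable (it never uses the concentration structure of $U_r$), while the paper's approach makes explicit where the $\epsilon^{3/2}$ gain comes from. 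Your observation that the quartic piece actually yields $\epsilon^{-3}$ rather than $\epsilon^{-4}$ is correct; the paper's exponent is simply a looser bound that suffices for the contraction argument. Your slightly cryptic ``$L^4\cdot L^4\cdot L^2$'' for the quartic part of $R''$ should be read as $\|\varphi^2\|_{L^2}\|h\|_{L^4}\|k\|_{L^4}$, which indeed gives the stated estimate.
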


\begin{proof}
By direct calculation, we have, for any $(u_1,v_1),(u_2,v_2)\in E$
$$
\begin{array}{rl}
|R(\varphi,\psi)|&=\Big|\ds\int_{\R^3}(\mu_1U_r\varphi^3+\mu_2V_r\psi^3+\frac{\mu_1}{4}\varphi^4+\frac{\mu_2}{4}\psi^4)\\[5mm]
&\quad\,\,-\ds\frac{\beta}{2}\ds\int_{\R^3}
[(U_r+\varphi)^2(V_r+\psi)^2-U_r^2V_r^2-2(U_rV_r^2\varphi+U_r^2V_r\psi)\\[5mm]
&\quad\quad\quad\quad\quad-(U_r^2\psi^2+V_r^2\varphi^2+4U_rV_r\varphi\psi)]\Big|\\[5mm]
&=\Big|\ds\int_{\R^3}(\mu_1U_r\varphi^3+\mu_2V_r\psi^3+\frac{\mu_1}{4}\varphi^4+\frac{\mu_2}{4}\psi^4)\\[5mm]
&\quad\,\,-\ds\frac{\beta}{2}\ds\int_{\R^3}(\varphi^2\psi^2+2U_r\varphi\psi^2+2V_r\varphi^2\psi)\Big|\\[5mm]
&\leq C\ds\int_{\R^3}\Big(\sum\limits_{j=1}^{2k}U_{x^j,\epsilon}|\varphi|^3+\varphi^4+\sum\limits_{j=1}^{2k}V_{x^j,\epsilon}|\psi|^3+\psi^4\Big)\\[5mm]
&\leq C(\epsilon^{-\frac{3}{2}}\|(\varphi,\psi)\|_\epsilon^3+\epsilon^{-4}\|(\varphi,\psi)\|_\epsilon^4)
\end{array}
$$
and
$$
\begin{array}{rl}
&|\langle R^\prime(\varphi,\psi),(u_1,v_1)\rangle |\vspace{0.15cm}\\
&=\Big|\ds\int_{\R^3}(3\mu_1U_r\varphi^2u_1 +3\mu_2V_r\psi^2v_1+\mu_1\varphi^3u_1+\mu_2\psi^3v_1)\\[5mm]
&\quad\,\,+\beta\ds\int_{\R^3}(\varphi\psi^2u_1+\varphi^2\psi v_1+2U_r\varphi\psi v_1+2U_r\psi^2u_1+2V_r\varphi\psi u_1+2V_r\varphi^2v_1)\Big|\\[5mm]
&\leq C\ds\int_{\R^3}\Big[\big(\sum\limits_{j=1}^{2k}U_{x^j,\epsilon}+\sum\limits_{j=1}^{2k}V_{x^j,\epsilon}\big)(\varphi^2+\psi^2)(|u_1|+|v_1|)
+(|\varphi|^3+|\psi|^3)(|v_1|+|u_1|)\Big]\\[5mm]
&\leq C\big(\epsilon^{-\frac{3}{2}}\|(\varphi,\psi)\|_\epsilon^2+\epsilon^{-4}\|(\varphi,\psi)\|_\epsilon^3\big)\|(u_1,v_1)\|_\epsilon.\\[5mm]
\end{array}$$
And by similar calculation, we get that
$$\begin{array}{rl}
|\langle R^{\prime\prime}(\varphi,\psi)(u_1,v_1),(u_2,v_2)\rangle |&\leq C(\epsilon^{-\frac{3}{2}}\|(\varphi,\psi)\|_\epsilon+\epsilon^{-4}\|(\varphi,\psi)\|_\epsilon^2)\|(u_1,v_1)\|_\epsilon\|(u_2,v_2)\|_\epsilon.\\[5mm]
\end{array}$$
So we complete the proof of this lemma.
\end{proof}

\begin{lemma}\label{lemma2.5} There exists a small constant $\tau \in D$ such that
$$
\|l\|=O\big(r^{\min\{n,m\}}+e^{-\frac{3(1-\tau)r}{\epsilon}}
+e^{-\frac{2r\sin\frac{\pi}{2k}}{\epsilon}}\big)\epsilon^\frac{3}{2},
$$
where $D=\{x \in (0,\frac{1}{3})|(1-x)(2-x) \geq \frac{11\sqrt{2}}{10}\}.$
\end{lemma}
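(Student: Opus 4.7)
\textit{Proof Plan.} The plan is to bound $|l(\varphi,\psi)|$ term by term over the six summands defining $l$, and then take the supremum over $(\varphi,\psi)\in E$ with $\|(\varphi,\psi)\|_\epsilon=1$. Throughout I use the elementary bound $\|\varphi\|_{L^2}+\|\psi\|_{L^2}\le C\|(\varphi,\psi)\|_\epsilon$, valid because $\inf P,\inf Q>0$, which reduces each contribution to an $L^2$-norm of an expression built from $U_{x^j,\epsilon}$, $V_{x^j,\epsilon}$, and $P-1$ or $Q-1$.

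First I treat the potential terms $\sum_{j=1}^{2k}(-1)^{j-1}\int_{\R^3}(P(x)-1)U_{x^j,\epsilon}\varphi$ and its $Q$-analog by Cauchy--Schwarz, which reduces the estimate to $\|(P(x)-1)U_{x^j,\epsilon}\|_{L^2}$. After the rescaling $y=(x-x^j)/\epsilon$, the integrand becomes $\epsilon^3 (P(x^j+\epsilon y)-1)^2 U(y)^2$, and since $|x^j|=r$ together with the exponential decay of $U=\alpha w$, assumption (P) (and the fact that $P$ is bounded for use on the complementary region) yields $\|(P(x)-1)U_{x^j,\epsilon}\|_{L^2}\le Cr^m\epsilon^{3/2}$; similarly the $Q$-term yields $Cr^n\epsilon^{3/2}$. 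Together these produce the $r^{\min\{m,n\}}\epsilon^{3/2}$ piece of the bound.

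Next I treat the cubic self-interaction term $\mu_1\int_{\R^3}(U_r^3-\sum_{j=1}^{2k}(-1)^{j-1}U_{x^j,\epsilon}^3)\varphi$ and its $V$-analog. Expanding the cube of the sum and cancelling the diagonal contribution leaves cross products of the form $\pm U_{x^i,\epsilon}^{a}U_{x^j,\epsilon}^{b}$ ($a+b=3$, $a,b\ge 1$, $i\ne j$) and triple products $\pm U_{x^i,\epsilon}U_{x^j,\epsilon}U_{x^l,\epsilon}$ for distinct $i,j,l$. Using the pointwise decay $w(|y|)\le Ce^{-(1-\tau)|y|}$ (valid for $\tau\in D$ small) and the triangle inequality $|x-x^i|+|x-x^j|\ge |x^i-x^j|$, each pairwise cross product has $L^2$-norm at most $C\epsilon^{3/2}e^{-(1-\tau)|x^i-x^j|/\epsilon}$; since $\min_{i\ne j}|x^i-x^j|=2r\sin(\pi/(2k))$ and $\tau$ can be absorbed into the implicit constant in the leading rate, these contribute the $e^{-2r\sin(\pi/(2k))/\epsilon}\epsilon^{3/2}$ term. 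The triple products satisfy $\|U_{x^i,\epsilon}U_{x^j,\epsilon}U_{x^l,\epsilon}\|_{L^2}\le C\epsilon^{3/2}e^{-3(1-\tau)r/\epsilon}$, since for well-separated peaks on the circle of radius $r$ the Fermat-type minimum $\inf_x \sum_{\nu=1}^{3}|x-x^{i_\nu}|$ is attained near the origin and equals $3r$. The coupling terms $-\beta\int(U_rV_r^2-\sum(-1)^{j-1}V_{x^j,\epsilon}^2U_{x^j,\epsilon})\varphi$ and $-\beta\int(U_r^2V_r-\sum(-1)^{j-1}V_{x^j,\epsilon}U_{x^j,\epsilon}^2)\psi$ admit identical estimates because $(U,V)=(\alpha w,\gamma w)$ share the same exponential decay rate. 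Summing the three families of contributions and dividing by $\|(\varphi,\psi)\|_\epsilon$ yields the stated bound.

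\textbf{Main obstacle.} The principal difficulty is the careful classification of the many cross products arising from expanding $U_r^3$, $U_rV_r^2$, and $U_r^2V_r$, and the sharp triangle-inequality bookkeeping required to identify the correct exponential rate for each. In particular, verifying that the three stated terms $r^{\min\{m,n\}}$, $e^{-3(1-\tau)r/\epsilon}$ and $e^{-2r\sin(\pi/(2k))/\epsilon}$ are indeed the dominant contributions — with all other cross-product decays strictly smaller after the choice of $r\in S_\epsilon$ — relies on the numerical constraint $(1-\tau)(2-\tau)\ge 11\sqrt 2/10$ encoded in the definition of $D$, which compares the competing exponential rates coming from different geometric configurations of peaks.
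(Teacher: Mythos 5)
Your decomposition (Cauchy--Schwarz on the linear potential terms, then expansion of the cubic and coupling cross terms) follows the paper's approach, but two of the quantitative steps are not correct and the bound you produce is strictly weaker than the one asserted.

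First, the claim that ``$\tau$ can be absorbed into the implicit constant in the leading rate'' is false. Your pairwise estimate $\|U_{x^i,\epsilon}^2 U_{x^j,\epsilon}\|_{L^2}\le C\epsilon^{3/2}e^{-(1-\tau)|x^i-x^j|/\epsilon}$ differs from the target $e^{-|x^i-x^j|/\epsilon}$ by a factor $e^{\tau|x^i-x^j|/\epsilon}$. For $r\in S_\epsilon$ one has $|x^1-x^2|=2r\sin\frac{\pi}{2k}\sim m\,\epsilon\ln\frac1\epsilon$, so this extra factor is $\sim\epsilon^{-\tau m}$, a negative power of $\epsilon$, not a constant. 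Thus your bound exceeds all three terms on the right-hand side of the lemma and does not prove it. The fix is to forgo the $(1-\tau)$ slack: the known decay $w(|y|)\lesssim(1+|y|)^{-1}e^{-|y|}$ gives the sharp rate directly once one distributes the exponent, e.g. $4|x-x^i|+2|x-x^j|\ge 2|x^i-x^j|+2|x-x^i|$, which yields $\int U_{x^i,\epsilon}^4U_{x^j,\epsilon}^2 \lesssim \epsilon^3 e^{-2|x^i-x^j|/\epsilon}$ and hence $\|U_{x^i,\epsilon}^2 U_{x^j,\epsilon}\|_{L^2}\lesssim\epsilon^{3/2}e^{-|x^i-x^j|/\epsilon}$ with no loss. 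This is what produces the $e^{-2r\sin\frac{\pi}{2k}/\epsilon}\epsilon^{3/2}$ term in (2.15)--(2.16).

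Second, the $e^{-3(1-\tau)r/\epsilon}$ term does not come from the triple products, and the Fermat-point heuristic used to get it is not correct here. For three consecutive points among the $2k$ equally spaced points on the circle of radius $r$, taking $x=x^j$ (the middle point) gives $\sum_\nu|x-x^{i_\nu}|\le 2\cdot 2r\sin\frac{\pi}{2k}$, which is strictly less than $3r$ whenever $k\ge 2$; the dominant triple contribution is therefore comparable to the pairwise one and is absorbed into the $e^{-2r\sin\frac{\pi}{2k}/\epsilon}$ term rather than producing a new one. In the paper's estimate (2.14) the $e^{-3(1-\tau)r/\epsilon}$ term arises instead from the \emph{potential} terms: the expansions (P), (Q) are only local, so after rescaling one must split $\int|P(x)-1|^2U_{x^j,\epsilon}^2$ into a near region $\{|x-x^j|\le c(1-\tau)r\}$ (where $|x|\lesssim r$ and $|P-1|\lesssim r^m$, giving $r^m\epsilon^{3/2}$) and a complementary far region where only boundedness of $P$ is available and the exponential decay of $U_{x^j,\epsilon}$ supplies the $\epsilon^{3/2}e^{-3(1-\tau)r/\epsilon}$ factor. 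Your treatment of the potential terms omits this far-region contribution entirely, asserting only the $Cr^m\epsilon^{3/2}$ piece. Correcting both of these points recovers the paper's argument.
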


\begin{proof}
By direct computations, we have
\begin{equation}\label{2.14}
\begin{array}{rl}
&\sum\limits_{j=1}^{2k}(-1)^{j-1}\ds\int_{\R^3}(P(x)-1)U_{x^j,\epsilon}\varphi+\sum\limits_{j=1}^{2k}(-1)^{j-1}\ds\int_{\R^3}(Q(x)-1)V_{x^j,\epsilon}\psi\\[5mm]
&\leq \sum\limits_{j=1}^{2k}\Big(\ds\int_{\R^3}|(P(x)-1)|^2U^2_{x^j,\epsilon}\Big)^\frac{1}{2}\Big(\ds\int_{\R^3}\varphi^2\Big)^\frac{1}{2}+
\sum\limits_{j=1}^{2k}\Big(\ds\int_{\R^3}|(Q(x)-1)|^2V^2_{x^j,\epsilon}\Big)^\frac{1}{2}\Big(\ds\int_{\R^3}\psi^2\Big)^\frac{1}{2}\\[5mm]
&\leq C\epsilon^\frac{3}{2}(r^m+e^{-\frac{3(1-\tau)r}{\epsilon}})\|\varphi\|_{\epsilon,P}
+C\epsilon^\frac{3}{2}(r^n+e^{-\frac{3(1-\tau)r}{\epsilon}})\|\psi\|_{\epsilon,Q}\\[5mm]
&\leq C(r^{\min\{m,n\}}+e^{-\frac{3(1-\tau)r}{\epsilon}})\epsilon^\frac{3}{2}\|(\varphi,\psi)\|_\epsilon,\\[5mm]
\end{array}
\end{equation}
\begin{equation}\label{2.15}\begin{array}{rl}
&\mu_1\ds\int_{\R^3}\Big(\sum\limits_{j=1}^{2k}(-1)^{j-1}
U_{x^j,\epsilon}^3-U_r^3\Big)\varphi+\mu_2\ds\int_{\R^3}\Big(\sum\limits_{j=1}^{2k}(-1)^{j-1}
V_{x^j,\epsilon}^3-V_r^3\Big)\psi\\[5mm]
&\leq C\epsilon^\frac{3}{2}e^{-\frac{|x^1-x^2|}{\epsilon}}\|(\varphi,\psi)\|_\epsilon
\end{array}
\end{equation}
and
\begin{equation}\label{2.16}\begin{array}{rl}
&\beta\ds\int_{\R^3}\Big(\sum\limits_{j=1}^{2k}(-1)^{j-1}V_{x^j,\epsilon}^2U_{x^j,\epsilon}-V_r^2U_r\Big)\varphi
+\beta\ds\int_{\R^3}\Big(\sum\limits_{j=1}^{2k}(-1)^{j-1}U_{x^j,\epsilon}^2V_{x^j,\epsilon}-U_r^2V_r\Big)\psi\\[5mm]
&\leq C\epsilon^\frac{3}{2}e^{-\frac{|x^1-x^2|}{\epsilon}}\|(\varphi,\psi)\|_\epsilon.\\[5mm]
\end{array}
\end{equation}
Combining \eqref{2.14}, \eqref{2.15}  \eqref{2.16} and the definition of $l$, we can deduce that
$$
\|l\|=O\big(r^{\min\{n,m\}}+e^{-\frac{3(1-\tau)r}{\epsilon}}+e^{-\frac{2r\sin\frac{\pi}{2k}}{\epsilon}}\big)
\epsilon^\frac{3}{2}.
$$
\end{proof}

\begin{proposition}\label{pro2.6}  For $\epsilon$ sufficiently small, there exists a $C^1-$ map $(\varphi,\psi)$ from $S_\epsilon$ to $H$:
$(\varphi,\psi):=(\varphi(r),\psi(r)), r=|x|$ satisfying $(\varphi,\psi) \in E$ and
$$
\Big\langle\frac{\partial J(\varphi,\psi)}{\partial (\varphi,\psi)}, (g,h)\Big\rangle=0, \forall (g,h) \in E.
$$
Moreover, there exists a small constant $0<\tau_2< \min\{\frac{1}{5}, \frac{\min\{n, m\}-1-\sigma}{\min\{n, m\}}\}$ such that
$$
\|(\varphi,\psi)\|_\epsilon \leq \big(r^{(1-\tau_2)\min\{m,n\}}+e^{-\frac{3(1-\tau_2)(1-\tau)r}{\epsilon}}+e^{-\frac{(1-\tau_2)2r\sin\frac{\pi}{2k}}{\epsilon}}\big)\epsilon^\frac{3}{2}.
$$

\end{proposition}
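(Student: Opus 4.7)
\begin{altproof}{Proposition \ref{pro2.6} (sketch)}
The plan is a standard Lyapunov--Schmidt / Banach fixed point argument. Testing the equation $\langle J'(\varphi,\psi),(g,h)\rangle=0$ for all $(g,h)\in E$ against the expansion \eqref{2.2} yields, in the sense of bounded linear functionals on $E$,
\[
L(\varphi,\psi) \;=\; -\,l \;-\; R'(\varphi,\psi),
\]
where $l$ and $R'(\varphi,\psi)$ are identified with elements of $E$ via the Riesz representation with respect to the inner product $(\cdot,\cdot)_\epsilon$. By Lemma \ref{lemma2.1} and Lemma \ref{lemma2.2}, $L:E\to E$ is a bounded linear isomorphism with $\|L^{-1}\|\le C_0^{-1}$ uniformly in $r\in S_\epsilon$ for small $\epsilon$. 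Thus the problem becomes the fixed point equation
\[
(\varphi,\psi)\;=\;T(\varphi,\psi)\;:=\;-L^{-1}\bigl(l+R'(\varphi,\psi)\bigr).
\]

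Set
\[
\rho_\epsilon := \bigl(r^{(1-\tau_2)\min\{m,n\}}+e^{-\frac{3(1-\tau_2)(1-\tau)r}{\epsilon}}+e^{-\frac{(1-\tau_2)\,2r\sin\frac{\pi}{2k}}{\epsilon}}\bigr)\epsilon^{3/2}
\]
and consider the closed ball $\mathcal{B}_\epsilon=\{(\varphi,\psi)\in E:\|(\varphi,\psi)\|_\epsilon\le\rho_\epsilon\}$. Since $r\in S_\epsilon$ forces $r\to 0$, the factor $(1-\tau_2)<1$ makes $\rho_\epsilon$ strictly larger (order-wise) than the bound on $\|l\|$ furnished by Lemma \ref{lemma2.5}; the room between the two is what absorbs the nonlinear term. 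The first task is then to verify $T(\mathcal{B}_\epsilon)\subset\mathcal{B}_\epsilon$. For this, combine $\|L^{-1}\|\le C_0^{-1}$ with Lemma \ref{lemma2.5} and the first estimate of Lemma \ref{lemma2.4} applied at $(\varphi,\psi)\in\mathcal{B}_\epsilon$:
\[
\|T(\varphi,\psi)\|_\epsilon \;\le\; C\,\|l\|+C\bigl(\epsilon^{-3/2}\rho_\epsilon^{2}+\epsilon^{-4}\rho_\epsilon^{3}\bigr).
\]
The condition $\tau_2<\min\{\tfrac{1}{5},\,(\min\{n,m\}-1-\sigma)/\min\{n,m\}\}$ together with $r\in S_\epsilon$ guarantees that $\epsilon^{-3/2}\rho_\epsilon^{2}+\epsilon^{-4}\rho_\epsilon^{3}=o(\rho_\epsilon)$, so both the linear ``source'' $C\|l\|$ and the nonlinear remainder fit inside $\rho_\epsilon$ for all sufficiently small $\epsilon$.

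For the contraction property, the second estimate of Lemma \ref{lemma2.4} gives, for $(\varphi_i,\psi_i)\in\mathcal{B}_\epsilon$,
\[
\|T(\varphi_1,\psi_1)-T(\varphi_2,\psi_2)\|_\epsilon \le C\,\sup_{\mathcal{B}_\epsilon}\|R''\|\cdot\|(\varphi_1-\varphi_2,\psi_1-\psi_2)\|_\epsilon \le C\bigl(\epsilon^{-3/2}\rho_\epsilon+\epsilon^{-4}\rho_\epsilon^{2}\bigr)\|(\varphi_1-\varphi_2,\psi_1-\psi_2)\|_\epsilon,
\]
and the same accounting shows the prefactor is $<\tfrac12$ for small $\epsilon$. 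Banach's theorem then produces a unique $(\varphi,\psi)\in\mathcal{B}_\epsilon$ with $T(\varphi,\psi)=(\varphi,\psi)$, which is the desired critical point of $J|_E$, and the stated norm bound is just $\|(\varphi,\psi)\|_\epsilon\le\rho_\epsilon$. The $C^1$ dependence on $r$ is obtained by applying the implicit function theorem to $F(r,(\varphi,\psi)):=(\varphi,\psi)+L^{-1}(l+R'(\varphi,\psi))$: at the fixed point, $\partial_{(\varphi,\psi)}F = I+L^{-1}R''(\varphi,\psi)$ is invertible because $\|L^{-1}R''(\varphi,\psi)\|<1$ by the contraction estimate, and $F$ is $C^1$ in $r$ since $l$ and $R'$ depend smoothly on the centers $x^j$. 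The principal difficulty is bookkeeping in the third paragraph: one must show that, uniformly in $r\in S_\epsilon$, the three terms in $\rho_\epsilon$ dominate the corresponding three terms in $\|l\|$ and still beat the nonlinear remainders $\epsilon^{-3/2}\rho_\epsilon^{2}$ and $\epsilon^{-4}\rho_\epsilon^{3}$; this is precisely what fixes the admissible range of $\tau_2$.
\end{altproof}
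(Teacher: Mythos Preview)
Your proposal is correct and follows essentially the same route as the paper: rewrite the critical-point equation as a fixed-point problem $(\varphi,\psi)=-L^{-1}(l+R'(\varphi,\psi))$, use Lemmas \ref{lemma2.2}, \ref{lemma2.4}, \ref{lemma2.5} to show the map is a self-map and a contraction on the ball of radius $\rho_\epsilon$, and conclude by Banach's theorem. Your treatment is in fact slightly more complete, since you sketch the $C^1$ dependence on $r$ via the implicit function theorem, which the paper asserts without argument.
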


\begin{proof}
It follows from Lemma \ref{lemma2.5} that $l$ is a bounded linear  functional in $E$. Thus there exists an $l^\prime \in E$ such that $l(\varphi,\psi)=\langle l^\prime,(\varphi,\psi)\rangle$. Thus finding a critical point for $J(\varphi,\psi)$ is equivalent to solving
\begin{equation}\label{2.17}
l^\prime + L(\varphi,\psi) +R^\prime(\varphi,\psi)=0.
\end{equation}
By Lemma \ref{lemma2.2}, $L$ is invertible. Hence \eqref{2.17} can be written as
\begin{equation}\label{2.18}
(\varphi,\psi)=A(\varphi,\psi):=-L^{-1}l^\prime-L^{-1}R^\prime(\varphi,\psi).
\end{equation}

We choose a small constant $0<\tau_2< \min\{\frac{1}{5}, \frac{\min\{n, m\}-1-\sigma}{\min\{n, m\}}\}$ and set
$$
S=\Big\{(\varphi,\psi) \in E:\|(\varphi,\psi)\|_\epsilon \leq \epsilon^\frac{3}{2}\big(r^{(1-\tau_2)\min\{m,n\}}+e^{-\frac{3(1-\tau_2)(1-\tau)r}{\epsilon}}
+e^{-\frac{(1-\tau_2)2r\sin\frac{\pi}{2k}}{\epsilon}}\big)\Big\}.
$$
For $\epsilon$ sufficiently small, we have
$$
\begin{array}{rl}
\|A(\varphi,\psi)\| &\leq C\|l^\prime \| +C\|R^\prime(\varphi,\psi)\|\\[5mm]
&\leq C\epsilon^\frac{3}{2}\big(r^{\min\{n,m\}}+e^{-\frac{3(1-\tau)r}{\epsilon}}+e^{-\frac{2r\sin\frac{\pi}{2k}}{\epsilon}}\big)
\\[5mm]
&\,\,\,\,\,+C(\epsilon^{-\frac{3}{2}}\|(\varphi,\psi)\|_\epsilon^2+\epsilon^{-4}\|(\varphi,\psi)\|_\epsilon^3)\\[5mm]
&\leq \epsilon^\frac{3}{2}\big(r^{(1-\tau_2)\min\{m,n\}}+e^{-\frac{3(1-\tau_2)(1-\tau)r}{\epsilon}}
+e^{-\frac{(1-\tau_2)2r\sin\frac{\pi}{2k}}{\epsilon}}\big),\,\,\,
\forall (\varphi,\psi) \in S,
\end{array}$$
which implies that $A$ is a map from $S$ to $S$.

On the other hand, for $\epsilon$ sufficiently small, we have
$$
\begin{array}{rl}
&|A(\varphi_1,\psi_1)-A(\varphi_2,\psi_2)|\\[5mm]
&\leq C|R^\prime(\varphi_1,\psi_1)-R^\prime(\varphi_2,\psi_2)|\\[5mm]
&\leq C\|R^{\prime\prime}(\lambda(\varphi_1,\psi_1)+(1-\lambda)(\varphi_2,\psi_2))\|\|(\varphi_1,\psi_1)-(\varphi_2,\psi_2))\|_\epsilon\\[5mm]
&\leq \ds\frac{1}{2}\|(\varphi_1,\psi_1)-(\varphi_2,\psi_2))\|_\epsilon.
\end{array}$$
Thus for $\epsilon$ sufficiently small, $A$ is a contraction map. Therefore we have proved that when $\epsilon$ is sufficiently small,
$A$ is a contraction map from $S$ to $S$. So  the results   follow from the  contraction mapping theorem. This completes the proof.

\end{proof}

Now we are ready to prove Theorem \ref{Th1}. Let $(\varphi_r,\psi_r)=(\varphi(r),\psi(r))$ be the map obtained in Proposition \ref{pro2.6}.
Define
$$F(r)=I_\epsilon(U_r+\varphi_r,V_r+\psi_r),\,\,\,\, r \in S_\epsilon.$$

With the same argument as in \cite{CNY,R}, we can easily check that if $r$ is a critical point of $F(r)$, then $(U_r+\varphi_r,V_r+\psi_r)$ is
a critical point of $I_\epsilon$.\\

{\textbf{Proof of Theorem \ref{Th3}}}\,\,\, It follows from Lemmas \ref{lemma2.1} and \ref{lemma2.4} that
$$
\|L(\varphi_r,\psi_r)\| \leq C\|(\varphi_r,\psi_r)\|_\epsilon,\,\,\,\, \|R(\varphi,\psi)\|\leq C\big(\epsilon^{-\frac{3}{2}}\|(\varphi,\psi)\|_\epsilon^3+\epsilon^{-4}\|(\varphi,\psi)\|_\epsilon^4\big).
$$
So from Lemma \ref{lemma2.5} and Proposition \ref{proA2}, we obtain that
$$
\begin{array}{rl}
F(r)&
=2k\epsilon^3\Big[A+aBr^m+bC_0r^n+C(\frac{\mu_1\alpha^4}{2}+\frac{\mu_2\gamma^4}{2}
+\beta\alpha^2\gamma^2)e^{-\frac{2r\sin\frac{\pi}{2k}}{\epsilon}}+O(r^{\min\{m-1,n-1\}}\epsilon)\Big].\\[5mm]
\end{array}
$$
Without loss of generality, we may as well assume that $n>m$. Therefore
$$F(r)=2k\epsilon^3\Big[A+aBr^m+Ce^{-\frac{2r\sin\frac{\pi}{2k}}{\epsilon}}+O(r^{m-1}\epsilon)\Big],$$
where $A,B,C$ are fixed positive constant.

Consider $\min\{F(r):r\in S_\epsilon\}$, where $S_\epsilon$ is defined in \eqref{s}.

Let $$f(r):= aBr^m +
Ce^{-\frac{2rsin\frac{\pi}{2k}}{\epsilon}}.$$

 By the assumption, we know that $a>0$. So by direct calculation, we can  get that $f(r)$ has a local minimum
point
$$\bar{r}=\frac{m+o_\epsilon(1)}{2\sin\frac{\pi}{2k}}\epsilon\ln\frac{1}{\epsilon}.$$
So there exists $\epsilon_0>0$ such that for any $\epsilon \in
(0,\epsilon_0]$, there is $r_0 \in S_\epsilon$ such that
$f^\prime(r_0)=0$.

By direct computation, we can obtain that
\begin{eqnarray*}
F(\bar{r})&=&2k\epsilon^3\Big[A+\Big(\frac{m+o_\epsilon(1)}{2\sin\frac{\pi}{2k}}\Big)^maB\Big(\epsilon\ln\ds\frac{1}{\epsilon}\Big)^m
+\frac{m aB}{2\sin\frac{\pi}{2k}}r^{m-1}\epsilon+O\left(r^{m-1}\epsilon\right)\Big]\\
&=&2k\epsilon^3\Big[A+\Big(aB\Big(\frac{m}{2\sin\frac{\pi}{2k}}\Big)^m+o_\epsilon(1)\Big)\Big(\epsilon\ln\ds\frac{1}{\epsilon}\Big)^m\Big].
\end{eqnarray*}
On the other hand, we also have
\begin{eqnarray*}
F\Big(\frac{m-\delta}{2\sin\frac{\pi}{2k}}\epsilon\ln\frac{1}{\epsilon}\Big)
&=&2k\epsilon^3\Big[A+aB\Big(\frac{m-\delta}{2\sin\frac{\pi}{2k}}\Big)^m\Big(\epsilon\ln\ds\frac{1}{\epsilon}\Big)^m
+C\epsilon^{m-\delta}+O(r^{m-1}\epsilon)\Big]\\
&\geq&2k\epsilon^3(A+C\epsilon^{m-\delta})
\end{eqnarray*}
and
\begin{eqnarray*}
F\Big(\frac{m+\delta}{2\sin\frac{\pi}{2k}}\epsilon\ln\frac{1}{\epsilon}\Big)&=&2k\epsilon^3\Big[A+aB\Big(\frac{m+\delta}{2\sin\frac{\pi}{2k}}\Big)^m\Big(
\epsilon\ln\frac{1}{\epsilon}\Big)^m
+C\epsilon^{m+\delta}+O(r^{m-1}\epsilon)\Big]\\
&=&2k\epsilon^3\Big[A+\Big(aB\Big(\frac{m+\delta}{2\sin\frac{\pi}{2k}}\Big)^m+o_\epsilon(1)\Big)\Big(\epsilon\ln\ds\frac{1}{\epsilon}\Big)^m\Big].
\end{eqnarray*}
Hence, $F(r)$  has a local minimum point $r_\epsilon$  in
$S_\epsilon$, and $ r_\epsilon$ is an interior point of
$S_\epsilon$. Thus $r_\epsilon$ is a critical point of $F(r)$.
As a result, $(U_{r_\epsilon}+\varphi_{r_\epsilon},V_{r_\epsilon}+\psi_{r_\epsilon}) $ is a solution
of \eqref{1.1}.

For the case $m>n,$  the same method can be used to prove the result.

For the  case $m=n$, then
$$F(r)=2k\epsilon^3\Big[A+(aB+bC_0)r^m+Ce^{-\frac{2rsin\frac{\pi}{2k}}{\epsilon}}+O(r^{m-1}\epsilon)\Big].$$
And let
$$f(r)=(aB+bC_0)r^m+Ce^{-\frac{2rsin\frac{\pi}{2k}}{\epsilon}}.$$
Using the above methods, we can prove the result.
This completes the proof.

\section{Segregated Vector Solutions and the proof of Theorem \ref{Th2}}\label{s3}
\def\theequation{3.\arabic{equation}}\makeatother
\setcounter{equation}{0}

In this section we consider segregated vector solutions and prove Theorem \ref{Th2} by proving Theorem \ref{Th4}. Let
$$\widetilde{Y}_j=\frac{\partial U_{1,x^j,\epsilon}}{\partial r},\,\,\,\widetilde{Z}_j=\frac{\partial U_{2,y^j,\epsilon}}{\partial \rho},\,\,\,j=1,2,\cdots,2k,$$
where $x^j$ and $y^j$ are defined in \eqref{1.3}, \eqref{1.4} respectively.

For simplicity of notation, in the sequel we use $U_{1,x^j,\epsilon}$ and $U_{2,y^j,\epsilon}$ to replace $U_{x^j,\epsilon}$ and
$V_{x^j,\epsilon}$ respectively. In this section, we assume

\begin{equation}\label{3.1}
(r,\rho) \in S_\epsilon \times S_\epsilon.
\end{equation}

Define
\begin{equation}\label{3.2}
\tilde{E}=\Big\{(\varphi,\psi)\in H: \sum\limits_{j=1}^{2k}\int_{\R^3}U^2_{1,x^j,\epsilon}\widetilde{Y}_j\varphi=0,\,\,\sum\limits_{j=1}^{2k}\int_{\R^3}U^2_{2,y^j,\epsilon}\widetilde{Z}_j\psi=0\Big\}.
\end{equation}

Let
$$\tilde{J}(\tilde{\varphi},\tilde{\psi})=I_\epsilon(\tilde{U}_r+\tilde{\varphi},\tilde{V}_\rho+\tilde{\psi}),\,\,\,(\tilde{\varphi},\tilde{\psi})
\in \tilde{E}.$$

Then, similar to \eqref{2.2}, $\tilde{J}(\tilde{\varphi},\tilde{\psi})$ has the following expansion:
$$\tilde{J}(\tilde{\varphi},\tilde{\psi})=\tilde{J}(0,0)+\tilde{l}(\tilde{\varphi},\tilde{\psi})+\frac{1}{2}\widetilde{Q}
(\tilde{\varphi},\tilde{\psi})+\tilde{R}(\tilde{\varphi},\tilde{\psi}),\,\,(\tilde{\varphi},\tilde{\psi})\in \tilde{E},$$
where $\widetilde{Q}
(\tilde{\varphi},\tilde{\psi})$ and $\tilde{R}(\tilde{\varphi},\tilde{\psi})$ are the same as $Q(\varphi,\psi)$ and $R(\varphi,\psi)$
in section 2 if $U_{x^j,\epsilon},V_{x^j,\epsilon},\varphi,$ and $\psi$ are replaced by $U_{1,x^j,\epsilon},U_{2,y^j,\epsilon},\tilde{\varphi},$ and $\widetilde{\psi}$ respectively. We note that there exists a bounded linear
operator $\tilde{B}_\epsilon:\tilde{E}\to\tilde{E}$ corresponding to $\tilde{Q}
(\tilde{\varphi},\tilde{\psi})$.

Note that $\tilde{l}(\tilde{\varphi},\tilde{\psi})$ has the following form
$$
\begin{array}{rl}
&\tilde{l}(\tilde{\varphi},\tilde{\psi})\\[5mm]
&=\ds\sum\limits_{j=1}^{2k}(-1)^{j-1}\ds\int_{\R^3}(P(|x|)-1)U_{1,x^j,\epsilon}\tilde{\varphi}
-\mu_1\ds\int_{\R^3}\Big(\tilde{U}_r^3-\sum\limits_{j=1}^{2k}(-1)^{j-1}U^3_{1,x^j,\epsilon}\Big)\tilde{\varphi}\\[5mm]
&\quad+\ds\sum\limits_{j=1}^{2k}(-1)^{j-1}\ds\int_{\R^3}(Q(|x|)-1)U_{2,y^j,\epsilon}\tilde{\psi}
-\mu_2\ds\int_{\R^3}\Big(\tilde{V_\rho}^3-\sum\limits_{j=1}^{2k}(-1)^{j-1}U^3_{2,y^j,\epsilon}\Big)\tilde{\psi}\\[5mm]
&\quad-\beta\ds\int_{\R^3}(\tilde{U}_r\tilde{V}_\rho^2\tilde{\varphi}+\tilde{U}_r^2\tilde{V}_\rho\tilde{\psi}).
\end{array}$$

From the above analysis, we have the following lemma:
\begin{lemma}\label{lm3.1}
There exists a constant $C>0$,independent of $\epsilon$, such that for any $(r,\rho) \in S_\epsilon \times S_\epsilon$
$$\|\tilde{B_\epsilon}(\varphi,\psi)\| \leq C \|(\varphi,\psi)\|_\epsilon,\,\,\,\,(\varphi,\psi)\in\tilde{E}.$$

\end{lemma}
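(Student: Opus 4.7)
The plan is to mimic the proof of Lemma~\ref{lemma2.1} from the synchronized setting, since $\tilde B_\epsilon$ stands in exactly the same relation to the quadratic form $\tilde Q$ as $L$ does to $Q$. Concretely, polarizing $\tilde Q(\tilde\varphi,\tilde\psi)$ produces a symmetric bilinear form $\tilde Q((u,v),(\varphi,\psi))$ on $\tilde E\times \tilde E$; the operator $\tilde B_\epsilon$ is then defined by the identity $\langle \tilde B_\epsilon(u,v),(\varphi,\psi)\rangle=\tilde Q((u,v),(\varphi,\psi))$. So the claim reduces to proving the bilinear bound
$$
\bigl|\tilde Q((u,v),(\varphi,\psi))\bigr|\le C\|(u,v)\|_\epsilon\,\|(\varphi,\psi)\|_\epsilon,
$$
with $C$ independent of $\epsilon$ and of $(r,\rho)\in S_\epsilon\times S_\epsilon$, after which the Riesz representation theorem in the Hilbert space $(\tilde E,\|\cdot\|_\epsilon)$ delivers the conclusion.

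The key inputs are then uniform $L^\infty$ bounds on the approximate solutions: since $\tilde U_r=\sum_{j=1}^{2k}(-1)^{j-1}U_{1,x^j,\epsilon}$ and the translated scaling leaves $\|U_{1,x^j,\epsilon}\|_\infty=\|U_1\|_\infty$ invariant, one has $\|\tilde U_r\|_\infty\le 2k\|U_1\|_\infty$, and similarly $\|\tilde V_\rho\|_\infty\le 2k\|U_2\|_\infty$, both independent of $\epsilon$ and of $(r,\rho)$. Coupled with the positivity hypotheses $\inf P>0$, $\inf Q>0$, which give $\|\cdot\|_{L^2}\le C\|\cdot\|_{\epsilon,P}$ and $\|\cdot\|_{L^2}\le C\|\cdot\|_{\epsilon,Q}$, these bounds control every term in $\tilde Q$.

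I would then treat each term separately: the ``quadratic part'' $\int(\epsilon^2\nabla u\cdot\nabla\varphi+P(x)u\varphi)+\int(\epsilon^2\nabla v\cdot\nabla\psi+Q(x)v\psi)$ is handled by Cauchy--Schwarz directly in the weighted norms; the self-interaction terms $\int 3\mu_1\tilde U_r^2 u\varphi$ and $\int 3\mu_2\tilde V_\rho^2 v\psi$ are bounded by $\|\tilde U_r\|_\infty^2\|u\|_{L^2}\|\varphi\|_{L^2}$ and its analogue; and the coupling terms $\int\beta(\tilde U_r^2 v\psi+\tilde V_\rho^2 u\varphi+2\tilde U_r\tilde V_\rho(u\psi+v\varphi))$ are handled the same way, using $\|\tilde U_r\tilde V_\rho\|_\infty\le\|\tilde U_r\|_\infty\|\tilde V_\rho\|_\infty\le C$. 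Summing these estimates produces the bilinear bound above.

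There is no essential obstacle: the proof is a routine bilinear estimate that does not exploit the specific geometric configuration of the points $x^j,y^j$ (that will be needed only for the invertibility lemma to follow). The single point to verify carefully is that all constants remain uniform in $(r,\rho)\in S_\epsilon\times S_\epsilon$, and this is automatic because only $L^\infty$-norms of shifted copies of $U_1,U_2$ appear, which are translation-invariant.
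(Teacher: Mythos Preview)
Your proposal is correct and follows exactly the same approach as the paper: the paper does not give a separate proof of Lemma~\ref{lm3.1} (nor of Lemma~\ref{lemma2.1}), but simply observes beforehand that the associated bilinear form is bounded on $\tilde E$ and invokes Riesz to obtain the bounded operator $\tilde B_\epsilon$. You have merely spelled out the routine Cauchy--Schwarz/$L^\infty$ estimates that the paper leaves implicit, and your observation that the constants are uniform in $(r,\rho)$ because only translation-invariant $L^\infty$ norms of $U_1,U_2$ enter is precisely the point.
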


\begin{lemma}\label{lm3.2}
There exist $\epsilon_0 >0,\beta_0>0$ and $C_0>0$ such that for any $\beta<\beta_0$ and  any $\epsilon \in(0,\epsilon_0), (r,\rho)\in S_\epsilon \times S_\epsilon,$
we have
$$\|\tilde{B_\epsilon}(\varphi,\psi)\| \geq C_0 \|(\varphi,\psi)\|_\epsilon,\,\,\,\,(\varphi,\psi)\in\tilde{E}.$$
\end{lemma}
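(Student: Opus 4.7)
The plan is to mimic the contradiction argument used in Lemma \ref{lemma2.2}, but adapted to the segregated geometry in which the two sets of peaks $\{x^j\}$ and $\{y^j\}$ are spatially separated at scale $\epsilon\ln(1/\epsilon)$. Suppose the conclusion fails: then there are sequences $\epsilon_n\to 0^+$, $(r_n,\rho_n)\in S_{\epsilon_n}\times S_{\epsilon_n}$, and $(\tilde\varphi_n,\tilde\psi_n)\in\tilde E$ normalized by $\|(\tilde\varphi_n,\tilde\psi_n)\|_{\epsilon_n}^2=\epsilon_n^3$ with $\|\tilde B_{\epsilon_n}(\tilde\varphi_n,\tilde\psi_n)\|=o_n(1)\epsilon_n^{3/2}$. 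Testing against arbitrary $(\varphi,\psi)\in\tilde E$ and taking $(\varphi,\psi)=(\tilde\varphi_n,\tilde\psi_n)$ yields the quadratic identity whose right-hand side is $o_n(1)\epsilon_n^3$, exactly as in \eqref{2.6}.

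Next, I would perform the rescaling at \emph{each} center separately: set $\tilde u_n^{(1)}(y):=\tilde\varphi_n(\epsilon_n y+x^1)$, $\tilde v_n^{(1)}(y):=\tilde\psi_n(\epsilon_n y+x^1)$, and analogously $\tilde u_n^{(2)},\tilde v_n^{(2)}$ centered at $y^1$. The essential geometric observation is that the nearest $y^j$ to $x^1$ sits at distance $|x^1-y^j|\gtrsim \epsilon_n\ln(1/\epsilon_n)$, so after rescaling that distance is $\ln(1/\epsilon_n)\to\infty$; consequently $\tilde V_\rho(\epsilon_n y+x^1)\to 0$ and $\tilde U_r(\epsilon_n y+x^1)\to U_1$ uniformly on compacta, and similarly at $y^1$ with the roles swapped. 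Passing to weak limits $(u^{(1)},v^{(1)})$ and $(u^{(2)},v^{(2)})$ in $H^1_{\mathrm{loc}}$ and using the test-function procedure of Lemma \ref{lemma2.2} (extended by the $y_2,y_3$-evenness to arbitrary admissible test pairs), the cross products $\tilde U_r\tilde V_\rho$ and $\tilde V_\rho^2$ (resp.\ $\tilde U_r^2$ near $y^1$) drop out in the limit, so the linearized system decouples into
\[
-\Delta u^{(1)}+u^{(1)}-3\mu_1 U_1^2 u^{(1)}=0,\qquad -\Delta v^{(1)}+v^{(1)}-\beta U_1^2 v^{(1)}=0,
\]
with the analogous pair at $y^1$ involving $U_2$ and the coefficients $3\mu_2$ and $\beta$.

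Non-degeneracy of the single equations (the scalar version of the fact cited from \cite{PW}) in the $y_2,y_3$-even class forces $u^{(1)}=c_1\,\partial_{x_1}U_1$ and $v^{(2)}=c_2\,\partial_{x_1}U_2$ (after accounting for the radial direction at each center). For the cross companions $v^{(1)}$ and $u^{(2)}$, the operator $-\Delta+1-\beta U_i^2$ is strictly positive whenever $\beta<\min\{\mu_1,\mu_2\}$ (since $-\Delta+1-\mu_i U_i^2$ already has $0$ as its ground-state eigenvalue with eigenfunction $U_i$, so adding $(\mu_i-\beta)U_i^2$ shifts it to the positive half-line); hence $v^{(1)}\equiv 0$ and $u^{(2)}\equiv 0$. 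Choosing $\beta_0$ to be any positive number below $\min\{\mu_1,\mu_2\}$ secures this step. Finally, the orthogonality conditions defining $\tilde E$, namely $\sum_j\int U_{1,x^j,\epsilon_n}^2 \tilde Y_j\tilde\varphi_n=0$ and $\sum_j\int U_{2,y^j,\epsilon_n}^2\tilde Z_j\tilde\psi_n=0$, pass to the limit at each center by symmetry and yield $c_1=c_2=0$, so $(u^{(1)},v^{(1)})=(u^{(2)},v^{(2)})=(0,0)$.

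From the local vanishing one deduces $\int U_{1,x^j,\epsilon_n}^2\tilde\varphi_n^2+\int U_{2,y^j,\epsilon_n}^2\tilde\psi_n^2=o_n(1)\epsilon_n^3$ via the standard decomposition near/away from the peaks, so the quadratic form evaluated on $(\tilde\varphi_n,\tilde\psi_n)$ reduces to $(1+o_n(1))\epsilon_n^3$, contradicting the bound $o_n(1)\epsilon_n^3$. The main obstacle I foresee is controlling the $\beta$-cross terms carefully enough to ensure they vanish after rescaling---one must use both the exponential decay of $w$ and the geometric fact that $\vec{oy^j}$ bisects $\angle x^joy^{j+1}$ (so that for every center of one family, all peaks of the other family lie at distance $\gtrsim \sin(\pi/(4k))\cdot r_n$), together with the upper bound $\beta<\beta_0$ to ensure the limiting off-diagonal equations have only the trivial solution.
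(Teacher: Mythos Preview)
Your argument is correct, but it takes a genuinely different route from the paper's. The paper rescales only $\varphi_n$ at $x^1$ and $\psi_n$ at $y^1$ (two limits, not four); since $\tilde V_\rho$ vanishes near $x^1$ and $\tilde U_r$ vanishes near $y^1$, the limiting equations are simply the scalar linearizations $-\Delta u+u-3\mu_1U_1^2u=0$ and $-\Delta v+v-3\mu_2U_2^2v=0$, and scalar non-degeneracy together with the orthogonality in $\tilde E$ gives $u=v=0$. The paper never studies the ``cross'' limits $v^{(1)},u^{(2)}$ at all. Consequently, in the final contradiction step the paper cannot make the $\beta$-cross terms $\int(\tilde U_r^2\psi_n^2+\tilde V_\rho^2\varphi_n^2+4\tilde U_r\tilde V_\rho\varphi_n\psi_n)$ vanish; instead it bounds them crudely by $C\beta\|(\varphi_n,\psi_n)\|_{\epsilon_n}^2$ (using $\|\tilde U_r\|_\infty,\|\tilde V_\rho\|_\infty\le C$ and Lemma~\ref{LemmaA3}), and then takes $\beta_0:=1/C$ so that $1-C\beta>0$ yields the contradiction.

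Your approach does more work up front---analyzing the extra limits $v^{(1)},u^{(2)}$ via the positivity of $-\Delta+1-\beta U_i^2$ for $\beta<\mu_i$---and in return gets a cleaner endgame (all weighted terms become $o_n(1)\epsilon_n^3$) and an explicit threshold $\beta_0=\min\{\mu_1,\mu_2\}$, which is typically sharper than the paper's implicit $1/C$. The paper's approach is shorter and avoids the spectral argument for the cross operator, at the price of a less explicit $\beta_0$. Both are valid; yours is arguably more informative.
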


\begin{proof}
The argument is similar to Lemma \ref{lemma2.2}. We argue by contradiction. Suppose that there are $\epsilon_n \to 0^+, (r_n,\rho_n)\in S_{\epsilon_n} \times S_{\epsilon_n}$
and $(\varphi_n,\psi_n) \in \tilde{E}$ with $\|(\varphi_n,\psi_n)\|^2_{\epsilon_n}=\epsilon_n^3$ satisfying
\begin{equation}\label{3.3}
\langle \tilde{B}_\epsilon(\varphi_n,\psi_n),(g,h)\rangle=o_n(1)\|(\varphi_n,\psi_n)\|_{\epsilon_n}\|(g,h)\|_{\epsilon_n},\,\,\,\forall (g,h)\in\tilde{E}.
\end{equation}
That is,
\begin{equation}\label{3.4}
\begin{split}
\begin{array}{rl}
&\ds\int_{\R^3}(\epsilon^2_n\nabla\varphi_n\nabla g +P(x)\varphi_n g-3\mu_1\tilde{U}_r^2\varphi_n g)\\[5mm]
&+\ds\int_{\R^3}(\epsilon^2_n\nabla\psi_n\nabla h +Q(x)\psi_n h-3\mu_2\tilde{V}_\rho^2\psi_n h)\\[5mm]
&-\beta\ds\int_{\R^3}(\tilde{U}_r^2\psi_n h+\tilde{V}_\rho^2\varphi_n g+2\tilde{U}_r\tilde{V}_\rho\varphi_n h+2\tilde{U}_r\tilde{V}_\rho\psi_n g)\\[5mm]
&=o_n(1)\|(\varphi_n,\psi_n)\|_{\epsilon_n}\|(g,h)\|_{\epsilon_n},,\,\,\,\forall (g,h)\in\tilde{E}.
\end{array}
\end{split}
\end{equation}
In particular, we have
\begin{equation}\label{3.5}
\begin{split}
\begin{array}{rl}
&\ds\int_{\R^3}(\epsilon^2_n|\nabla\varphi_n|^2 +P(x)|\varphi_n|^2-3\mu_1\tilde{U}_r^2\varphi_n^2)\\[5mm]
&+\ds\int_{\R^3}(\epsilon^2_n|\nabla\psi_n|^2 +Q(x)\psi_n^2-3\mu_2\tilde{V}_\rho^2\psi_n^2)\\[5mm]
&-\beta\ds\int_{\R^3}(\tilde{U}_r^2\psi_n^2+\tilde{V}_\rho^2\varphi_n^2+4\tilde{U}_r\tilde{V}_\rho\varphi_n \psi_n )\\[5mm]
&=o_n(1)\epsilon^3_n
\end{array}
\end{split}
\end{equation}
and
$$\ds\int_{\R^3}(\epsilon^2_n|\nabla\varphi_n|^2 +P(x)|\varphi_n|^2+\epsilon^2_n|\nabla\psi_n|^2 +Q(x)\psi_n^2)=\epsilon^3_n.$$

We set $\tilde{u}_n(x)=\varphi_n(\epsilon_n x+x^1),\,\,\,\tilde{v}_n(x)=\psi_n(\epsilon_n x+y^1)$. Then we have
$$\ds\int_{\R^3}(|\nabla\tilde{u}_n(x)|^2 +P(\epsilon_n x+x^1)|\tilde{u}_n(x)|^2+|\nabla\tilde{v}_n(x)|^2 +Q(\epsilon_n x+y^1)|\tilde{v}_n(x)|^2)=1.$$
Upon passing to a subsequence, we may as well assume that there exist $u,v\in H^1(\R^3)$ such that as $n \to +\infty$
$$\tilde{u}_n(x) \to u~~~\hbox{weakly in}~H^1_{loc}(\R^3),~~~\tilde{u}_n(x) \to u~~~\hbox{strongly in}~L^2_{loc}(\R^3),$$
$$\tilde{v}_n(x) \to v~~~\hbox{weakly in}~H^1_{loc}(\R^3),~~~\tilde{v}_n(x) \to v~~~\hbox{strongly in}~L^2_{loc}(\R^3).$$
Moreover, $u$ and $v$ satisfy
$$
\ds\int_{\R^3}\Big(\nabla\frac{\partial U_1}{\partial x_1}\nabla u+\frac{\partial U_1}{\partial x_1} u\Big)=0,
\,\,\,\ds\int_{\R^3}\Big(\nabla\frac{\partial U_2}{\partial x_1}\nabla v+\frac{\partial U_2}{\partial x_1} v\Big)=0.
$$

We claim that $u$ and $v$ satisfy
$$-\Delta u +u -3\mu_1U_1^2u=0,\,\,\,-\Delta v +v -3\mu_2U_2^2v=0.$$

Let $\tilde{\varphi}(x)\in C^\infty_0(B_R(0))$ and be even in $y_2$ and $y_3$. Define $ \tilde{\varphi}_n(x):=\tilde{\varphi}(\frac{ x-x^1}{\epsilon_n})
\in C^\infty_0(B_{\epsilon_n R}(x^1))$. Then inserting $( \tilde{\varphi}_n(x),0)$ into \eqref{3.4} and preceding as we have done in Lemma \ref{lemma2.2}, we can see that $u$ satisfies
$$-\Delta u +u -3\mu_1U_1^2u=0~~~~~\hbox{in}~~\R^3.$$
Also, by the non-degeneracy of $U_1$, we find that $u=0$.
In the same way, we also find that $v=0$.

As a result,
$$\ds\int_{B_R(-x^1)}\varphi_n^2=o_n(1)\epsilon^3_n,\,\,\,\,\ds\int_{B_R(-y^1)}\psi_n^2=o_n(1)\epsilon^3_n,\,\,\,\forall R>0.$$

Thus, it follows from \eqref{3.5} and Lemma \ref{LemmaA3} that
\begin{equation}\label{3.6}
\begin{split}
\begin{array}{rl}
o_n(1)\epsilon^3_n&=\ds\int_{\R^3}(\epsilon^2_n|\nabla\varphi_n|^2 +P(x)|\varphi_n|^2-3\mu_1\tilde{U}_r^2\varphi_n^2)\\[5mm]
&\,\,\,\,\,\,\,\,+\ds\int_{\R^3}(\epsilon^2_n|\nabla\psi_n|^2 +Q(x)\psi_n^2-3\mu_2\tilde{V}_\rho^2\psi_n^2)\\[5mm]
&\,\,\,\,\,\,\,\,-\beta\ds\int_{\R^3}(\tilde{U}_r^2\psi_n^2+\tilde{V}_\rho^2\varphi_n^2+4\tilde{U}_r\tilde{V}_\rho\varphi_n \psi_n )\\[5mm]
&\geq \|(\tilde{\varphi}_n,\tilde{\psi}_n)\|^2_{\epsilon_n}-C\beta\|(\tilde{\varphi}_n,\tilde{\psi}_n)\|^2_{\epsilon_n}
+\epsilon^3_n(o_n(1)+o_R(1)).
\end{array}
\end{split}
\end{equation}
If $\beta <\beta_0:=\frac{1}{C}$, and for large $n$ and large $R$, we get a contradiction. So the result in this Lemma is true. This completes the proof.
\end{proof}

From \eqref{2.14}, \eqref{2.15} and Lemma \ref{LemmaA3}, we can get the following Lemma.
\begin{lemma}\label{lm3.3}
There exists a small constant $\tilde{\tau}_1 \in D$ such that
$$
\begin{array}{rl}
\|\tilde{l}\|&=O\Big(r^m+\rho^n+e^{-\frac{3(1-\tilde{\tau}_1)r}{\epsilon}}+e^{-\frac{3(1-\tilde{\tau}_1)\rho}{\epsilon}}
+e^{-\frac{2r\sin\frac{\pi}{2k}}{\epsilon}}\\[5mm]
&\quad\quad+e^{-\frac{2\rho\sin\frac{\pi}{2k}}{\epsilon}}
+\frac{\beta}{(\ln\frac{1}{\epsilon})^\frac{1}{6}}
e^{-\frac{\sqrt{(\rho-r\cos\frac{\pi}{2k})^2+\epsilon^\frac{3}{2}(r\sin\frac{\pi}{2k})^2}}{\epsilon}}\Big),
\end{array}
$$
where $D$ has been defined in Lemma \ref{lemma2.5}.
\end{lemma}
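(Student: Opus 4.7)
The plan is to mirror the computation carried out in Lemma \ref{lemma2.5}, but with three rather than two groups of terms: (i) the potential‐correction terms coming from $P-1$ and $Q-1$, (ii) the intra‐cluster cubic interactions between peaks of the same species, and (iii) the new inter‐cluster cross terms weighted by $\beta$. Writing $\tilde l(\tilde\varphi,\tilde\psi)=\tilde l_1+\tilde l_2+\tilde l_3$ according to this decomposition and applying Cauchy–Schwarz in each case, it suffices to produce $L^2$–type bounds on the coefficient functions obtained after pulling out $\tilde\varphi$ or $\tilde\psi$, since $\|(\tilde\varphi,\tilde\psi)\|_\epsilon$ factors out as in Lemma \ref{lemma2.5}.

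For $\tilde l_1$, assumptions $(P)$ and $(Q)$ give $P(|x|)-1=a|x|^m+O(|x|^{m+\theta})$ and similarly for $Q$, so after centering at $x^j$ (resp.\ $y^j$) and using $r,\rho\in S_\epsilon$, the same computation as in \eqref{2.14} yields an $r^m+\rho^n$ interior contribution together with an exponentially small exterior contribution of order $e^{-3(1-\tilde\tau_1)r/\epsilon}+e^{-3(1-\tilde\tau_1)\rho/\epsilon}$ for some $\tilde\tau_1\in D$, controlled by the decay rate of $U_1,U_2$ and the choice of the cut between near and far regions. For $\tilde l_2$, I would expand $\tilde U_r^3-\sum(-1)^{j-1}U_{1,x^j,\epsilon}^3$ as a sum of cubic cross products of the form $U_{1,x^j,\epsilon}^2U_{1,x^i,\epsilon}$ with $i\neq j$. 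By the exponential decay of $U_1$ and standard convolution-type estimates (the nearest same-species distance being $2r\sin(\pi/2k)$), these are dominated by $e^{-2r\sin(\pi/2k)/\epsilon}$; the analogous estimate holds for the $\tilde V_\rho^3$ contribution, giving $e^{-2\rho\sin(\pi/2k)/\epsilon}$.

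The main obstacle is the third block $\tilde l_3=-\beta\int(\tilde U_r\tilde V_\rho^2\tilde\varphi+\tilde U_r^2\tilde V_\rho\tilde\psi)$. Here the two clusters sit on different circles and at shifted angles, so the minimal distance between $x^j$ and its two neighboring $y^{j'}$'s is $\sqrt{(\rho-r\cos(\pi/2k))^2+(r\sin(\pi/2k))^2}$ modulo a standard Laplace–type refinement: integrating the product $U_{1,x^j,\epsilon}U_{2,y^{j'},\epsilon}^2$ (or its symmetric counterpart) against a test function localizes the mass near the segment joining the two peaks, and the quadratic expansion of the exponent transverse to that segment turns the $(r\sin(\pi/2k))^2$ contribution into a Gaussian integral whose residual size is precisely the $\epsilon^{3/2}$ correction appearing in the statement. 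This is exactly what the Appendix result (invoked here as Lemma \ref{LemmaA3}) is designed to supply, so the program is to reduce $\tilde l_3$ via Cauchy–Schwarz to an $L^2$–norm of $\tilde U_r\tilde V_\rho^2$ and its mirror, apply Lemma \ref{LemmaA3} to each pair $(x^j,y^{j'})$, and sum over the $2k$ pairs; the prefactor $1/(\ln(1/\epsilon))^{1/6}$ will emerge from the polynomial correction in that Laplace estimate once $r,\rho=O(\epsilon\ln(1/\epsilon))$ is substituted.

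Collecting the three blocks and taking the supremum over unit vectors $(\tilde\varphi,\tilde\psi)\in\tilde E$ gives the stated bound on $\|\tilde l\|$. The choice of $\tilde\tau_1\in D$ is fixed exactly as in Lemma \ref{lemma2.5}, since the same algebraic inequality $(1-\tau)(2-\tau)\ge \tfrac{11\sqrt{2}}{10}$ governs the far-field estimates used to close the Cauchy–Schwarz step. The genuinely new ingredient, and the only delicate point, is thus the sharp cross-cluster estimate in block (iii); everything else is a line-by-line adaptation of the synchronized-case argument.
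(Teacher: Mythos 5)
Your overall approach is the same as the paper's: in the body of the paper the lemma is stated without a displayed proof, preceded only by the sentence ``From \eqref{2.14}, \eqref{2.15} and Lemma \ref{LemmaA3}, we can get the following Lemma,'' which is precisely your three-block decomposition (potential corrections via the analogue of \eqref{2.14}, same-species cubic cross terms via the analogue of \eqref{2.15}, and the new $\beta$-weighted cross-cluster block via Lemma \ref{LemmaA3} after Cauchy--Schwarz). The structure and the reduction to those three estimates match.

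Two small points in your handling of the $\beta$ block, however, are off. First, the source of the prefactor $(\ln\frac{1}{\epsilon})^{-1/6}$ is not a Laplace/stationary-phase refinement. In the proof of Lemma \ref{LemmaA3}, the $o_\epsilon(1)$ is produced elementarily: splitting $\omega_1=\omega_1'\cup\omega_1''$ at radius $\epsilon(\ln\frac1\epsilon)^{1/3}$ and using the algebraic decay $r^{-(N-1)/2}$ in the asymptotics of $U_i$ gives, in \eqref{A.8} and \eqref{A.9}, a factor $(\ln\frac1\epsilon)^{-1/3}$ in front of $\epsilon^3 e^{-2|x^i-y^j|/\epsilon}$. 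After Cauchy--Schwarz on $\tilde l_3$ (note $\tilde V_\rho^4\le C\,\tilde V_\rho^2$ so that $\int \tilde U_r^2\tilde V_\rho^4$ is dominated by $\int\tilde U_r^2\tilde V_\rho^2$, to which Lemma \ref{LemmaA3} applies), you take a square root and $(\ln\frac1\epsilon)^{-1/3}$ becomes $(\ln\frac1\epsilon)^{-1/6}$; substituting $r,\rho=O(\epsilon\ln\frac1\epsilon)$ plays no role in producing this factor. Second, the $\epsilon^{3/2}$ inside the square root of the exponent in the statement is not something to be derived: it is a typographical error. The same quantity appears without the $\epsilon^{3/2}$ both in Proposition \ref{pro3.4} (which consumes this bound) and in Proposition \ref{proA4}, and since $|x^1-y^1|^2=(\rho-r\cos\frac{\pi}{2k})^2+(r\sin\frac{\pi}{2k})^2$ exactly, the exponent is simply $-|x^1-y^1|/\epsilon$. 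Your attempt to obtain an $\epsilon$-dependent coefficient inside the exponent by a transverse Gaussian expansion is therefore both unnecessary and incorrect; such an expansion would at most contribute a multiplicative power of $\epsilon$, not alter the exponent itself. You should simply read off the cross-cluster decay from Lemma \ref{LemmaA3} as written.
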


\begin{proposition}\label{pro3.4}
For $\epsilon>0$ sufficiently small, There exists a $C^1$-map $(\tilde{\varphi},\tilde{\psi})$ from $S_\epsilon \times S_\epsilon$ to $H$:$(\tilde{\varphi},\tilde{\psi})=(\tilde{\varphi}(r,\rho),\tilde{\psi}(r,\rho)), r=|x^1|,\rho=|y^1|,$ satisfying
$(\tilde{\varphi},\tilde{\psi})\in\tilde{E},$ and
$$
\Big\langle\frac{\partial \tilde{J}(\tilde{\varphi},\tilde{\psi})}{\partial (\tilde{\varphi},\tilde{\psi})},(g,h)\Big\rangle=0,\,\,\,\forall (g,h)\in\tilde{E}
$$
Moreover,  there exists a small constant $0<\widetilde{\tau_2}< \min\{\frac{1}{5}, \frac{\min\{n, m\}-1-\sigma}{\min\{n, m\}}\}$ and a constant $\tilde{C}$ such that
$$
\begin{array}{rl}
\|(\tilde{\varphi},\tilde{\psi})\|_\epsilon& \leq \epsilon^\frac{3}{2} \Big(r^{(1-\tilde{\tau}_2)m}+\rho^{(1-\tilde{\tau}_2)n}+e^{-\frac{3(1-\tilde{\tau}_2)(1-\tilde{\tau}_1)r}{\epsilon}}
+e^{-\frac{3(1-\tilde{\tau}_2)(1-\tilde{\tau}_1)\rho}{\epsilon}}\\[5mm]
&\quad\quad\,\,+e^{-\frac{(1-\tilde{\tau}_2)2r\sin\frac{\pi}{2k}}{\epsilon}}+e^{-\frac{(1-\tilde{\tau}_2)2\rho\sin\frac{\pi}{2k}}{\epsilon}}
+\tilde{C}\frac{\beta}{(\ln\frac{1}{\epsilon})^\frac{1}{6}}
e^{-\frac{\sqrt{(\rho-r\cos\frac{\pi}{2k})^2+(r\sin\frac{\pi}{2k})^2}}{\epsilon}}\Big).
\end{array}
$$
\end{proposition}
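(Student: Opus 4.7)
The plan is to mimic the Lyapunov--Schmidt reduction carried out for Proposition \ref{pro2.6}, replacing the operator $L$ by $\tilde{B}_\epsilon$, the linear term $l$ by $\tilde{l}$, and the remainder $R$ by $\tilde{R}$. First I would observe that, since $\tilde{l}$ is a bounded linear functional on $\tilde{E}$ by Lemma \ref{lm3.3}, there is $\tilde{l}^\prime\in\tilde{E}$ with $\tilde{l}(\tilde\varphi,\tilde\psi)=\langle \tilde l^\prime,(\tilde\varphi,\tilde\psi)\rangle$. Computing $\frac{\partial \tilde J}{\partial(\tilde\varphi,\tilde\psi)}=0$ in $\tilde E$ is then equivalent to the fixed point problem
\[
(\tilde\varphi,\tilde\psi)=\tilde{A}(\tilde\varphi,\tilde\psi):=-\tilde{B}_\epsilon^{-1}\tilde l^\prime-\tilde{B}_\epsilon^{-1}\tilde R^\prime(\tilde\varphi,\tilde\psi),
\]
which makes sense because $\tilde{B}_\epsilon$ is invertible on $\tilde E$ with a uniform lower bound by Lemma \ref{lm3.2} (here we need $\beta<\beta_0$).

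Next I would fix a small $0<\widetilde{\tau_2}<\min\{\tfrac{1}{5},\tfrac{\min\{n,m\}-1-\sigma}{\min\{n,m\}}\}$ and introduce the closed ball
\[
\tilde S=\Big\{(\tilde\varphi,\tilde\psi)\in\tilde E:\|(\tilde\varphi,\tilde\psi)\|_\epsilon\le \epsilon^{3/2}\bigl(\Phi(r,\rho)\bigr)\Big\},
\]
where $\Phi(r,\rho)$ denotes the right-hand side bound appearing in the statement of the proposition (the sum of the polynomial, exponential, and $\beta$-dependent terms). I would then verify that $\tilde A$ sends $\tilde S$ into itself: using Lemma \ref{lm3.3} to dominate $\|\tilde B_\epsilon^{-1}\tilde l^\prime\|$ by $\epsilon^{3/2}$ times the un-tilded version of $\Phi(r,\rho)$, and using the analogue of Lemma \ref{lemma2.4} (which holds verbatim with the obvious substitutions) to bound $\|\tilde R^\prime(\tilde\varphi,\tilde\psi)\|$ by $\epsilon^{-3/2}\|(\tilde\varphi,\tilde\psi)\|_\epsilon^2+\epsilon^{-4}\|(\tilde\varphi,\tilde\psi)\|_\epsilon^3$. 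The elementary inequality $(1-\widetilde{\tau_2})$-slack in the exponents is exactly what absorbs the higher-order terms, as in the proof of Proposition \ref{pro2.6}.

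Then I would check the contraction property via
\[
\|\tilde A(\tilde\varphi_1,\tilde\psi_1)-\tilde A(\tilde\varphi_2,\tilde\psi_2)\|_\epsilon\le C\|\tilde R^{\prime\prime}(\xi)\|\,\|(\tilde\varphi_1-\tilde\varphi_2,\tilde\psi_1-\tilde\psi_2)\|_\epsilon,
\]
where $\xi$ is on the segment joining the two arguments, and the $\tilde R^{\prime\prime}$ bound of order $\epsilon^{-3/2}\|\cdot\|_\epsilon+\epsilon^{-4}\|\cdot\|_\epsilon^2$ becomes $o_\epsilon(1)$ on $\tilde S$. The contraction mapping theorem then produces a unique fixed point, and a standard application of the implicit function theorem on the equation $\tilde{B}_\epsilon(\tilde\varphi,\tilde\psi)+\tilde l^\prime+\tilde R^\prime(\tilde\varphi,\tilde\psi)=0$ yields the $C^1$ dependence on $(r,\rho)\in S_\epsilon\times S_\epsilon$.

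The main obstacle, distinct from the analogous argument in Section 2, is controlling the cross-interaction term between $\tilde U_r$ and $\tilde V_\rho$ that appears in $\tilde l$: it produces the new contribution $\frac{\beta}{(\ln\frac{1}{\epsilon})^{1/6}}\exp\bigl(-\epsilon^{-1}\sqrt{(\rho-r\cos\tfrac{\pi}{2k})^2+(r\sin\tfrac{\pi}{2k})^2}\bigr)$, whose exponent combines $r$ and $\rho$ nontrivially. Verifying that this term is genuinely small on $S_\epsilon\times S_\epsilon$ (so that $\tilde A(\tilde S)\subset\tilde S$), and that it survives the small relaxation $(1-\widetilde{\tau_2})$ of the exponents without breaking the contraction estimate, requires the careful appendix estimates and the choice of $\delta$ made in the definition of $S_\epsilon$ in \eqref{s}; this is precisely where the smallness of $\beta$ from Lemma \ref{lm3.2} is used a second time.
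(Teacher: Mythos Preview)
Your proposal is correct and follows essentially the same contraction-mapping argument as the paper: Riesz-represent $\tilde l$, invert $\tilde B_\epsilon$ via Lemma~\ref{lm3.2}, define the ball $\tilde S$ with the stated bound, and use Lemma~\ref{lm3.3} together with the $\tilde R',\tilde R''$ estimates (the analogue of Lemma~\ref{lemma2.4}) to show $\tilde A$ is a self-map and a contraction on $\tilde S$. One small correction to your final paragraph: the paper does \emph{not} rely on smallness of $\beta$ a second time to control the cross-interaction term---note that in the definition of $\tilde S$ that term carries the \emph{same} exponent as in Lemma~\ref{lm3.3} (no $(1-\tilde\tau_2)$ relaxation), and its smallness comes from the $(\ln\tfrac{1}{\epsilon})^{-1/6}$ factor together with choosing the constant $\tilde C$ sufficiently large.
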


\begin{proof}
From the definition of $\tilde{l}(\tilde{\varphi},\tilde{\psi})$, we know that $\tilde{l}(\tilde{\varphi},\tilde{\psi})$ is a bounded
linear functional in $\tilde{E}$. Thus it follows
from Reisz  Representation Theorem that there is an $\tilde{l}^\prime \in \tilde{E}$ such
that$$\tilde{l}(\tilde{\varphi},\tilde{\psi})=\langle\tilde{l}^\prime,(\tilde{\varphi},\tilde{\psi})\rangle.$$
So finding a critical point of $\tilde{J}(\tilde{\varphi},\tilde{\psi})$ is equivalent to solving
\begin{equation}\label{3.7}
\tilde{l}^\prime+\tilde{B}_\epsilon(\tilde{\varphi},\tilde{\psi})+\tilde{R}^\prime(\tilde{\varphi},\tilde{\psi})=0.
\end{equation}

By Lemma \ref{lm3.2}, $\tilde{B}_\epsilon$ is invertible. Hence \eqref{3.7} can be written as
$$(\tilde{\varphi},\tilde{\psi})=\tilde{A}(\tilde{\varphi},\tilde{\psi}):
=-\tilde{B}_\epsilon^{-1}\tilde{l}^\prime-\tilde{B}_\epsilon^{-1}\tilde{R}^\prime(\tilde{\varphi},\tilde{\psi}).$$

We choose a small constant $0<\widetilde{\tau_2}< \min\{\frac{1}{5}, \frac{\min\{n, m\}-1-\sigma}{\min\{n, m\}}\}$ and a sufficiently large constant $\tilde{C}$, and set
$$\begin{array}{rl}
\tilde{S}&=\bigg\{(\tilde{\varphi},\tilde{\psi}) \in \tilde{E}:\|(\tilde{\varphi},\tilde{\psi})\|_\epsilon \leq \epsilon^\frac{3}{2}\Big(r^{(1-\tilde{\tau}_2)m}+\rho^{(1-\tilde{\tau}_2)n}
+e^{-\frac{3(1-\tilde{\tau}_2)(1-\tilde{\tau}_1)r}{\epsilon}}\\[5mm]
&\quad\quad\quad\quad\quad\quad\quad\quad\quad\quad\quad\quad\quad
+e^{-\frac{3(1-\tilde{\tau}_2)(1-\tilde{\tau}_1)\rho}{\epsilon}}
+e^{-\frac{(1-\tilde{\tau}_2)2r\sin\frac{\pi}{2k}}{\epsilon}}+e^{-\frac{(1-\tilde{\tau}_2)2\rho\sin\frac{\pi}{2k}}{\epsilon}}
\\[5mm]
&\quad\quad\quad\quad\quad\quad\quad\quad\quad\quad\quad\quad\quad
+\tilde{C}\frac{\beta}{(\ln\frac{1}{\epsilon})^\frac{1}{6}}
e^{-\frac{\sqrt{(\rho-r\cos\frac{\pi}{2k})^2+(r\sin\frac{\pi}{2k})^2}}{\epsilon}}\Big)\bigg\}.\\
\end{array}$$
For $\epsilon$ sufficiently small, we have
$$\begin{array}{rl}
&\|\tilde{A}(\tilde{\varphi},\tilde{\psi})\|\\[5mm]
&\leq C\|\widetilde{l_k}\| +C\|\tilde{R}^\prime(\tilde{\varphi},\tilde{\psi)}\|\\[5mm]
&\leq C\Big(r^m+\rho^n+e^{-\frac{3(1-\tilde{\tau}_1)r}{\epsilon}}+e^{-\frac{3(1-\tilde{\tau}_1)\rho}{\epsilon}}
+e^{-\frac{2r\sin\frac{\pi}{2k}}{\epsilon}}+e^{-\frac{2\rho\sin\frac{\pi}{2k}}{\epsilon}}
\\[5mm]
&\quad\quad\,\,+\frac{\beta}{(\ln\frac{1}{\epsilon})^\frac{1}{6}}e^{-\frac{\sqrt{(\rho-r\cos\frac{\pi}{2k})^2+(r\sin\frac{\pi}{2k})^2}}{\epsilon}}\Big)\epsilon^\frac{3}{2}
+C(\epsilon^{-\frac{3}{2}}\|(\tilde{\varphi},\tilde{\psi})\|_\epsilon^2+\epsilon^{-4}\|(\tilde{\varphi},\tilde{\psi})\|_\epsilon^3)\\[5mm]
&\leq \Big(r^{(1-\tilde{\tau}_2)m}+\rho^{(1-\tilde{\tau}_2)n}+e^{-\frac{3(1-\tilde{\tau}_2)(1-\tilde{\tau}_1)r}{\epsilon}}
+e^{-\frac{3(1-\tilde{\tau}_2)(1-\tilde{\tau}_1)\rho}{\epsilon}}\\[5mm]
&\quad\quad+e^{-\frac{(1-\tilde{\tau}_2)2r\sin\frac{\pi}{2k}}{\epsilon}}+e^{-\frac{(1-\tilde{\tau}_2)2\rho\sin\frac{\pi}{2k}}{\epsilon}}
+\tilde{C}\frac{\beta}{(\ln\frac{1}{\epsilon})^\frac{1}{6}}e^{-\frac{\sqrt{(\rho-r\cos\frac{\pi}{2k})^2+(r\sin\frac{\pi}{2k})^2}}{\epsilon}}\Big)\epsilon^\frac{3}{2},\,\,\,
\forall (\tilde{\varphi},\tilde{\psi}) \in \tilde{S},
\end{array}$$
which implies that $\tilde{A}$ is a map from $\tilde{S}$ to $\tilde{S}$.

On the other hand, for $\epsilon$ sufficiently small, we get
$$
\begin{array}{rl}
&|\tilde{A}(\tilde{\varphi}_1,\tilde{\psi}_1)-\tilde{A}(\tilde{\varphi}_2,\tilde{\psi}_2)|\\[5mm]
&\leq C|\tilde{R}^\prime(\tilde{\varphi}_1,\tilde{\psi}_1)-\tilde{R}^\prime(\tilde{\varphi}_2,\tilde{\psi}_2)|\\[5mm]
&\leq C\|\widetilde{R}^{\prime\prime}(\lambda(\tilde{\varphi}_1,\tilde{\psi}_1)+(1-\lambda)(\tilde{\varphi}_2,\tilde{\psi}_2))\|\|(\tilde{\varphi}_1,\tilde{\psi}_1)-(\tilde{\varphi}_2,\tilde{\psi}_2))\|_\epsilon\\[5mm]
&\leq C\big[\epsilon^{-\frac{3}{2}}(\|(\tilde{\varphi}_1,\tilde{\psi}_1)\|_\epsilon+\|(\tilde{\varphi}_2,\tilde{\psi}_2)\|_\epsilon)
+\epsilon^{-4}(\|(\tilde{\varphi}_1,\tilde{\psi}_1)\|^2_\epsilon
+\|(\tilde{\varphi}_2,\tilde{\psi}_2)\|^2_\epsilon)
\big]\|(\tilde{\varphi}_1,\tilde{\psi}_1)-(\tilde{\varphi}_2,\tilde{\psi}_2))\|_\epsilon\\[5mm]
&\leq \ds\frac{1}{2}\|(\tilde{\varphi}_1,\tilde{\psi}_1)-(\tilde{\varphi}_2,\tilde{\psi}_2))\|_\epsilon.
\end{array}
$$
Thus for $\epsilon$ sufficiently small, $\tilde{A}$ is a contraction map. Therefore we have proved that when $\epsilon$ is sufficiently small,
$\tilde{A}$ is a contraction map from $\tilde{S}$ to $\tilde{S}$. So  the results   follow from the  contraction mapping theorem. This completes the proof.\\
\end{proof}

Now we are ready to prove Theorem \ref{Th2}. Let $(\tilde{\varphi}(r,\rho),\tilde{\psi}(r,\rho))$ be the map obtained in Proposition \ref{pro3.4}.
Define
$$\tilde{F}(r,\rho)=I_\epsilon(\tilde{U}_r+\tilde{\varphi}(r,\rho),\tilde{V}_\rho+\tilde{\psi}(r,\rho)),\,\,\, (r,\rho) \in S_\epsilon \times S_\epsilon.$$

We can check that for $\epsilon$ sufficiently small, if $(r,\rho)$ is a critical point of $\tilde{F}(r,\rho)$, then $(\tilde{U}_r+\tilde{\varphi}(r,\rho),\tilde{V}_\rho+\tilde{\psi}(r,\rho))$ is a critical point of $I_\epsilon$.\\

[\textbf{Proof of Theorem \ref{Th4}}]\,\,
 From Lemma \ref{lemma2.4} , \ref{lm3.3}, and Proposition \ref{pro3.4} , \ref{proA5}, we have
$$
\begin{array}{rl}
\tilde{F}(r,\rho)
&=2k\epsilon^3\Big[\tilde{A}+a\tilde{B}r^m+b\tilde{C}\rho^n+B_1e^{-\frac{2r\sin\frac{\pi}{2k}}{\epsilon}}
+B_2e^{-\frac{2\rho\sin\frac{\pi}{2k}}{\epsilon}}\\
&\quad \quad\quad\,\, 
+o_\epsilon(1)e^{-\frac{2\sqrt{(\rho-r\cos\frac{\pi}{k})^2+(r\sin\frac{\pi}{2k})^2}}{\epsilon}}+
O(r^{m-1}\epsilon+\rho^{n-1}\epsilon)\Big].\\
\end{array}
$$

Consider the minimization problem
$$\min\{\tilde{F}(r,\rho):(r,\rho) \in S_\epsilon\times S_\epsilon\}.$$

Since $\tilde{F}(r,\rho)$ is defined in a closed domain, the minimization can be attained. So we may assume that
$$\tilde{F}(r_1,\rho_1)=\min\{\tilde{F}(r,\rho):(r,\rho) \in S_\epsilon\times S_\epsilon\}.$$

We claim that $(r_1,\rho_1)$ is an interior point of $S_\epsilon\times S_\epsilon$.

We assume that
$$
\tilde{g}_1(r)=a\tilde{B}r^m+B_1e^{-\frac{2r\sin\frac{\pi}{2k}}{\epsilon}}
\,\,\text{and}\,\,
\tilde{h}_1(\rho)=b\tilde{C}\rho^m+B_2e^{-\frac{2\rho\sin\frac{\pi}{2k}}{\epsilon}}.
$$
By direct computation, we see that $\tilde{g}_1(r)$ attain the local minimization at
$$\bar{r}=\frac{m+o_\epsilon(1)}{2\sin\frac{\pi}{2k}}\epsilon\ln\frac{1}{\epsilon}.$$
We have that
$$\tilde{g}_1(\bar{r})=\Big(a\tilde{B}\Big(\frac{m}{2\sin\frac{\pi}{2k}}\Big)^m+o_\epsilon(1)\Big)\Big(\epsilon\ln\ds\frac{1}{\epsilon}\Big)^m,$$
$$\tilde{g}_1\Big(\frac{m-\tilde{\delta}}{2\sin\frac{\pi}{2k}}\epsilon\ln\frac{1}{\epsilon}\Big)=C\epsilon^{m-\tilde{\delta}},$$
and
$$\tilde{g}_1\Big(\frac{m+\tilde{\delta}}{2\sin\frac{\pi}{2k}}\epsilon\ln\frac{1}{\epsilon}\Big)
=\Big(a\tilde{B}\Big(\frac{m+\tilde{\delta}}{2\sin\frac{\pi}{2k}}\Big)^m+o_\epsilon(1)\Big)\Big(\epsilon\ln\ds\frac{1}{\epsilon}\Big)^m.$$

Similarly, $\tilde{h}_1(\rho)$ also attains the local minimization at
$$\bar{\rho}=\frac{m+o_\epsilon(1)}{2\sin\frac{\pi}{2k}}\epsilon\ln\frac{1}{\epsilon}.$$
And we also have
$$\tilde{h}_1(\bar{\rho})=\Big(b\tilde{C}\Big(\frac{m}{2\sin\frac{\pi}{2k}}\Big)^m+o_\epsilon(1)\Big)\Big(\epsilon\ln\ds\frac{1}{\epsilon}\Big)^m,$$
$$\tilde{h}_1\Big(\frac{m-\tilde{\delta}}{2\sin\frac{\pi}{2k}}\epsilon\ln\frac{1}{\epsilon}\Big)=C\epsilon^{m-\tilde{\delta}},$$
and
$$\tilde{h}_1\Big(\frac{m+\tilde{\delta}}{2\sin\frac{\pi}{2k}}\epsilon\ln\frac{1}{\epsilon}\Big)=\Big(b\tilde{C}
\Big(\frac{m+\tilde{\delta}}{2\sin\frac{\pi}{2k}}\Big)^m+o_\epsilon(1)\Big)\Big(\epsilon\ln\ds\frac{1}{\epsilon}\Big)^m.$$

And we may assume that
$$
\tilde{g}_2(r)=a\tilde{B}r^m+(B_1+o_\epsilon(1))e^{-\frac{2r\sin\frac{\pi}{2k}}{\epsilon}}\,\,
\text{and}\,\,
\tilde{h}_2(\rho)=b\tilde{C}\rho^m+B_2e^{-\frac{2\rho\sin\frac{\pi}{2k}}{\epsilon}}.
$$
By direct computation, 
we see that $\tilde{g}_2(r)$ attains the local minimization at
$$
\bar{r}=\frac{m+o_\epsilon(1)}{2\sin\frac{\pi}{2k}}\epsilon\ln\frac{1}{\epsilon}.
$$
We have that
$$
\tilde{g}_2(\bar{r})=\Big(a\tilde{B}\Big(\frac{m}{2\sin\frac{\pi}{2k}}\Big)^m+o_\epsilon(1)\Big)
\Big(\epsilon\ln\ds\frac{1}{\epsilon}\Big)^m,
$$
$$
\tilde{g}_2\Big(\frac{m-\tilde{\delta}}{2\sin\frac{\pi}{2k}}\epsilon\ln\frac{1}{\epsilon}\Big)
=C\epsilon^{m-\tilde{\delta}}
$$
and
$$
\tilde{g}_2\Big(\frac{m+\tilde{\delta}}{2\sin\frac{\pi}{2k}}\epsilon\ln\frac{1}{\epsilon}\Big)
=\Big(a\tilde{B}\Big(\frac{m+\tilde{\delta}}{2\sin\frac{\pi}{2k}}\Big)^m+o_\epsilon(1)\Big)
\Big(\epsilon\ln\ds\frac{1}{\epsilon}\Big)^m.
$$

Similarly, $\tilde{h}_2(\rho)$ also attains the local minimization at
$$
\bar{\rho}=\frac{m+o_\epsilon(1)}{2\sin\frac{\pi}{2k}}\epsilon\ln\frac{1}{\epsilon}.
$$
And we also have
$$
\tilde{h}_2(\bar{\rho})=\Big(b\tilde{C}\Big(\frac{m}{2\sin\frac{\pi}{2k}}\Big)^m+o_\epsilon(1)\Big)
\Big(\epsilon\ln\ds\frac{1}{\epsilon}\Big)^m,
$$
$$
\tilde{h}_2\Big(\frac{m-\tilde{\delta}}{2\sin\frac{\pi}{2k}}\epsilon\ln\frac{1}{\epsilon}\Big)
=C\epsilon^{m-\tilde{\delta}}
$$
and
$$\tilde{h}_2\Big(\ds\frac{m+\tilde{\delta}}{2\sin\frac{\pi}{2k}}\epsilon\ln\frac{1}{\epsilon}\Big)
=\Big(b\tilde{C}\Big(\ds\frac{m+\tilde{\delta}}{2\sin\frac{\pi}{2k}}\Big)^m+o_\epsilon(1)\Big)\Big(\epsilon\ln\ds\frac{1}{\epsilon}\Big)^m.$$

If $o_\epsilon(1)>0$, then
$$
\begin{array}{rl}
\tilde{F}(r_1,\rho_1)&\leq 2k\epsilon^3 \Big[\tilde{A} +\min\limits_{(r,\rho)\in S_\epsilon\times S_\epsilon}\{\tilde{g}_2(r)+\tilde{h}_2(\rho)+O(r^{m-1}\epsilon +\rho^{n-1}\epsilon)\}\Big]\\[5mm]
&\leq  2k\epsilon^3 \Big[\tilde{A} + \Big(a\tilde{B}\Big(\ds\frac{m}{2\sin\frac{\pi}{2k}}\Big)^m+b\tilde{C}\Big(\frac{m}{2\sin\frac{\pi}{2k}}\Big)^m
+o_\epsilon(1)\Big)\Big(\epsilon\ln\ds\frac{1}{\epsilon}\Big)^m\Big].\\[5mm]
\end{array}
$$
If $o_\epsilon(1)\leq 0$, then
$$
\begin{array}{rl}
\tilde{F}(r_1,\rho_1)&\leq  2k\epsilon^3 \Big[\tilde{A} + \min\limits_{(r,\rho)\in S_\epsilon\times S_\epsilon}\{\tilde{g}_1(r)+\tilde{h}_1(\rho)+O(r^{m-1}\epsilon +\rho^{n-1}\epsilon)\}\Big]\\[5mm]
&\leq 2k\epsilon^3 \Big[\tilde{A} +\Big((a\tilde{B}\Big(\ds\frac{m}{2\sin\frac{\pi}{2k}}\Big)^m+b \tilde{C} \Big(\ds\frac{m}{2\sin\frac{\pi}{2k}}\Big)^m+o_\epsilon(1)\Big)\Big(\epsilon\ln\ds\frac{1}{\epsilon}\Big)^m\Big].\\[5mm]
\end{array}
$$

Thus we get
\begin{equation}\label{3.8}
\tilde{F}(r_1,\rho_1)\leq 2k\epsilon^3 \Big[\tilde{A} +\Big((a\tilde{B}\Big(\ds\frac{m}{2\sin\frac{\pi}{2k}}\Big)^m+b \tilde{C} \Big(\ds\frac{m}{2\sin\frac{\pi}{2k}}\Big)^m+o_\epsilon(1)\Big)\Big(\epsilon\ln\ds\frac{1}{\epsilon}\Big)^m\Big].
\end{equation}

For convenience, we denote
 $r_{_l}:=\frac{m-\tilde{\delta}}{2\sin\frac{\pi}{2k}}\epsilon\ln\frac{1}{\epsilon},
r_{_r}:=\frac{m+\tilde{\delta}}{2\sin\frac{\pi}{2k}}\epsilon\ln\frac{1}{\epsilon}, \rho_{_l}:=\frac{m-\tilde{\delta}}{2\sin\frac{\pi}{2k}}\epsilon\ln\frac{1}{\epsilon},
\rho_{_r}:=\frac{m+\tilde{\delta}}{2\sin\frac{\pi}{2k}}\epsilon\ln\frac{1}{\epsilon}.$

If $o_\epsilon(1)> 0$, then
$$
\begin{array}{rl}
\tilde{F}(r_{_l},\rho)&\geq  2k\epsilon^3 \Big[\tilde{A} +\tilde{g}_1(r_{_l})+\tilde{h}_1(\rho)+O(r_{_l}^{m-1}\epsilon +\rho^{n-1}\epsilon)\}\Big]\\[5mm]&\geq 2k\epsilon^3 \Big[\tilde{A} + C\epsilon^{m-\tilde{\delta}}+\Big(b\tilde{C}\Big(\ds\frac{m}{2\sin\frac{\pi}{2k}}\Big)^m+o_\epsilon(1)\Big)
\Big(\epsilon\ln\ds\frac{1}{\epsilon}\Big)^m\Big].
\end{array}
$$
If $o_\epsilon(1)\leq 0$, then
$$
\begin{array}{rl}\tilde{F}(r_{_l},\rho)&= 2k\epsilon^3 \Big[\tilde{A} +\tilde{g}_2(r_{_l})+\tilde{h}_2(\rho)+O(r_{_l}^{m-1}\epsilon +\rho^{n-1}\epsilon)\}\Big]\\[5mm]
&\geq  2k\epsilon^3 \Big[\tilde{A} + C\epsilon^{m-\tilde{\delta}}+\Big(b\tilde{C}\Big(\ds\frac{m}{2\sin\frac{\pi}{2k}}\Big)^m
+o_\epsilon(1)\Big)\Big(\epsilon\ln\ds\frac{1}{\epsilon}\Big)^m\Big].
\end{array}
$$
 Therefore, we have
\begin{equation}\label{3.9}
\tilde{F}(r_{_l},\rho)\geq  2k\epsilon^3 \Big[\tilde{A} + C\epsilon^{m-\tilde{\delta}}+\Big(b\tilde{C}\Big(\ds\frac{m}{2\sin\frac{\pi}{2k}}\Big)^m+o_\epsilon(1)\Big)\Big(\epsilon\ln\ds\frac{1}{\epsilon}\Big)^m\Big].
\end{equation}

Similarly, we also have
\begin{equation}\label{3.10}
\tilde{F}(r_{_r},\rho)\geq 2k\epsilon^3 \Big[\tilde{A} +\Big(a\tilde{B}(\frac{m+\tilde{\delta}}{2\sin\frac{\pi}{2k}})^m+
b\tilde{C}\Big(\ds\frac{m}{2\sin\frac{\pi}{2k}}\Big)^m
+o_\epsilon(1)\Big)\Big(\epsilon\ln\ds\frac{1}{\epsilon}\Big)^m\Big],
\end{equation}
\begin{equation}\label{3.11}
\tilde{F}(r,\rho_{_l})\geq 2k\epsilon^3 \Big[\tilde{A} +C\epsilon^{m-\tilde{\delta}}+\Big(a\tilde{B}\Big(\ds\frac{m}{2\sin\frac{\pi}{2k}}\Big)^m+o_\epsilon(1)\Big)
\Big(\epsilon\ln\ds\frac{1}{\epsilon}\Big)^m\Big]
\end{equation}
and
\begin{equation}\label{3.12}
\tilde{F}(r,\rho_{_r})\geq 2k\epsilon^3 \Big[\tilde{A} +\Big(a\tilde{B}\Big(\ds\frac{m}{2\sin\frac{\pi}{2k}}\Big)^m
+b\tilde{C}(\frac{m+\tilde{\delta}}{2\sin\frac{\pi}{2k}})^m+o_\epsilon(1)\Big)\Big(\epsilon\ln\ds\frac{1}{\epsilon}\Big)^m\Big].
\end{equation}

From \eqref{3.8} to \eqref{3.12}, we can see that when $\epsilon$ is sufficiently small, the local  minimization of $\tilde{F}(r,\rho)$
can't be obtained at the boundary of $S_\epsilon \times S_\epsilon$. That is, $(r_1, \rho_1)$ is an interior point of $S_\epsilon \times S_\epsilon$.
Thus $(r_1, \rho_1)$ is a critical point of $\tilde{F}(r,\rho)$.
So $(\tilde{U}_{r_1}+\tilde{\varphi}(r_1,\rho_1),\tilde{V}_{\rho_1}+\tilde{\psi}(r_1,\rho_1))$ is a solution of \eqref{1.1}.
This completes the proof.

\appendix

\section{Energy estimate}
\def\theequation{A.\arabic{equation}}\makeatother
\setcounter{equation}{0}

In this section, we will give out some energy estimates of the approximate solutions.
Recall that
$$ x^j:=\Big(r\cos\frac{(j-1)\pi}{k},~r\sin\frac{(j-1)\pi}{k},~x_3\Big),~j=1,2,\cdots,2k,$$
$$y^j:=\Big(\rho\cos\frac{(2j-1)\pi}{2k},~\rho\sin\frac{(2j-1)\pi}{2k},~x_3\Big),j=1,2,\cdots,2k,$$
$$U_r(x)=\sum\limits_{j=1}^{2k}(-1)^{j-1}U_{x^j,\epsilon},
V_r(x)=\sum\limits_{j=1}^{2k}(-1)^{j-1}V_{x^j,\epsilon},
$$
$$
\tilde{U}_r=\sum\limits_{j=1}^{2k}(-1)^{j-1}U_{1,x^j,\epsilon},\,\,
\tilde{V}_\rho=\sum\limits_{j=1}^{2k}(-1)^{j-1}U_{2,y^j,\epsilon}
$$
and
$$
\begin{array}{rl}
I_\epsilon(u,v)&=\ds\frac{1}{2}\ds\int_{\R^3}\Big(\epsilon^2|\nabla u|^2 +P(x)u^2 +
\epsilon^2|\nabla v|^2 +Q(x)v^2\Big) -\frac{1}{4}\ds\int_{\R^3}\Big(\mu_1|u|^4+\mu_2|v|^4\Big)\\[5mm] &
\quad-\ds\frac{\beta}{2}\ds\int_{\R^3}u^2v^2.
\end{array}
$$

\begin{proposition}\label{proA1}
Assume that $(P)$ and $(Q)$ hold. Then we get the following energy estimate:
$$
I_\epsilon(U_{x^j,\epsilon},V_{x^j,\epsilon})=\epsilon^3\big[
A+aBr^m+bC_0r^n+O\big(r^{m-1}\epsilon+r^{n-1}\epsilon+e^{-\frac{(2-\tau)(1-\tau)r}{\epsilon}}\big)\big],
$$
where $a,b$ is given in  $(P)$ and $(Q)$, $\tau$ is determined in Lemma \ref{lemma2.5},  $A=\frac{1}{4}(\mu_1 \alpha^4+\mu_2\gamma^4+2\beta\alpha^2\gamma^2)\ds\int_{\R^3}w^4~dx$,
 $B=\frac{1}{2}\alpha^2\ds\int_{\R^3}w^2~dx$,  and $C_0=\frac{1}{2}\gamma^2\ds\int_{\R^3}w^2~dx.$
\end{proposition}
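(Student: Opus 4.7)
The strategy is to rescale so as to factor out $\epsilon^3$, then use the equation satisfied by $(U,V)=(\alpha w,\gamma w)$ to eliminate the gradient terms, leaving only a potential correction that we Taylor-expand via (P) and (Q).

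First, perform the change of variables $x=x^j+\epsilon y$. Since $U_{x^j,\epsilon}(x)=U(y)$, $V_{x^j,\epsilon}(x)=V(y)$ and $dx=\epsilon^3 dy$, the energy becomes $\epsilon^3$ times an integral over $\R^3$ involving $|\nabla U|^2$, $P(x^j+\epsilon y)U^2$, $|\nabla V|^2$, $Q(x^j+\epsilon y)V^2$ and the quartic terms. Next, test the constant-coefficient system \eqref{sys} against $(U,V)$ to obtain the Pohozaev-type identities
$$\int_{\R^3}(|\nabla U|^2+U^2)=\mu_1\int_{\R^3}U^4+\beta\int_{\R^3}U^2V^2,\qquad \int_{\R^3}(|\nabla V|^2+V^2)=\mu_2\int_{\R^3}V^4+\beta\int_{\R^3}U^2V^2.$$
Substituting these into the rescaled functional, the quartic contributions collapse to $\frac{1}{4}\int(\mu_1U^4+\mu_2V^4+2\beta U^2V^2)$ which, after inserting $U=\alpha w$ and $V=\gamma w$, is exactly the constant $A$. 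What remains is
$$\tfrac{1}{2}\int_{\R^3}(P(x^j+\epsilon y)-1)U^2\,dy+\tfrac{1}{2}\int_{\R^3}(Q(x^j+\epsilon y)-1)V^2\,dy.$$

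The third step is to Taylor-expand $P-1$ and $Q-1$. Fix a small $\delta_0>0$ so that the expansions in (P) and (Q) hold on $\{|x|\le 2\delta_0\}$, and split the $y$-integrals at $\{|\epsilon y+x^j|\le\delta_0\}$. On the inner region, write $P(x^j+\epsilon y)-1=a|x^j+\epsilon y|^m+O(|x^j+\epsilon y|^{m+\theta})$, and expand $|x^j+\epsilon y|^m=r^m+O(r^{m-1}\epsilon|y|)$ since $|x^j|=r$. Integrating against $w^2$, whose exponential decay swallows the polynomial factor $|y|$, yields
$$\tfrac{\alpha^2}{2}\int(P(x^j+\epsilon y)-1)w^2\,dy=a\,\tfrac{\alpha^2}{2}\Big(\int_{\R^3}w^2\Big)r^m+O(r^{m-1}\epsilon)=aBr^m+O(r^{m-1}\epsilon),$$
using that $r^{m+\theta}=O((\epsilon\log\frac{1}{\epsilon})^{m+\theta})$ is dominated by $r^{m-1}\epsilon$. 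The tail over $\{|y|\gtrsim\delta_0/\epsilon\}$ is controlled by $\|P-1\|_\infty\int_{|y|\ge\delta_0/\epsilon}w^2$, which decays like $e^{-(2-\tau)\delta_0/\epsilon}$; since $r=O(\epsilon\log\frac{1}{\epsilon})$, this is majorised by $e^{-(2-\tau)(1-\tau)r/\epsilon}$. The analogous analysis of the $Q$-integral produces $bC_0r^n+O(r^{n-1}\epsilon)$. Assembling the three contributions gives the stated formula.

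\textbf{Main obstacle.} The bookkeeping is essentially routine once the gradient terms are eliminated; the only delicate point is calibrating the cutoff radius and the decay rate $\tau$ so that the tail estimate fits the exponential remainder $e^{-(2-\tau)(1-\tau)r/\epsilon}$ stated in the proposition. This just requires choosing $\tau\in D$ (as in Lemma \ref{lemma2.5}) small enough and invoking the known asymptotics $w(|y|)\sim C_0|y|^{-1}e^{-|y|}$ to bound the integrand uniformly outside a fixed neighborhood of the origin.
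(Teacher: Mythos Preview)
Your overall strategy---rescale, use the system \eqref{sys} to collapse the kinetic and quartic terms to the constant $A$, then expand the potential corrections---is exactly the paper's approach. The gap is in how you split the potential integral.

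You split at a \emph{fixed} radius $\delta_0$, i.e.\ at $\{|\epsilon y+x^j|\le\delta_0\}$, and then assert $|x^j+\epsilon y|^m=r^m+O(r^{m-1}\epsilon|y|)$ on this entire inner region. That expansion is a first-order Taylor estimate and is only valid when $|\epsilon y|\lesssim|x^j|=r$; on your inner region $|\epsilon y|$ ranges up to roughly $\delta_0$, which is far larger than $r=O(\epsilon\ln\frac{1}{\epsilon})$. At a point with $|\epsilon y+x^j|=\delta_0/2$, for instance, the left side is a fixed positive constant while $r^m+r^{m-1}\epsilon|y|\to 0$, so the identity fails. Consequently the claimed bound $\int_{\text{inner}}(P-1)U^2=aBr^m+O(r^{m-1}\epsilon)$ is not justified, and the intermediate annulus $\{(1-\tau)r/\epsilon<|y|\lesssim\delta_0/\epsilon\}$ is precisely where the exponential remainder $e^{-(2-\tau)(1-\tau)r/\epsilon}$ of the statement originates---not from the tail at $|y|\gtrsim\delta_0/\epsilon$, which is much smaller.

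The paper handles this by splitting at the $r$-dependent radius $|y|=(1-\tau)r/\epsilon$ instead. On $B_{(1-\tau)r/\epsilon}(0)$ one has $|\epsilon y|\le(1-\tau)r<|x^j|$, so the expansion $|x^j+\epsilon y|^m=|x^j|^m\big(1+O(|\epsilon y|/|x^j|)\big)$ is legitimate and yields $aBr^m+O(r^{m-1}\epsilon)$; on the complement the integrand is bounded by $CU^2$, giving $O\big(\int_{|y|>(1-\tau)r/\epsilon}w^2\big)=O(e^{-(2-\tau)(1-\tau)r/\epsilon})$ directly. Your ``Main obstacle'' paragraph correctly flags that calibrating the cutoff to $\tau$ is the delicate point, but the body of the proof does not actually implement it; replacing $\delta_0$ by $(1-\tau)r$ (in the $x$-variable) fixes the argument.
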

\begin{proof}
By direct computation, we have
\begin{equation}\label{A.1}
\begin{array}{rl}
I_\epsilon(U_{x^j,\epsilon},V_{x^j,\epsilon})
&=\ds\frac{1}{2}\ds\int_{\R^3}\big(\epsilon^2|\nabla U_{x^j,\epsilon}|^2 +U_{x^j,\epsilon}^2 +
\epsilon^2|\nabla V_{x^j,\epsilon}|^2 +V_{x^j,\epsilon}^2\big) \\[5mm]
 &\quad-\ds\frac{1}{4}\ds\int_{\R^3}\big(\mu_1|U_{x^j,\epsilon}|^4+\mu_2|V_{x^j,\epsilon}|^4\big)
-\frac{\beta}{2}\ds\int_{\R^3}U_{x^j,\epsilon}^2V_{x^j,\epsilon}^2\\[5mm]
&\quad+\ds\frac{1}{2}\ds\int_{\R^3}\big[(P(x)-1)U_{x^j,\epsilon}^2 +
(Q(x)-1)V_{x^j,\epsilon}^2\big]\\[5mm]
&=\ds\frac{1}{4}\ds\int_{\R^3}\big(\mu_1|U_{x^j,\epsilon}|^4+\mu_2|V_{x^j,\epsilon}|^4\big)+
\frac{\beta}{2}\ds\int_{\R^3}U_{x^j,\epsilon}^2V_{x^j,\epsilon}^2\\[5mm]
&\quad+\ds\frac{1}{2}\ds\int_{\R^3}\big[(P(x)-1)U_{x^j,\epsilon}^2 +
(Q(x)-1)V_{x^j,\epsilon}^2\big].\\[5mm]
\end{array}
\end{equation}

But
\begin{equation}\label{A.2}
\begin{array}{rl}
\ds\frac{1}{4}\ds\int_{\R^3}\big(\mu_1|U_{x^j,\epsilon}|^4+\mu_2|V_{x^j,\epsilon}|^4\big)
&=\ds\frac{\epsilon^3}{4}\ds\int_{\R^3}\big(\mu_1U^4+\mu_2V^4\big)\\[5mm]
&=\ds\frac{\epsilon^3}{4}(\mu_1\alpha^4+\mu_2\gamma^4)\ds\int_{\R^3}w^4
\end{array}
\end{equation}
and
\begin{equation}\label{A.3}
\begin{array}{rl}
\frac{\beta}{2}\ds\int_{\R^3}U_{x^j,\epsilon}^2V_{x^j,\epsilon}^2
=\ds\frac{\beta}{2}\epsilon^3\ds\int_{\R^3}U^2V^2
=\ds\frac{\beta}{2}\epsilon^3\alpha^2\gamma^2\ds\int_{\R^3}w^4.
\end{array}
\end{equation}

For any $m>1$ and any $0<d<1,$ we have
$$
|\epsilon y+x^j|^m=|x^j|^m\Big(1+O\Big(\frac{|\epsilon y|}{|x^j|}\Big)\Big), y \in B_{\frac{dr}{\epsilon}}(0).
$$
Since 
$$
P(r)=1+ar^m+O(r^{m+\theta})\quad \hbox{as}~r \to 0^+,
$$
we get
\begin{equation}\label{A.4}\begin{split}
&\frac{1}{2}\ds\int_{\R^3}(P(x)-1)U_{x^j,\epsilon}^2\\
&
\quad\quad=\frac{1}{2}\epsilon^3\ds\int_{\R^3}(P(\epsilon y+x^j)-1)U^2\\
&\quad\quad=\frac{1}{2}\epsilon^3\Big[\ds\int_{B_{\frac{(1-\tau)r}{\epsilon}}(0)}(P(\epsilon y+x^j)-1)U^2+
\ds\int_{B^c_{\frac{(1-\tau)r}{\epsilon}}(0)}(P(\epsilon y+x^j)-1)U^2\Big]\\
&\quad\quad=\frac{1}{2}\epsilon^3\Big[\ds\int_{B_{\frac{(1-\tau)r}{\epsilon}}(0)}(a|\epsilon y+x^j|^m+O(|\epsilon y+x^j|^{m+\theta}))U^2
+O(e^{-\frac{(2-\tau)(1-\tau)r}{\epsilon}})\Big]\\
&\quad\quad=\frac{1}{2}\epsilon^3\Big[\ds\int_{B_{\frac{(1-\tau)r}{\epsilon}}(0)}\bigg(a|x^j|^m\Big(1+O\Big(\frac{|\epsilon y|}{|x^j|}\Big)\Big)
+O(|x^j|^{m+\theta}\Big(1+O\Big(\frac{|\epsilon y|}{|x^j|}\Big)\Big))\bigg)U^2\\
&~~~\quad\quad~\quad\quad+O(e^{-\frac{(2-\tau)(1-\tau)r}{\epsilon}})\Big]\\
&\quad\quad=\frac{1}{2}\epsilon^3\Big[\ds\int_{B_{\frac{(1-\tau)r}{\epsilon}}(0)}ar^mU^2
+O\Big(\ds\int_{B_{\frac{(1-\tau)r}{\epsilon}}(0)}
r^{m-1}\epsilon|y|U^2\Big)+O(e^{-\frac{(2-\tau)(1-\tau)r}{\epsilon}})\Big]\\
&\quad\quad=\frac{1}{2}\epsilon^3\Big[\ds\int_{\R^3}ar^mU^2
-\ds\int_{B^c_{\frac{(1-\tau)r}{\epsilon}}(0)}ar^mU^2+O(r^{m-1}\epsilon)
+O(e^{-\frac{(2-\tau)(1-\tau)r}{\epsilon}})\Big]\\
&\quad\quad=\epsilon^3\Big[aBr^m+O(r^{m-1}\epsilon)
+O(e^{-\frac{(2-\tau)(1-\tau)r}{\epsilon}})\Big],\\
\end{split}
\end{equation}
where $\tau$ is a small positive constant.
Noting that 
$$
Q(r)=1+br^n+O(r^{n+\delta})\quad  \hbox{as}~r \to 0^+,
$$
by the same argument as above, we can get
\begin{equation}\label{A.5}
\frac{1}{2}\ds\int_{\R^3}(Q(x)-1)V_{x^j,\epsilon}^2)
=\epsilon^3\Big[bC_0r^n+O(r^{n-1}\epsilon)
+O(e^{-\frac{(2-\tau)(1-\tau)r}{\epsilon}})\Big].
\end{equation}

So combining \eqref{A.1}--\eqref{A.5}, we get $$I_\epsilon(U_{x^j,\epsilon},V_{x^j,\epsilon})=\epsilon^3\Big[
A+aBr^m+bC_0r^n+O\big(r^{m-1}\epsilon+r^{n-1}\epsilon+e^{-\frac{(2-\tau)(1-\tau)r}{\epsilon}}\big)\Big].$$

\end{proof}

\begin{proposition}\label{proA2}
Assume that $(P)$ and $(Q)$ hold. Then there exists a small constant $0<\sigma <\min\{\frac{1}{10},\min\{m,n\}-1\}$ and a positive constant $C$ such that
$$\begin{array}{rl}I_\epsilon(U_r,V_r)&=2k\epsilon^3\Big[A+aBr^m+bC_0r^n+C(\frac{\mu_1\alpha^4}{2}+\frac{\mu_2\gamma^4}{2}+\beta\alpha^2\gamma^2)e^{-\frac{2r\sin\frac{\pi}{2k}}{\epsilon}}\\
 &\quad\quad~\quad+ O\big(r^{m-1}\epsilon+r^{n-1}\epsilon+e^{-\frac{(1-\tau)(2-\tau)r}{\epsilon}}
 +e^{-\frac{(1+\sigma)2r\sin\frac{\pi}{2k}}{\epsilon}}\big)\Big],\\
\end{array}$$
where $\tau$ is defined in Lemma \ref{lemma2.5}.
\end{proposition}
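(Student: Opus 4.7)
The plan is to decompose $I_\epsilon(U_r,V_r)$ into a sum of $2k$ single-peak energies plus peak-peak interaction terms, then apply Proposition \ref{proA1} to the self-energies and carefully extract the leading interaction from the nearest-neighbor pairs. First I would write
\[
I_\epsilon(U_r,V_r) = \sum_{j=1}^{2k} I_\epsilon(U_{x^j,\epsilon}, V_{x^j,\epsilon}) + \Delta_\epsilon(r),
\]
where $\Delta_\epsilon(r)$ is the collection of all terms involving at least two distinct centers $x^j\neq x^l$ coming from the expansion of $|\nabla U_r|^2$, $|\nabla V_r|^2$, $U_r^2$, $V_r^2$, $U_r^4$, $V_r^4$, and $U_r^2V_r^2$. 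By the symmetry (each $x^j$ lies on the same circle after a rotation) the $2k$ self-energies are equal, so Proposition \ref{proA1} gives the $2k\epsilon^3[A+aBr^m+bC_0 r^n+O(r^{m-1}\epsilon + r^{n-1}\epsilon + e^{-(2-\tau)(1-\tau)r/\epsilon})]$ contribution.

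Next I would treat $\Delta_\epsilon(r)$. The key reduction is to use the fact that $(U,V)=(\alpha w,\gamma w)$ solves the limit system \eqref{sys}, so after rescaling
\[
-\epsilon^2\Delta U_{x^j,\epsilon} + U_{x^j,\epsilon} = \mu_1 U_{x^j,\epsilon}^3 + \beta V_{x^j,\epsilon}^2 U_{x^j,\epsilon}.
\]
Testing against $(-1)^{l-1}U_{x^l,\epsilon}$ and summing $(-1)^{j-1}$ converts the quadratic off-diagonal terms in $\int(\epsilon^2|\nabla U_r|^2 + U_r^2)$ into a sum $\sum_{j\neq l}(-1)^{j+l}\int(\mu_1 U_{x^j,\epsilon}^3 + \beta V_{x^j,\epsilon}^2 U_{x^j,\epsilon}) U_{x^l,\epsilon}$. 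Combining with the analogous identity for $V$ and with the expansions of $-\tfrac{\mu_1}{4}\int U_r^4$, $-\tfrac{\mu_2}{4}\int V_r^4$, $-\tfrac{\beta}{2}\int U_r^2 V_r^2$, the dominant interaction integrals that survive reduce to terms of the form
\[
\int U_{x^j,\epsilon}^3 U_{x^l,\epsilon},\quad \int V_{x^j,\epsilon}^3 V_{x^l,\epsilon},\quad \int U_{x^j,\epsilon}^2 U_{x^l,\epsilon} V_{x^j,\epsilon} V_{x^l,\epsilon}^{\,0},\,\ldots
\]
with prefactor $(-1)^{j+l}$.

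The principal point is then the asymptotic evaluation of $\int w(\cdot-a)^p w(\cdot-b)^q$ at large separations. Using $w(r)\sim C_0 r^{-1}e^{-r}$ at infinity (stated after \eqref{11.4}), the standard estimate gives, for any such non-trivial integral with $p+q$ odd or of the ``cubic-linear'' type,
\[
\int U_{x^j,\epsilon}^3 U_{x^l,\epsilon} = \epsilon^3 C\,e^{-|x^j-x^l|/\epsilon} \bigl(1 + o(1)\bigr),
\]
with the same constant $C$ appearing in all such nearest-neighbor integrals (by homogeneity of the asymptotics). Since $(-1)^{j+l}=-1$ exactly when $j,l$ are consecutive, the $2k$ nearest-neighbor pairs (at minimal distance $|x^1-x^2|=2r\sin\tfrac{\pi}{2k}$) dominate and collectively produce the advertised coefficient $C\bigl(\tfrac{\mu_1\alpha^4}{2}+\tfrac{\mu_2\gamma^4}{2}+\beta\alpha^2\gamma^2\bigr)e^{-2r\sin(\pi/2k)/\epsilon}$. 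Any pair $(j,l)$ with cyclic distance $\ge 2$ has $|x^j-x^l|\ge 2r\sin\tfrac{\pi}{k}>(1+\sigma)\cdot 2r\sin\tfrac{\pi}{2k}$ for some fixed $\sigma>0$ (choose $\sigma<\sin(\pi/k)/\sin(\pi/2k)-1$, which is positive and independent of $\epsilon$), giving the $e^{-(1+\sigma)2r\sin(\pi/2k)/\epsilon}$ error term.

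Finally I would absorb the trapping-potential contribution $\tfrac{1}{2}\int(P-1)U_r^2+\tfrac{1}{2}\int(Q-1)V_r^2$. The diagonal terms here are already inside $\sum_j I_\epsilon(U_{x^j,\epsilon},V_{x^j,\epsilon})$ and are handled by Proposition \ref{proA1}; the off-diagonal cross terms $(P-1)U_{x^j,\epsilon}U_{x^l,\epsilon}$ with $j\ne l$ contribute at most $O(\epsilon^3 r^m e^{-|x^j-x^l|/\epsilon})$ and are swept into the $O(r^{m-1}\epsilon)+O(e^{-(1+\sigma)2r\sin(\pi/2k)/\epsilon})$ errors. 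The main obstacle I foresee is the last-mentioned asymptotic step: tracking the numerical coefficient of the leading interaction integral and verifying that the three contributions $\tfrac{\mu_1\alpha^4}{2}$, $\tfrac{\mu_2\gamma^4}{2}$, and $\beta\alpha^2\gamma^2$ arise with a common constant $C$ (the constant is essentially $C_0\int w^3$ times a geometric factor independent of which nonlinear term one starts from, precisely because $U=\alpha w$ and $V=\gamma w$ are scalar multiples of the same $w$). Once that identity is established, summing over the $2k$ nearest-neighbor pairs yields the stated expansion.
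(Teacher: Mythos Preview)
Your approach is essentially the same as the paper's. The paper writes down exactly the decomposition you describe (equation \eqref{1.6}), where the quadratic off-diagonal pieces of $\int(\epsilon^2|\nabla U_r|^2+U_r^2)$ have already been converted, via the limit equation for $(U,V)$, into the cubic--linear interaction integrals; it then isolates the nearest-neighbor pairs $|i-j|\in\{1,2k-1\}$ (equations \eqref{A.7}--\eqref{1.9}) to extract the $C\bigl(\tfrac{\mu_1\alpha^4}{2}+\tfrac{\mu_2\gamma^4}{2}+\beta\alpha^2\gamma^2\bigr)e^{-2r\sin(\pi/2k)/\epsilon}$ term, pushing farther pairs and quadratic--quadratic remainders into $O\bigl(e^{-(1+\sigma)|x^1-x^2|/\epsilon}\bigr)$, and finally absorbs the potential cross terms $\int(P-1)U_{x^i,\epsilon}U_{x^j,\epsilon}$ into the stated error. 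Your outline matches this step for step; the only places where the paper is slightly more explicit are (i) bounding the quadratic--quadratic contributions $\int U_{x^i,\epsilon}^2U_{x^j,\epsilon}^2$ for nearest neighbors (these are $O(\epsilon^3 e^{-(1+\sigma)|x^1-x^2|/\epsilon})$, not only the far pairs), and (ii) splitting $\int(P-1)U_{x^i,\epsilon}U_{x^j,\epsilon}$ into $B_{4r}(0)$ and its complement to get both the $r^m e^{-|x^i-x^j|/\epsilon}$ and the pure exponential error; both are routine and consistent with what you wrote.
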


\begin{proof}
 We know that
\begin{equation}\label{1.6}\begin{split}
I_\epsilon(U_r,V_r)
&=\sum\limits_{j=1}^{2k}I_\epsilon(U_{x^j,\epsilon},V_{x^j,\epsilon})\\
&\quad-\frac{\mu_1}{4}\ds\int_{\R^3}\Big[U_r^4-\sum\limits_{j=1}^{2k}U_{x^j,\epsilon}^4
-2\sum\limits_{i\neq j}(-1)^{i+j}U_{x^j,\epsilon}^3U_{x^i,\epsilon}\Big]\\
&\quad-\frac{\mu_2}{4}\ds\int_{\R^3}\Big[V_r^4-\sum\limits_{j=1}^{2k}V_{x^j,\epsilon}^4
-2\sum\limits_{i\neq j}(-1)^{i+j}V_{x^j,\epsilon}^3V_{x^i,\epsilon}\Big]\\
&\quad-\frac{\beta}{2}\ds\int_{\R^3}\Big[U_r^2V_r^2-\sum\limits_{j=1}^{2k}U_{x^j,\epsilon}^2V_{x^j,\epsilon}^2-\sum\limits_{i\neq j}(-1)^{i+j}V_{x^j,\epsilon}^2
U_{x^j,\epsilon}U_{x^i,\epsilon}-\sum\limits_{i\neq j}(-1)^{i+j}U_{x^j,\epsilon}^2
V_{x^,\epsilon}V_{x_i,\epsilon}\Big]\\
&\quad+\frac{1}{2}\sum\limits_{i\neq j}(-1)^{i+j}\ds\int_{\R^3}\Big[(P(x)-1)U_{x^j,\epsilon}U_{x^i,\epsilon}+(Q(x)-1)V_{x^j,\epsilon}V_{x^i,\epsilon}\Big].\\
\end{split}
\end{equation}

But there exists a small positive $0<\sigma <\min\{\frac{1}{10},\min\{m,n\}-1\}$ such that
\begin{equation}\label{A.7}\begin{split}
&-\frac{\mu_1}{4}\ds\int_{\R^3}\Big[U_r^4-\sum\limits_{j=1}^{2k}U_{x^j,\epsilon}^4
-2\sum\limits_{i\neq j}(-1)^{i+j}U_{x^j,\epsilon}^3U_{x^i,\epsilon}\Big]\\
&=\frac{\mu_1}{2}\ds\int_{\R^3}\Big[\sum\limits_{|i-j|=1 or 2k-1}U_{x^j,\epsilon}^3U_{x^i,\epsilon}+O\Big(\sum\limits_{1<|i-j|<2k-1}U_{x^j,\epsilon}^3U_{x^i,\epsilon}
+\sum\limits_{i\neq j}U_{x^j,\epsilon}^2U_{x^i,\epsilon}^2\Big)\Big]\\
&=\frac{\mu_1\alpha^4}{2}\ds\int_{\R^3}\sum\limits_{|i-j|=1 or 2k-1}w_{x^j,\epsilon}^3w_{x^i,\epsilon}+\epsilon^3 O\big(e^{-\frac{(1+\sigma)|x^1-x^2|}{\epsilon}}\big)\\
&=\epsilon^3 \Big(C\frac{\mu_1\alpha^4}{2}e^{-\frac{2r\sin\frac{\pi}{2k}}{\epsilon}}+O\big(e^{-\frac{(1+\sigma)|x^1-x^2|}{\epsilon}}\big)\Big)
\end{split}
\end{equation}

 Similarly, we have
\begin{equation}\label{1.8}
\begin{split}
&-\frac{\mu_2}{4}\ds\int_{\R^3}\Big[V_r^4-\sum\limits_{j=1}^{2k}V_{x^j,\epsilon}^4
-2\sum\limits_{i\neq j}(-1)^{i+j}V_{x^j,\epsilon}^3V_{x^i,\epsilon}\Big]\\
&=\frac{\mu_2\gamma^4}{2}\ds\int_{\R^3}\sum\limits_{|i-j|=1 or 2k-1}w_{x^j,\epsilon}^3w_{x^i,\epsilon}+\epsilon^3 O\Big(e^{-\frac{(1+\sigma)|x^1-x^2|}{\epsilon}}\Big),\\
&=\epsilon^3 \Big(C\frac{\mu_2\gamma^4}{2}e^{-\frac{2r\sin\frac{\pi}{2k}}{\epsilon}}+O\big(e^{-\frac{(1+\sigma)|x^1-x^2|}{\epsilon}}\big)\Big)
\end{split}
\end{equation}

and
\begin{equation}\label{1.9}
\begin{split}
&-\frac{\beta}{2}\ds\int_{\R^3}\Big[U_r^2V_r^2-\sum\limits_{j=1}^{2k}U_{x^j,\epsilon}^2V_{x^j,\epsilon}^2-\sum\limits_{i\neq j}(-1)^{i+j}V_{x^j,\epsilon}^2
U_{x^j,\epsilon}U_{x^i,\epsilon}-\sum\limits_{i\neq j}(-1)^{i+j}U_{x^j,\epsilon}^2
V_{x^j,\epsilon}V_{x^i,\epsilon}\Big]\\
&=\beta\alpha^2\gamma^2\ds\int_{\R^3}\sum\limits_{|i-j|=1 or 2k-1}w_{x^j,\epsilon}^3w_{x^i,\epsilon}+\epsilon^3
O\big(e^{-\frac{(1+\sigma)|x^1-x^2|}{\epsilon}}\big)\\
&=\epsilon^3 \big(C\beta\alpha^2\gamma^2e^{-\frac{2r\sin\frac{\pi}{2k}}{\epsilon}}
+O\big(e^{-\frac{(1+\sigma)|x^1-x^2|}{\epsilon}}\big)\big)
\end{split}
\end{equation}

 Combining \eqref{1.6}--\eqref{1.9} and Proposition \ref{proA1}, we can get

 \begin{equation}\begin{split}
 I_\epsilon(U_r,V_r)
 &=2k\epsilon^3\Big[A+aBr^m+bC_0r^n+C(\frac{\mu_1\alpha^4}{2}+\frac{\mu_2\gamma^4}{2}+\beta\alpha^2\gamma^2)e^{-\frac{2r\sin\frac{\pi}{2k}}{\epsilon}}\\
 &\quad\quad\quad\quad+ O\big(r^{m-1}\epsilon+r^{n-1}\epsilon+e^{-\frac{(1-\tau)(2-\tau)r}{\epsilon}}
 +e^{-\frac{(1+\sigma)2r\sin\frac{\pi}{2k}}{\epsilon}}\big)\Big].\\
 \end{split}
 \end{equation}

This completes the proof.
\end{proof}

\begin{lemma}\label{LemmaA3}
$$\ds\int_{\R^3}U_{1,x^i,\epsilon}^2U_{2,y^j,\epsilon}^2=\epsilon^3o_\epsilon(1)e^{-\frac{2|x^i-y^j|}{\epsilon}}.$$
\end{lemma}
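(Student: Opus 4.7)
\begin{altproof}{(Plan for Lemma \ref{LemmaA3})}
The plan is to rescale and then reduce to a Laplace-type estimate of a product of two exponentially decaying radial profiles centered at well-separated points. First I would perform the change of variables $y=(x-x^i)/\epsilon$, so that $x-y^j=\epsilon(y-z)$ with $z:=(y^j-x^i)/\epsilon$. Writing $d:=|z|=|x^i-y^j|/\epsilon$, the integral transforms into
\[
\int_{\R^3}U_{1,x^i,\epsilon}^2 U_{2,y^j,\epsilon}^2\,dx
=\epsilon^3\int_{\R^3}U_1(y)^2\,U_2(y-z)^2\,dy.
\]
Under the hypotheses on $r,\rho\in S_\epsilon$ one has $d\geq c\ln\frac{1}{\epsilon}\to\infty$ as $\epsilon\to 0^+$, which is the small parameter that will eventually produce the $o_\epsilon(1)$ factor.

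Next I would use the pointwise bound $U_k(y)\leq C(1+|y|)^{-1}e^{-|y|}$ (which follows from the decay properties of the unique ground state recalled below \eqref{11.4}). The integrand is then dominated by
\[
\frac{C\,e^{-2(|y|+|y-z|)}}{(1+|y|)^{2}(1+|y-z|)^{2}},
\]
and by the triangle inequality $|y|+|y-z|\geq d$, with equality precisely on the segment joining $0$ and $z$. To extract a gain over the naive bound $O(e^{-2d})$, I would introduce the orthogonal decomposition $y=t e+y^\perp$, $e:=z/|z|$, $t\in\R$, $y^\perp\in e^\perp$, and Taylor-expand
\[
|y|+|y-z|=\sqrt{t^2+|y^\perp|^2}+\sqrt{(d-t)^2+|y^\perp|^2}
= d+\frac{|y^\perp|^2 d}{2t(d-t)}+O\!\left(\frac{|y^\perp|^4}{t(d-t)^3}+\cdots\right)
\]
uniformly on the region where $t\in[\delta_0 d,(1-\delta_0)d]$ and $|y^\perp|\leq \sqrt{t(d-t)/d}\,\log d$, say.

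Then I would split the $y$-integral into three pieces: a neighborhood of $0$ (namely $|y|\leq 1$), a neighborhood of $z$ (namely $|y-z|\leq 1$), and the bulk. For the two neighborhoods, $U_k$ is bounded on one factor and the other is pointwise $O(d^{-1}e^{-d})$, giving a contribution of order $d^{-2}e^{-2d}$. For the bulk, I would perform the Gaussian integral in $y^\perp$, which yields
\[
\int_{e^\perp}e^{-|y^\perp|^2 d/(t(d-t))}\,dy^\perp=\frac{\pi t(d-t)}{d},
\]
so that the bulk piece is bounded by
\[
C\,e^{-2d}\int_{1}^{d-1}\frac{1}{(1+t)^2(1+d-t)^2}\cdot\frac{t(d-t)}{d}\,dt
\leq \frac{C\,e^{-2d}}{d}\int_{1}^{d-1}\frac{dt}{(1+t)(1+d-t)}
=O\!\left(\frac{\log d}{d^{2}}\right)e^{-2d}.
\]
Combining the three pieces yields $\int U_1(y)^2U_2(y-z)^2\,dy = O\!\bigl(\tfrac{\log d}{d^2}\bigr)e^{-2d}=o_\epsilon(1)\,e^{-2d}$, which after reinserting $\epsilon^3$ and recognizing $d=|x^i-y^j|/\epsilon$ gives the claim.

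The routine part is the Gaussian reduction; the main obstacle is organizing the estimates near the two exponential singularities (at $y=0$ and $y=z$) and in the intermediate region so that the gain of $1/d^{2}$ truly appears rather than cancelling with some harmless logarithmic loss. Once the transverse Gaussian integration is executed carefully on the bulk and the two endpoint caps are treated separately, the polynomial prefactor $(\log d)/d^{2}$ is explicit, and since $d\to\infty$ as $\epsilon\to 0^+$ the prefactor is an $o_\epsilon(1)$, as required.
\end{altproof}
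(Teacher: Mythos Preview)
Your plan is sound in strategy and would prove the lemma, but it takes a genuinely different route from the paper. The paper never invokes a Laplace/Gaussian expansion. Instead, after (implicitly) the same rescaling, it decomposes the domain according to which center is closer, $\Omega_1=\{|x-y^j|\geq|x-x^i|\}$ versus $\Omega_2$, and on $\Omega_2$ further splits a neighborhood of $y^j$ by the cutoff $|x-y^j|\lessgtr \epsilon(\ln\tfrac{1}{\epsilon})^{1/3}$. On each piece it uses only the triangle inequality $|x-x^i|+|x-y^j|\geq|x^i-y^j|$ together with the polynomial prefactors $|y|^{-1}$, $|y-z|^{-1}$ in the decay of $U_1,U_2$; the inner piece gives $O\bigl((\ln\tfrac1\epsilon)^{1/3}/d^2\bigr)e^{-2d}$ and the outer piece $O\bigl((\ln\tfrac1\epsilon)^{-1/3}\bigr)e^{-2d}$. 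This is coarser than your claimed $(\log d)/d^2$ but avoids any asymptotic analysis of $|y|+|y-z|$.

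One technical caveat in your plan: the quadratic you write, $|y|+|y-z|\approx d+\tfrac{|y^\perp|^2 d}{2t(d-t)}$, is an \emph{upper} bound for $|y|+|y-z|$ (since $\sqrt{a^2+s^2}\leq a+s^2/(2a)$), so replacing the true exponent by the Gaussian gives a \emph{lower} bound on the integrand, not an upper bound. To bound the bulk from above you need the reverse inequality, e.g.\ $|y|+|y-z|-d\geq c\,|y^\perp|^2 d/(t(d-t))$ for $|y^\perp|\lesssim\min(t,d-t)$ together with a linear lower bound for large $|y^\perp|$; this is routine but should be said. Incidentally, a shorter argument bypasses both decompositions: split by $|y|\leq d/2$ versus $|y|>d/2$; on the first region $|y-z|\geq d/2$ gives $U_2(y-z)^2\leq Cd^{-2}e^{-2(d-|y|)}$, and then $\int_{|y|\leq d/2}U_1(y)^2e^{2|y|}\,dy\leq C\int_{|y|\leq d/2}(1+|y|)^{-2}\,dy=O(d)$, yielding $O(d^{-1})e^{-2d}$ directly.
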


\begin{proof}
Denote
$$
\Omega_1=\{x\in \R^3: |x-y^j|\geq|x-x^i|\},~\Omega_2=\{x\in \R^3: |x-y^j|\leq|x-x^i|\},
$$
$$
\omega_1=\{x\in \R^3: |x^i-y^j|\geq|x-y^j|\},~\omega_2=\{x\in \R^3: |x^i-y^j|\leq|x-y^j|\}
$$
and
$$
\omega^\prime_1=\Big\{x\in \omega_1: |x-y^j|\leq \epsilon \Big(\ln\frac{1}{\epsilon}\Big)^\frac{1}{3}\Big\},
~\omega^{\prime\prime}_1=\Big\{x\in \omega_1: |x-y^j|\geq \epsilon \Big(\ln\frac{1}{\epsilon}\Big)^\frac{1}{3}\Big\}.
$$
Then we have
$$
\ds\int_{\R^3}U_{1,x^i,\epsilon}^2U_{2,y^j,\epsilon}^2=\ds\int_{\Omega_1}U_{1,x^i,\epsilon}^2U_{2,y^j,\epsilon}^2+
\ds\int_{\Omega_2}U_{1,x^i,\epsilon}^2U_{2,y^j,\epsilon}^2.
$$

Since
we can estimate this term $\ds\int_{\Omega_1}U_{1,x^i,\epsilon}^2U_{2,y^j,\epsilon}^2$ similarly,
here we only estimate $\ds\int_{\Omega_2}U_{1,x^i,\epsilon}^2U_{2,y^j,\epsilon}^2.$

By the definition of $\Omega_2$, we can conclude that
$$|x-x^i|\geq \frac{1}{2}|x^i-y^j|, \quad ~\forall x \in \Omega_2.$$
Then we have
\begin{equation}\label{1.11}
\begin{split}
\ds\int_{\Omega_2\cap\omega_2}U_{1,x^i,\epsilon}^2U_{2,y^j,\epsilon}^2&\leq Ce^{-\frac{2|x^i-y^j|}{\epsilon}}
\ds\int_{\Omega_2\cap\omega_2}e^{-\frac{2|x-x^i|}{\epsilon}}\\
&\leq Ce^{-\frac{2|x^i-y^j|}{\epsilon}}
\ds\int_{\Omega_2\cap\omega_2}e^{-\frac{|x-x^i|}{\epsilon}}e^{-\frac{|x^i-y^j|}{2\epsilon}}.\\
&\leq C\epsilon^3e^{-\frac{5|x^i-y^j|}{2\epsilon}}\\
\end{split}
\end{equation}

$$
\ds\int_{\Omega_2\cap\omega_1}U_{1,x^i,\epsilon}^2U_{2,y^j,\epsilon}^2=\ds\int_{\Omega_2\cap\omega^\prime_1}U_{1,x^i,\epsilon}^2U_{2,y^j,\epsilon}^2
+\ds\int_{\Omega_2\cap\omega^{\prime\prime}_1}U_{1,x^i,\epsilon}^2U_{2,y^j,\epsilon}^2,
$$

\begin{equation}\label{A.8}
\begin{split}
\ds\int_{\Omega_2\cap\omega^\prime_1}U_{1,x^i,\epsilon}^2U_{2,y^j,\epsilon}^2&\leq C\frac{e^{-\frac{2|x^i-y^j|}{\epsilon}}}{|\frac{x^i-y^j}{\epsilon}|^2}
\ds\int_{\Omega_2\cap\omega^\prime_1}\frac{e^{-\frac{2(|x-x^i|+|x-y^j|-|x^i-y^j|)}{\epsilon}}}{|\frac{x-y^j}{\epsilon}|^2}\\
&\leq C\frac{e^{-\frac{2|x^i-y^j|}{\epsilon}}}{|\frac{x^i-y^j}{\epsilon}|^2}\epsilon^3
\ds\int_{|x|\leq (\ln\frac{1}{\epsilon})^\frac{1}{3}}\frac{e^{-2(|x|+|x-\frac{(x^i-y^j)}{\epsilon}|-|\frac{x^i-y^j}{\epsilon}|)}}{|x|^2}\\
&\leq C\frac{e^{-\frac{2|x^i-y^j|}{\epsilon}}}{|\frac{x^i-y^j}{\epsilon}|^2}\epsilon^3
\ds\int_{|x|\leq (\ln\frac{1}{\epsilon})^\frac{1}{3}}\frac{1}{|x|^2}\\
&\leq C\frac{e^{-\frac{2|x^i-y^j|}{\epsilon}}}{|\frac{x^i-y^j}{\epsilon}|^2}\epsilon^3\Big(\ln\frac{1}{\epsilon}\Big)^\frac{1}{3}\\
&\leq Ce^{-\frac{2|x^i-y^j|}{\epsilon}}\epsilon^3\frac{1}{\big(\ln\frac{1}{\epsilon}\big)^\frac{1}{3}}
\end{split}
\end{equation}
and
\begin{equation}\label{A.9}
\begin{split}
\ds\int_{\Omega_2\cap\omega^{\prime\prime}_1}U_{1,x^i,\epsilon}^2U_{2,y^j,\epsilon}^2&\leq Ce^{-\frac{2|x^i-y^j|}{\epsilon}}
\ds\int_{\Omega_2\cap\omega^{\prime\prime}_1}\frac{e^{-\frac{2(|x-x^i|+|x-y^j|-|x^i-y^j|)}{\epsilon}}}{|\frac{x-y^j}{\epsilon}|^4}\\
&\leq Ce^{-\frac{2|x^i-y^j|}{\epsilon}}\epsilon^3
\ds\int_{|x|\geq (\ln\frac{1}{\epsilon})^\frac{1}{3}}\frac{e^{-2(|x|+|x-\frac{(x^i-y^j)}{\epsilon}|-|\frac{x^i-y^j}{\epsilon}|)}}{|x|^4}\\
&\leq Ce^{-\frac{2|x^i-y^j|}{\epsilon}}\epsilon^3
\ds\int_{|x|\geq (\ln\frac{1}{\epsilon})^\frac{1}{3}}\frac{1}{|x|^4}\\
&\leq Ce^{-\frac{2|x^i-y^j|}{\epsilon}}\epsilon^3\frac{1}{\big(\ln\frac{1}{\epsilon}\big)^\frac{1}{3}}.
\end{split}
\end{equation}

From \eqref{A.8} and \eqref{A.9}, we can easily get
\begin{equation}\label{1.13}\begin{split}
\ds\int_{\Omega_2\cap\omega_1}U_{1,x^i,\epsilon}^2U_{2,y^j,\epsilon}^2=o_\epsilon(1)e^{-\frac{2|x^i-y^j|}{\epsilon}}\epsilon^3.
\end{split}
\end{equation}

Combining \eqref{1.11} and \eqref{1.13}, we can get
$$\ds\int_{\Omega_2}U_{1,x^i,\epsilon}^2U_{2,y^j,\epsilon}^2=o_\epsilon(1)e^{-\frac{2|x^i-y^j|}{\epsilon}}\epsilon^3.$$
With the same method, we can also obtain that
$$\ds\int_{\Omega_1}U_{1,x^i,\epsilon}^2U_{2,y^j,\epsilon}^2=o_\epsilon(1)e^{-\frac{2|x^i-y^j|}{\epsilon}}\epsilon^3.$$
So$$\ds\int_{\R^3}U_{1,x^i,\epsilon}^2U_{2,y^j,\epsilon}^2=o_\epsilon(1)e^{-\frac{2|x^i-y^j|}{\epsilon}}\epsilon^3.$$
This completes the proof.
\end{proof}

Using Lemma \ref{LemmaA3},   similar to Proposition \ref{proA1}, we can get the following Proposition.

\begin{proposition}\label{proA4}
Assume that $(P)$ and $(Q)$ hold. Then we get the following energy estimate:
$$\begin{array}{rl}
I_\epsilon(U_{1,x^j,\epsilon},U_{2,y^j,\epsilon})&=\epsilon^3\Big[\tilde{A}+
a\tilde{B}r^m+b\tilde{C}\rho^n-o_\epsilon(1)e^{-\frac{2\sqrt{(\rho-r\cos\frac{\pi}{2k})^2+(r\sin\frac{\pi}{2k})^2}}{\epsilon}}\\[5mm]
&\quad\quad+O\Big(e^{-\frac{(1-\tilde{\tau}_1)(2-\tilde{\tau}_1)r}{\epsilon}}
+e^{-\frac{(1-\tilde{\tau}_1)(2-\tilde{\tau}_1)\rho}{\epsilon}}+\rho^{n-1}\epsilon+r^{m-1}\epsilon
\Big)\Big],\end{array}$$
where $a,b$ is given in $(P)$ and $(Q)$, $\tilde{\tau}_1$ has been determined in Lemma \ref{lm3.3} , $\tilde{A}=\frac{1}{4}\ds\int_{\R^3}(\mu_1U_1^4+\mu_2U_2^4)~dx$,
$\tilde{B}=\frac{1}{2}\ds\int_{\R^3}U_1^2~dx$,  and $\tilde{C}=\frac{1}{2}\ds\int_{\R^3}U_2^2~dx.$
\end{proposition}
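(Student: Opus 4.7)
The plan is to follow the same strategy as in Proposition \ref{proA1}, using the equations satisfied by $U_1$ and $U_2$ to eliminate the quadratic (in derivatives) part of $I_\epsilon$, and then handle the two ``new'' features that distinguish the segregated case from the synchronized case: (i) the potential perturbations $P-1$ and $Q-1$ are now centered at different points $x^j$ and $y^j$, and (ii) the coupling term $\int U_{1,x^j,\epsilon}^2 U_{2,y^j,\epsilon}^2$ no longer localizes at a single spike and must be estimated by Lemma~\ref{LemmaA3}.

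First I would rewrite
$$
I_\epsilon(U_{1,x^j,\epsilon},U_{2,y^j,\epsilon})
=\tfrac14\!\int_{\R^3}\!\big(\mu_1 U_{1,x^j,\epsilon}^4+\mu_2 U_{2,y^j,\epsilon}^4\big)
+\tfrac12\!\int_{\R^3}\!\big[(P(x)-1)U_{1,x^j,\epsilon}^2+(Q(x)-1)U_{2,y^j,\epsilon}^2\big]
-\tfrac{\beta}{2}\!\int_{\R^3}\!U_{1,x^j,\epsilon}^2 U_{2,y^j,\epsilon}^2,
$$
by testing the equations $-\epsilon^2\Delta U_{i,\cdot,\epsilon}+U_{i,\cdot,\epsilon}=\mu_i U_{i,\cdot,\epsilon}^3$ against themselves. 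After a change of variable, the first term is $\tfrac{\epsilon^3}{4}\int(\mu_1 U_1^4+\mu_2 U_2^4)=\epsilon^3\tilde A$.

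Next I would estimate the two potential terms exactly as in \eqref{A.4}--\eqref{A.5}. Changing variables $x=\epsilon y+x^j$, splitting the integral over $B_{(1-\tilde\tau_1)r/\epsilon}(0)$ and its complement, and using the expansion $P(r)=1+ar^m+O(r^{m+\theta})$ together with $|\epsilon y+x^j|^m=|x^j|^m(1+O(\epsilon|y|/|x^j|))$ yields
$$
\tfrac12\!\int_{\R^3}\!(P(x)-1)U_{1,x^j,\epsilon}^2
=\epsilon^3\!\left[a\tilde B r^m+O\!\left(r^{m-1}\epsilon+e^{-\frac{(1-\tilde\tau_1)(2-\tilde\tau_1)r}{\epsilon}}\right)\right];
$$
the identical computation, centered at $y^j$ with $|y^j|=\rho$ and $Q(r)=1+br^n+O(r^{n+\delta})$, gives the analogous term $b\tilde C\rho^n$ with error $O(\rho^{n-1}\epsilon+e^{-(1-\tilde\tau_1)(2-\tilde\tau_1)\rho/\epsilon})\epsilon^3$.

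The main new input is the cross term $-\tfrac{\beta}{2}\int U_{1,x^j,\epsilon}^2 U_{2,y^j,\epsilon}^2$, and here Lemma~\ref{LemmaA3} gives directly
$$
\int_{\R^3} U_{1,x^j,\epsilon}^2 U_{2,y^j,\epsilon}^2 = \epsilon^3 o_\epsilon(1)\, e^{-\frac{2|x^j-y^j|}{\epsilon}}.
$$
Using the angle $\tfrac{(2j-1)\pi}{2k}-\tfrac{(j-1)\pi}{k}=\tfrac{\pi}{2k}$ between $x^j$ and $y^j$, the law of cosines yields
$$
|x^j-y^j|^2=r^2+\rho^2-2r\rho\cos\tfrac{\pi}{2k}
=\bigl(\rho-r\cos\tfrac{\pi}{2k}\bigr)^2+\bigl(r\sin\tfrac{\pi}{2k}\bigr)^2,
$$
which produces precisely the $-o_\epsilon(1)e^{-2\sqrt{(\rho-r\cos\frac{\pi}{2k})^2+(r\sin\frac{\pi}{2k})^2}/\epsilon}$ contribution in the claimed expansion. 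Assembling the four pieces and absorbing higher-order polynomial-times-exponential errors into the stated $O(\cdot)$ completes the estimate. The only place demanding care is the bookkeeping of the split-domain error exponents $(1-\tilde\tau_1)(2-\tilde\tau_1)$, which are dictated by the requirement $\tilde\tau_1\in D$ from Lemma~\ref{lemma2.5} so that the outer-region tail of $U_i^2$ beats the polynomial growth of $|x|^{m+\theta}$ or $|x|^{n+\delta}$; this is routine once the splitting radius is chosen as in \eqref{A.4}.
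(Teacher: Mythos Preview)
Your proof is correct and follows essentially the same approach as the paper: decompose $I_\epsilon(U_{1,x^j,\epsilon},U_{2,y^j,\epsilon})$ using the equations for $U_i$ to reduce the kinetic part to $\epsilon^3\tilde A$, estimate the two potential integrals exactly as in \eqref{A.4}, and apply Lemma~\ref{LemmaA3} to the coupling term. Your explicit computation of $|x^j-y^j|$ via the law of cosines is a detail the paper leaves implicit, but otherwise the argument matches step for step.
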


\begin{proof}
We know that

$$\begin{array}{rl}
I_\epsilon(U_{1,x^j,\epsilon},U_{2,y^j,\epsilon})
&
=\ds\frac{1}{2}\ds\int_{\R^3}\big(\epsilon^2|\nabla U_{1,x^j,\epsilon}|^2 +U_{1,x^j,\epsilon}^2 +
\epsilon^2|\nabla U_{2,y^j,\epsilon}|^2 +U_{2,y^j,\epsilon}^2\big)\\[5mm]
&\quad-\ds\frac{1}{4}\ds\int_{\R^3}\big(\mu_1|U_{1,x^j,\epsilon}|^4+\mu_2|U_{2,y^j,\epsilon}|^4\big)
-\ds\frac{\beta}{2}\ds\int_{\R^3}U_{1,x^j,\epsilon}^2U_{2,y^j,\epsilon}^2\\[5mm]
&\quad+\ds\frac{1}{2}\ds\int_{\R^3}\Big[(P(x)-1)U_{1,x^j,\epsilon}^2+(Q(x)-1)U_{2,y^j,\epsilon}^2\Big].\\[5mm]
\end{array}$$

Since $U_i$ is the unique radial solutions of the following problem
$$-\Delta u+u =\mu_iu^3,\, \max\limits_{x\in\R^3}u=u(0),u>0,$$
we have
$$
\begin{array}{rl}
&\ds\frac{1}{2}\ds\int_{\R^3}\big(\epsilon^2|\nabla U_{1,x^j,\epsilon}|^2 +U_{1,x^j,\epsilon}^2 +
\epsilon^2|\nabla U_{2,y^j,\epsilon}|^2 +U_{2,y^j,\epsilon}^2\big)\\[5mm]
&-\ds\frac{1}{4}\ds\int_{\R^3}\big(\mu_1|U_{1,x^j,\epsilon}|^4+\mu_2|U_{2,y^j,\epsilon}|^4\big)\\[5mm]
&=
\ds\frac{1}{4}\ds\int_{\R^3}\big(\mu_1|U_{1,x^j,\epsilon}|^4+\mu_2|U_{2,y^j,\epsilon}|^4\big)\\[5mm]
&=\ds\frac{1}{4}\epsilon^3\ds\int_{\R^3}\big(\mu_1U_1^4+\mu_2U_2^4\big).\\[5mm]
\end{array}
$$
Similar to \eqref{A.4}, noting that 
$$
P(r)=1+ar^m+O(r^{m+\theta})~\quad \hbox{as}~r \to 0^+
$$ 
and
$$
Q(r)=1+br^n+O(r^{n+\delta})~ \quad \hbox{as}~r \to 0^+,
$$ 
we can get that
\begin{equation}
\frac{1}{2}\ds\int_{\R^3}(P(x)-1)U_{1,x^j,\epsilon}^2 ~dx
=\epsilon^3\Big[a\tilde{B}r^m+O(r^{m-1}\epsilon)
+O\big(e^{-\frac{(2-\tilde{\tau}_1)(1-\tilde{\tau}_1)r}{\epsilon}}\big)\Big]
\end{equation}
and
\begin{equation}
\frac{1}{2}\ds\int_{\R^3}(Q(x)-1)U_{2,y^j,\epsilon}^2~dx
=\epsilon^3\Big[b\tilde{C}\rho^n+O(\rho^{n-1}\epsilon)
+O\big(e^{-\frac{(2-\tilde{\tau_1})(1-\tilde{\tau_1})\rho}{\epsilon}}\big)\Big],
\end{equation}

where $\tilde{\tau}_1>0$ is a constant.

From Lemma \ref{LemmaA3}, we have that
$$ \frac{\beta}{2}\ds\int_{\R^3}U_{1,x^j,\epsilon}^2U_{2,y^j,\epsilon}^2~dx=\beta\epsilon^3o_\epsilon(1)e^{-\frac{2|x^1-y^1|}{\epsilon}}.$$
Therefore, we have
$$
\begin{array}{rl}
I_\epsilon(U_{1,x^j,\epsilon},U_{2,y^j,\epsilon})&=\epsilon^3\Big[\tilde{A}+
a\tilde{B}r^m+b\tilde{C}\rho^n-o_\epsilon(1)e^{-\frac{2\sqrt{(\rho-r\cos\frac{\pi}{2k})^2+(r\sin\frac{\pi}{2k})^2}}{\epsilon}}\\[5mm]
&\quad\quad\,\,+O\big(e^{-\frac{(1-\tilde{\tau}_1)(2-\tilde{\tau}_1)r}{\epsilon}}
+e^{-\frac{(1-\tilde{\tau}_1)(2-\tilde{\tau}_1)\rho}{\epsilon}}+\rho^{n-1}\epsilon+r^{m-1}\epsilon
\big)\Big].
\end{array}
$$
We complete the proof.

\end{proof}

\begin{proposition}\label{proA5}
Assume that $(P)$ and $(Q)$ hold. Then there exist positive constants $B_1$ and $B_2$ such that
$$
\begin{array}{rl}
&I_\epsilon(\tilde{U}_r,\tilde{V}_\rho)=2k\epsilon^3\Big[\tilde{A}+
a\tilde{B}r^m+b\tilde{C}\rho^n+B_1e^{-\frac{2r\sin\frac{\pi}{2k}}{\epsilon}}
+B_2e^{-\frac{2\rho\sin\frac{\pi}{2k}}{\epsilon}}\\[5mm]
&\quad\quad\quad\quad\quad\quad\quad\quad+o_\epsilon(1)
e^{-\frac{2\sqrt{(\rho-r\cos\frac{\pi}{2k})^2+(r\sin\frac{\pi}{2k})^2}}{\epsilon}}
+O\big(e^{-\frac{(1-\tilde{\tau}_1)(2-\tilde{\tau}_1)r}{\epsilon}}
+e^{-\frac{(1-\tilde{\tau}_1)(2-\tilde{\tau}_1)\rho}{\epsilon}}
\\[5mm]
&\quad\quad\quad\quad\quad\quad\quad\quad
+\rho^{n-1}\epsilon+r^{m-1}\epsilon
+e^{-\frac{(1+\sigma)2r\sin\frac{\pi}{2k}}{\epsilon}}+e^{-\frac{(1+\sigma)2\rho\sin\frac{\pi}{2k}}{\epsilon}}\big)\Big],
\end{array}
$$
where $\sigma$ has been determined in Proposition \ref{proA2}.

\end{proposition}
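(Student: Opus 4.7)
The plan is to decompose the functional $I_\epsilon(\tilde U_r,\tilde V_\rho)$ into single-peak contributions plus pairwise interaction terms, then invoke Proposition \ref{proA4} on the single-peak pieces and estimate the interactions using the exponential decay of $w$ together with Lemma \ref{LemmaA3}. This parallels exactly the computation done for Proposition \ref{proA2} in the synchronized case, but now the $U_1$-peaks sit at $\{x^j\}$ and the $U_2$-peaks at $\{y^j\}$, so cross-species interactions are genuinely present.

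First, expand
\begin{equation*}
I_\epsilon(\tilde U_r,\tilde V_\rho) = \sum_{j=1}^{2k} I_\epsilon(U_{1,x^j,\epsilon},U_{2,y^j,\epsilon}) + (\text{interaction terms}),
\end{equation*}
where the interaction terms group naturally as follows. From the two quartic self-interactions $\tfrac{\mu_1}{4}\int \tilde U_r^4$ and $\tfrac{\mu_2}{4}\int \tilde V_\rho^4$, the leading contributions come from neighboring peaks within the same species, and the standard computation (as in \eqref{A.7}) gives
\begin{equation*}
-\tfrac{\mu_1}{4}\!\int\!\Big[\tilde U_r^4-\sum_j U_{1,x^j,\epsilon}^4-2\sum_{i\neq j}(-1)^{i+j}U_{1,x^j,\epsilon}^3 U_{1,x^i,\epsilon}\Big] = 2k\epsilon^3\Big(\tfrac{\mu_1}{2}C\,e^{-2r\sin\frac{\pi}{2k}/\epsilon}+O(e^{-(1+\sigma)2r\sin\frac{\pi}{2k}/\epsilon})\Big),
\end{equation*}
and similarly for $\tilde V_\rho^4$, with exponent $2\rho\sin\frac{\pi}{2k}/\epsilon$. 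This produces the $B_1$ and $B_2$ terms. The quadratic potential cross terms $\tfrac12\sum_{i\neq j}(-1)^{i+j}\int(P(x)-1)U_{1,x^i,\epsilon}U_{1,x^j,\epsilon}$ (and the $Q$-analog) are absorbed into the $e^{-(1-\tilde\tau_1)(2-\tilde\tau_1)r/\epsilon}$ and $e^{-(1+\sigma)2r\sin(\pi/2k)/\epsilon}$ errors exactly as in the proof of Proposition \ref{proA2}.

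The new ingredient is the coupling $-\tfrac{\beta}{2}\int \tilde U_r^2 \tilde V_\rho^2$. Expanding the square we obtain
\begin{equation*}
\int \tilde U_r^2 \tilde V_\rho^2 = \sum_{i,j,k,l}(-1)^{i+j+k+l}\int U_{1,x^i,\epsilon}U_{1,x^j,\epsilon}U_{2,y^k,\epsilon}U_{2,y^l,\epsilon}.
\end{equation*}
Each summand is controlled by Cauchy–Schwarz together with Lemma \ref{LemmaA3}, which gives $\int U_{1,x^i,\epsilon}^2 U_{2,y^k,\epsilon}^2 = \epsilon^3 o_\epsilon(1) e^{-2|x^i-y^k|/\epsilon}$. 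The dominant contribution comes from the configuration where $x^i$ and $y^k$ are nearest neighbors across species; by the geometric definition of $x^j$ and $y^j$ in \eqref{1.3}--\eqref{1.4}, the minimum distance is
\begin{equation*}
|x^1-y^1| = \sqrt{(\rho-r\cos\tfrac{\pi}{2k})^2+(r\sin\tfrac{\pi}{2k})^2},
\end{equation*}
yielding precisely the $o_\epsilon(1)\,e^{-2\sqrt{(\rho-r\cos(\pi/2k))^2+(r\sin(\pi/2k))^2}/\epsilon}$ term in the statement. All other cross-pairings involve larger distances and are absorbed into the error.

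The main technical care required is bookkeeping: verifying that among all $O(k^4)$ cross-species pairings only the nearest pair survives at leading order, while all other pairings (further cross-species pairs, same-species non-nearest-neighbors, and the mixed trilinear terms $U_{1,\cdot}U_{1,\cdot}U_{2,\cdot}^2$, etc.) yield strictly smaller exponentials that fit inside the error term $e^{-(1+\sigma)2r\sin(\pi/2k)/\epsilon}+e^{-(1+\sigma)2\rho\sin(\pi/2k)/\epsilon}$. This is where one uses that $\min\{2r\sin\frac{\pi}{2k},2\rho\sin\frac{\pi}{2k}\}$ is the dominant same-species scale while non-nearest pairings pick up an extra factor $(1+\sigma)$ as in \eqref{A.7}. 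Once this geometric separation is verified, combining Proposition \ref{proA4} (summed over $j=1,\dots,2k$) with the above same-species and cross-species interaction estimates yields the claimed expansion.
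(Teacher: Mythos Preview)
Your proof sketch is correct and follows essentially the same approach as the paper: decompose $I_\epsilon(\tilde U_r,\tilde V_\rho)$ into the sum of single-peak energies plus same-species quartic interactions, potential cross terms, and the $\beta$-coupling, then invoke Proposition \ref{proA4} and Lemma \ref{LemmaA3} respectively. The only cosmetic difference is that the paper bounds the entire coupling remainder $\tfrac{\beta}{2}\int(|\tilde U_r|^2|\tilde V_\rho|^2-\sum_j U_{1,x^j,\epsilon}^2U_{2,y^j,\epsilon}^2)$ directly by $C\sum_j\int U_{1,x^j,\epsilon}^2U_{2,y^j,\epsilon}^2$ and then applies Lemma \ref{LemmaA3}, rather than expanding into $O(k^4)$ terms and tracking them individually; your version is slightly more explicit but leads to the same conclusion.
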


\begin{proof} We can obtain that
\begin{equation}\label{A.17}
\begin{array}{rl}
I_\epsilon(\tilde{U}_r,\tilde{V}_\rho)
&=\ds\sum\limits_{j=1}^{2k}I_\epsilon(U_{1,x^j,\epsilon},U_{2,y^j,\epsilon})\\[5mm]
&\quad-\ds\frac{\mu_1}{4}\ds\int_{\R^3}\Big(|\tilde{U}_r|^4-\sum\limits_{j=1}^{2k}U_{1,x^j,\epsilon}^4-2\sum\limits_{i \neq j}(-1)^{i+j}U_{1,x^i,\epsilon}^3U_{1,x^j,\epsilon}\Big)\\[5mm]
&\quad-\ds\frac{\mu_2}{4}\ds\int_{\R^3}\Big(|\tilde{V}_\rho|^4-\sum\limits_{j=1}^{2k}U_{2,y^j,\epsilon}^4-2\sum\limits_{i \neq j}(-1)^{i+j}U_{2,y^i,\epsilon}^3 U_{2,y^j,\epsilon}\Big)\\[5mm]
&\quad-\ds\frac{\beta}{2}\ds\int_{\R^3}\Big(|\tilde{U}_r|^2|\tilde{V}_\rho|^2
-\sum\limits_{j=1}^{2k}U_{1,x^j,\epsilon}^2U_{2,y^j,\epsilon}^2\Big)\\[5mm]
&\quad+\ds\frac{1}{2}\sum\limits_{i \neq j}(-1)^{i+j}\ds\int_{\R^3}\Big[(P(x)-1)U_{1,x^i,\epsilon}U_{1,x^j,\epsilon}+(Q(x)-1)U_{2,y^i,\epsilon}U_{2,y^j,\epsilon}\Big].\\[5mm]
\end{array}
\end{equation}

Similar to \eqref{A.7}, we can get that there exist positive constants $B_1$ and $B_2$ such that
\begin{equation}
\begin{array}{rl}
&-\frac{\mu_1}{4}\ds\int_{\R^3}\Big(|\tilde{U}_r|^4-\sum\limits_{j=1}^{2k}U_{1,x^j,\epsilon}^4-2\sum\limits_{i \neq j}(-1)^{i+j}U_{1,x^i,\epsilon}^3U_{1,x^j,\epsilon}\Big)
\\[5mm]&=B_1e^{-\frac{2r\sin\frac{\pi}{2k}}{\epsilon}}\epsilon^3
+O\big(e^{-\frac{(1+\sigma)2r\sin\frac{\pi}{2k}}{\epsilon}}\big)\epsilon^3
\end{array}
\end{equation}
and
\begin{equation}
\begin{array}{rl}
&-\ds\frac{\mu_2}{4}\ds\int_{\R^3}\Big(|\tilde{V}_\rho|^4-\sum\limits_{j=1}^{2k}U_{2,y^j,\epsilon}^4-2\sum\limits_{i \neq j}(-1)^{i+j}U_{2,y^i,\epsilon}^3 U_{2,y^j,\epsilon}\Big)
\\[5mm]&=B_2e^{-\frac{2\rho\sin\frac{\pi}{2k}}{\epsilon}}\epsilon^3
+O\big(e^{-\frac{(1+\sigma)2\rho\sin\frac{\pi}{2k}}{\epsilon}}\big)\epsilon^3.
\end{array}
\end{equation}

On the other hand, we have
\begin{equation}
\begin{array}{rl}
&\frac{1}{2}\ds\int_{\R^3}(P(x)-1)U_{1,x^i,\epsilon}U_{1,x^j,\epsilon}\\[5mm]
&=\frac{1}{2}\ds\int_{B_{4r}(0)}(P(x)-1)U_{1,x^i,\epsilon}U_{1,x^j,\epsilon}+\frac{1}{2}\ds\int_{B^c_{4r}(0)}(P(x)-1)U_{1,x^i,\epsilon}U_{1,x^j,\epsilon}\\[5mm]
&\leq Cr^m\ds\int_{\R^3}U_{1,x^i,\epsilon}U_{1,x^j,\epsilon}+C\frac{1}{2}\ds\int_{B^c_{4r}(0)}(U_{1,x^i,\epsilon}^2+U_{1,x^j,\epsilon}^2)\\[5mm]
&\leq C\epsilon^3\big(r^m e^{-\frac{2r\sin\frac{\pi}{2k}}{\epsilon}}+e^{-\frac{3(2-\tilde{\tau}_1)r}{\epsilon}}\big)\\[5mm]
&=O\epsilon^3\big(r^{2m}+e^{-\frac{4r\sin\frac{\pi}{2k}}{\epsilon}}+e^{-\frac{3(2-\tilde{\tau}_1)r}{\epsilon}}\big).
\end{array}
\end{equation}
Similarly, we have
\begin{equation}
\frac{1}{2}\ds\int_{\R^3}(Q(x)-1)U_{2,y^i,\epsilon}U_{2,y^j,\epsilon}
=O\epsilon^3\big(\rho^{2n}+e^{-\frac{4\rho\sin\frac{\pi}{2k}}{\epsilon}}+e^{-\frac{3(2-\tilde{\tau}_1)\rho}{\epsilon}}\big).
\end{equation}
Then
\begin{equation}\label{A.22}
\begin{array}{rl}
\Big|\ds\frac{\beta}{2}\ds\int_{\R^3}\Big(|\tilde{U}_r|^2|\tilde{V}_\rho|^2
-\sum\limits_{j=1}^{2k}U_{1,x^j,\epsilon}^2U_{2,y^j,\epsilon}^2\Big)\Big|
&\leq C\sum\limits_{j=1}^{2k}\ds\int_{\R^3}U_{1,x^j,\epsilon}^2U_{2,y^j,\epsilon}^2\\[5mm]
&=\epsilon^3o_\epsilon(1)e^{-\frac{2\sqrt{(\rho-r\cos\frac{\pi}{2k})^2+(r\sin\frac{\pi}{2k})^2}}{\epsilon}}.
\end{array}
\end{equation}

From \eqref{A.17}--\eqref{A.22} and Proposition \ref{proA4}, we can easily have that
$$
\begin{array}{rl}
&I_\epsilon(\tilde{U}_r,\tilde{V}_\rho)=2k\epsilon^3\Big[\tilde{A}+
a\tilde{B}r^m+b\tilde{C}\rho^n+B_1e^{-\frac{2r\sin\frac{\pi}{2k}}{\epsilon}}
+B_2e^{-\frac{2\rho\sin\frac{\pi}{2k}}{\epsilon}}\\[5mm]
&\quad\quad\quad\quad\quad\quad\quad\quad+o_\epsilon(1)e^{-\frac{2\sqrt{(\rho-r\cos\frac{\pi}{2k})^2
+(r\sin\frac{\pi}{2k})^2}}{\epsilon}}+O\big(e^{-\frac{(1-\tilde{\tau}_1)(2-\tilde{\tau}_1)r}{\epsilon}}
+e^{-\frac{(1-\tilde{\tau}_1)(2-\tilde{\tau}_1)\rho}{\epsilon}}\\[5mm]
&\quad\quad\quad\quad\quad\quad\quad\quad+\rho^{n-1}\epsilon+r^{m-1}\epsilon
+e^{-\frac{(1+\sigma)2r\sin\frac{\pi}{2k}}{\epsilon}}+e^{-\frac{(1+\sigma)2\rho\sin\frac{\pi}{2k}}{\epsilon}}\big)\Big].
\end{array}
$$
This completes the proof.
\end{proof}


\begin{thebibliography}{777}
\bibitem{A}
C. Alves,
Local mountain pass for a class of elliptic system,
J. Math. Anal. Appl., 335 (2007), 135-150.

\bibitem{AA}  
A. Ankiewicz  and N. Akhmediev, Partially coherent solitons on a finite background, Phys. Rev.
Lett., 82 (1999) 2661-2664.
\bibitem{AC} 
A. Ambrosetti and E. Colorado, Bound and ground states of coupled nonlinear Schr\"{o}dinger equations,
C. R. Math. Acad. Sci. Paris, 342 (2006), 453-458.

\bibitem{AC1}
 A. Ambrosetti and E. Colorado, Standing waves of some coupled nonlinear Schr\"{o}dinger equations, J.
Lond. Math. Soc., 75 (2007), 67-82.

\bibitem{AR1}
A. Ambrosetti, G. Cerami and D. Ruiz, Solitons of linearly coupled systems of semilinear non-autonomous equations on Rn.J. Funct. Anal., 254 (2008), 2816-2845.
\bibitem{ACR}
 A. Ambrosetti, E. Colorado and D. Ruiz, Multi-bump solitons to linearly coupled systems of nonlinear
Schr\"{o}dinger equations, Calc. Var. Partial Differential Equations, 30 (2007), 85-112.

\bibitem{BDW}
T. Bartsch, N. Dancer and Z.Wang, A Liouville theorem, a-priori bounds, and bifurcating branches
of positive solutions for a nonlinear elliptic systemm, Calc. Var. Partial Differential Equations, 37
(2010), 345-361.

\bibitem{BW1}
T. Bartsch and M. Willem,
Infinitely many radial solutions of a semilinear elliptic problem on
$\R^N$, Arch. Rational Mech. Anal., 124 (1993), 261-276.
\bibitem{BW}

 T. Bartsch and Z.Wang, Note on ground states of nonlinear Schr\"{o}dinger systems, J. Partial Differential Equations,
19 (2006), 200-207.

\bibitem{BWW} 
T. Bartsch, Z.Wang and J. Wei, Bound states for a coupled Schr\"{o}dinger system, J. Fixed Point
Theory Appl., 2 (2007), 353-367.

\bibitem{EGBB}
J.Bohn, J.Bruke, B.Esry and C.Greene,  Hartree-fock theory for double condensates, Phys.Rev.Lett., 78(1997), 3594-3597.

\bibitem{MBGCW}
  E. Burt, E. Cornell, R. Ghrist, C.Myatt and  C.Wieman, Production of two overlapping
Bose-Einstein condensates by sympathetic cooling, Phys. Rev. Lett., 78 (1997), 586-589.

\bibitem{CNY} 
D. Cao, E.  Noussair and S. Yan,  Solutions  with multiple peaks for nonlinear elliptic equations, Proc.Roy.Soc.Edinburgh Sect. A, 129 (1999),
235-264.

\bibitem{CLLL}
 S. Chang, C.  Lin, T.  Lin and W. Lin,  Segregated nodal domains of two-dimensional multispecies
Bose-Einstein condensates, Phys. D, 196 (2004), 341-361.

\bibitem{CLW}
X. Chen, T.  Lin and J. Wei,
Blow up and solitary wave solutions with ring profiles of two-component nonlinear Schr\"{o}dinger systems,
Phys. D, 239 (2010), 613-626.

\bibitem{CTV} 
M. Conti, S. Terracini and  G. Verzini, Nehari's problem and competing species systems, Ann. Inst. H.
Poincar\'{e} Anal. Non Lin \'{e}aire, 19 (2002), 871-888.

\bibitem{HMEWC}
 E.Cornell, J.Ensher, D.Hall, R.Matthews and C.Wieman,  Dynamics of component separation in a binary miture of Bose-Einstein condensates, Phys. Rev. Lett., 81 (1998), 1539-1542.
\bibitem{DW}
 E. Dancer and  J. Wei, Spike solutions in coupled nonlinear Schr\"{o}dinger equations with attractive
interaction, Trans. Amer. Math. Soc., 361 (2009), 1189-1208.

\bibitem{DWW} 
E.   Dancer, J. Wei and  T. Weth, A priori bounds versus multiple existence of positive solutions for a
nonlinear Schr\"{o}dinger system, Ann. Inst. H. Poincar¡äe Anal. Non Lin\'{e}aire, 27 (2010), 953-969.

\bibitem{FL}
D. De Figueiredo and O. Lopes, Solitary waves for some nonlinear Schr\"{o}dinger systems, Ann. Inst.
H. Poincar¡äe Anal. Non Lin\'{e}aire, 25 (2008), 149-161.

\bibitem{GHZ}
 K. Dieckmann, S. Gupta, Z. Hadzibabic, E.
Kempen, W. Ketterle,  C.Schunck, C. Stan, B. Verhaar and  M. Zwierlein, Radio-frequency spectroscopy of ultracold fermions, Science,
300 (2003), 1723-1726.

\bibitem{KKL}
S. Kim, O. Kwon and Y.Lee,
Solutions with a prescribed number of zeros for nonlinear Schr\"{o}dinger systems,
Nonlinear Anal., 86 (2013), 74-88.

\bibitem{LW} 
T. Lin and J.Wei, Ground state of N coupled nonlinear Schr\"{o}dinger equations in $\R^n$;$ n \leq 3$, Commun.
Math. Phys., 255 (2005), 629-653.

\bibitem{LW1} 
T. Lin and J. Wei, Spikes in two coupled nonlinear Schr\"{o}dinger equations, Ann. Inst. H. Poincar¡äe
Anal. Non Lin\'{e}aire, 22 (2005), 403-439.

\bibitem{LW2}
T.Lin and J. Wei,
Spikes in two-component systems of nonlinear Schr\"{o}dinger equations with trapping potentials.
J. Differential Equations, 229 (2006),  538-569.

\bibitem{LW3} 
Z. Liu and Z. Wang, Multiple bound states of nonlinear Schr\"{o}dinger systems, Comm. Math. Phys.,
282 (2008), 721-731.

\bibitem{LW4}
 Z. Liu and Z. Wang, Ground states and bound states of a nonlinear Schr\"{o}dinger system, Advanced
Nonlinear Studies, 10 (2010), 175-193.

\bibitem{MMP} 
L. Maia,  E. Montefusco and  B. Pellacci, Positive solutions for weakly coupled nonlinear Schr\"{o}dinger
system, J. Differential Equations, 229 (2006), 743-767.

\bibitem{MP} 
E. Montefusco, B. Pellacci and  M. Squassina, Semiclassical states for weakly coupled nonlinear
Schr\"{o}dinger systems, J. Eur. Math. Soc., 10 (2008),  47-71.

\bibitem{NTTV}
 B. Noris, H. Tavares, S. Terracini and G. Verzini, Uniform H\"{o}lder bounds for nonlinear Schr\"{o}dinger
systems with strong competition, Comm. Pure Appl. Math., 63 (2010), 267-302.

\bibitem{PW}
 S. Peng and Z. Wang, Segregated and
synchronized vector solutions for nonlinear Schr\"odinger systems,
Arch. Rat. Mech. Anal., 208 (2013), 305-339.

\bibitem{PWc}
 H. Pi and C. Wang,
Multi-bump solutions for nonlinear Schr\"{o}dinger equations with electromagnetic fields,
ESAIM Control Optim. Calc. Var., 19 (2013), 91-111.

\bibitem{P} 
A. Pomponio, Coupled nonlinear Schr\"{o}dinger systems with potentials, J. Differential Equations, 227
(2006), 258-281.

\bibitem{R}
 O. Rey, The role of the Green's function in a nonlinear elliptic equation involving the critical Sobolev exponent, J. Funct. Anal.,
89(1990), 1-52.

\bibitem{S} B. Sirakov, Least energy solitary waves for a system of nonlinear Schr\"{o}dinger equations in $\R^N$, Comm.
Math. Phys., 271 (2007), 199-221.

\bibitem{TV} 
S. Terracini and G. Verzini, Multipulse phase in k-mixtures of Bose-Einstein condenstates, Arch. Ration.
Mech. Anal., 194 (2009), 717-741.

\bibitem{T} 
E. Timmermans, Phase seperation of Bose-Einstein condensates, Phys. Rev. Lett., 81 (1998), 5718-5721.

\bibitem{W}
Y. Wan,
Multiple solutions and their limiting behavior of coupled nonlinear Schr\"{o}dinger systems,
Acta Math. Sci. Ser. B Engl. Ed., 30 (2010), 1199-1218.

\bibitem{W1}
Y. Wan, A.\'{a}vila,
Multiple solutions of a coupled nonlinear Schr\"{o}dinger system,
J. Math. Anal. Appl., 334 (2007), 1308-1325.


\bibitem{WW}
J. Wei and T. Weth, On the number of nodal solutions to a singularly
perturbed Neumann problem, Manuscripta Math., 117 (2005), 333-344.


\end{thebibliography}
\end{document}